\newtheorem{lemma}{Lemma}
\newtheorem{corollary}{Corollary}
\newtheorem{theorem}{Theorem}
\newtheorem{proposition}{Proposition}
\newcounter{example}
\newenvironment{example}{\refstepcounter{example}\par\bigskip
\noindent\textit{Example~\theexample.} \rmfamily}{\hfill$\dashv$\bigskip}
\newcommand{\supp}{{\mathrm{Supp}}}
\newcommand{\tuples}{{\mathrm{Tup}}}
\newcommand{\domain}{{\mathrm{Dom}}}
\newcommand{\Entropy}{{\mathrm{H}}}
\newcommand{\KL}{{\mathrm{D}}}
\newcommand{\ltgc}{local-to-global consistency property}
\newcommand{\commentout}[1]{}
\begin{document}

\title{{\bf Consistency, Acyclicity, and Positive Semirings}}
\author[1]{Albert Atserias}
\author[2] {Phokion G.\ Kolaitis}
\affil[1]{Universitat Polit\`{e}cnica de Catalunya}
\affil [2]{University of California Santa Cruz and IBM Research}
\date{September 20, 2020}

\maketitle

\begin{abstract}
In several different settings, one comes across situations in which
the objects of study are locally consistent but globally
inconsistent. Earlier work about probability distributions by Vorob'ev
(1962) and about database relations by Beeri, Fagin, Maier, Yannakakis
(1983) produced characterizations of when local consistency always
implies global consistency. Towards a common generalization of these
results, we consider K-relations, that is, relations over a set of
attributes such that each tuple in the relation is associated with an
element from an arbitrary, but fixed, positive semiring K. We
introduce the notions of projection of a K-relation, consistency of
two K-relations, and global consistency of a collection of
K-relations; these notions are natural extensions of the corresponding
notions about probability distributions and database relations. We
then show that a collection of sets of attributes has the property
that every pairwise consistent collection of K-relations over those
attributes is globally consistent if and only if the sets of
attributes form an acyclic hypergraph. This generalizes the
aforementioned results by Vorob'ev and by Beeri et al., and
demonstrates that K-relations over positive semirings constitute a
natural framework for the study of the interplay between local and
global consistency. In the course of the proof, we introduce a notion
of join of two K-relations and argue that it is the ``right''
generalization of the join of two database relations. Furthermore, to
show that non-acyclic hypergraphs yield pairwise consistent
K-relations that are globally inconsistent, we generalize a
construction by Tseitin (1968) in his study of hard-to-prove
tautologies in propositional logic.
\end{abstract}

\section{Introduction}

There are many situations, spanning art and science, in which the
objects under consideration are locally consistent but globally
inconsistent, where the terms ``local", ``global", and ``consistent"
are used in some intuitive sense but can be made precise in each
concrete setting.  In art, Escher's~1960 \emph{Ascending and
  Descending} and~1961 \emph{Waterfall} lithographs are striking
depictions of locally consistent but globally inconsistent
situations. Closely related to Escher's artwork is the work by
L.S.~Penrose and R.~Penrose~\cite{penrose1958impossible} on impossible
objects, such as the impossible tribar (see
also~\cite{francis2007impossible}). In quantum mechanics, the
interplay between local consistency and global inconsistency takes the
form of non-locality and contextuality phenomena, where collections of
empirical local measurements may not admit a global explanation via a
hidden variable; prominent results in this area include Bell's
Theorem~\cite{bell1964einstein} and Hardy's
paradox~\cite{hardy1992quantum}. In probability theory, there is work
on when a given collection of pairwise consistent probability
distributions admits a global distribution whose marginal
distributions coincide with the given
collection~\cite{vorob1962consistent}. In computer science, the
interplay between local consistency and global consistency arises in
such different areas as constraint
satisfaction~\cite{DBLP:books/daglib/0016622}, proof
complexity~\cite{ChvatalSzemeredi1988}, and relational
databases~\cite{BeeriFaginMaierYannakakis1983}.

What do the aforementioned situations have in common and is there a
unifying framework behind them?
Abramsky~\cite{DBLP:conf/birthday/Abramsky13,DBLP:journals/eatcs/Abramsky14}
pointed out that there are formal connections between non-locality and
contextuality in quantum mechanics on one side and the universal
relation problem in database theory on the other side.  The latter is
the following decision problem: given a collection~$X_1,\ldots,X_m$ of
sets of attributes (that is, names of columns of relations) and a
collection~$R_1,\ldots,R_m$ of relations over~$X_1,\ldots,X_m$ (that
is,~$X_i$ is the set of the attributes of~$R_i$, for~$i\in [m]$), are
the relations~$R_1,\ldots,R_m$ globally consistent? In other words, is
there a relation~$R$, called a \emph{universal} relation,
over~$X_1\cup \cdots \cup X_m$ such that, for every~$i \in [m]$, the
projection~$R[X_i]$ of~$R$ on~$X_i$ is equal to~$R_i$?  Clearly, if
such a universal relation exists, then the relations~$R_1,\ldots,R_m$
are \emph{pairwise consistent},
i.e.,~$R_i[X_i\cap X_j] = R_j[X_i\cap X_j]$, for all~$i,j \in [m]$,
but the converse need not hold.  Switching to the quantum mechanics
side and by regarding the collection of empirical measurements
in~\cite{hardy1992quantum} as a collection of database relations,
Hardy's paradox can be viewed as a negative instance of the universal
relation problem: the database relations at hand are pairwise
consistent, but globally inconsistent.  Note that, since experiments
are typically repeated, measurements give rise to probabilities. This
way, Bell's Theorem~\cite{bell1964einstein} can be viewed as an
instance of a collection of probability distributions that are
pairwise consistent, but globally inconsistent.  As regards unifying
frameworks, Abramsky and
Brandenburger~\cite{DBLP:journals/corr/abs-1102-0264} used sheaf
theory to provide a unified account of non-locality and
contextuality. This approach was explored further
in~\cite{DBLP:journals/corr/abs-1111-3620,DBLP:conf/csl/AbramskyBKLM15}.

As mentioned in the preceding paragraph, pairwise consistency is a
necessary, but not sufficient, condition for a collection of relations
to be globally consistent.  In the setting of relational databases,
Beeri, Fagin, Maier, and
Yannakakis~\cite{BeeriFaginMaierYannakakis1983} characterized when
pairwise consistency is also a sufficient condition for global
consistency.  If~$X_1,\ldots,X_m$ are sets of attributes, we say that
the collection~$X_1,\ldots,X_m$ has the \emph{\ltgc}~if every
collection~$R_1,\ldots,R_m$ of pairwise consistent relations
over~$X_1,\ldots,X_m$ is globally consistent. The main finding in
Beeri et al.~\cite{BeeriFaginMaierYannakakis1983} is that a
collection~$X_1,\ldots,X_m$ of sets of attributes has the \ltgc~if and
only if the hypergraph with~$X_1,\ldots,X_m$ as hyperedges is acyclic,
where the notion of hypergraph acyclicity is a suitable generalization
of the notion of graph acyclicity. Observe that the \ltgc~is a
semantic property (in the sense that its definition involves relations
over the sets of attributes), while acyclicity is a syntactic property
(in the sense that it describes a structural property of hypergraphs
with no reference to
relations). In~\cite{BeeriFaginMaierYannakakis1983}, several other
syntactic conditions on hypergraphs were considered, and each was
shown to be equivalent to acyclicity. In the setting of probability
theory, Vorob'ev~\cite{vorob1962consistent} identified a different
syntactic condition on hypergraphs, which we call \emph{Vorob'ev
  regularity}, and showed that a collection of probability
distributions over~$X_1,\ldots,X_m$ has the \ltgc~(suitably adapted to
probability distributions) if and only the hypergraph
with~$X_1,\ldots,X_m$ as hyperedges is Vorob'ev regular. It is perhaps
worth noting that Vorob'ev's paper~\cite{vorob1962consistent} was
published much earlier, but Beeri et
al.~\cite{BeeriFaginMaierYannakakis1983} were apparently unaware of
Vorob'ev's work.  It is now natural to ask: is there a common
generalization of the above results? This question was investigated by
Barbosa in his doctoral thesis~\cite[Chapter
VI]{Barbosa-thesis}. Barbosa explored the question in the
sheaf-theoretic framework for non-locality and contextuality and showed
that hypergraph acyclicity implies the \ltgc~in that framework, but
did not obtain the reverse direction.

We establish a common generalization of the results by
Vorob'ev~\cite{vorob1962consistent} and by Beeri et
al.~\cite{BeeriFaginMaierYannakakis1983}. Instead of the
sheaf-theoretic framework, we work in the algebraic framework of
\emph{positive semirings}, which are commutative semirings with no
zero-divisors and with the property that~$a+b=0$ holds for two
elements~$a$ and~$b$ of the semiring if and only if~$a=b=0$. Positive
semirings were used to study the provenance of relational database
queries~\cite{DBLP:conf/pods/GreenKT07} and also the provenance of
first-order sentences~\cite{DBLP:journals/corr/abs-1712-01980};
furthermore, commutative semirings were considered by
Abramsky~\cite{DBLP:conf/birthday/Abramsky13} in discussing algebraic
databases as a generalization of relational databases.

Let~$K$ be a positive semiring. As a common generalization of database
relations and probability distributions, we
consider~$K$-\emph{relations}, i.e., relations over a set of
attributes such that each tuple in the relation has an associated
element from~$K$ as value. Note that ordinary relations
are~$K$-relations where~$K$ is the Boolean semiring, while probability
distributions are~$K$-relations with~$K$-values adding to~$1$ and
where~$K$ is the semiring of the non-negative real numbers. We
introduce natural extensions of the notions of projection of
a~$K$-relation, pairwise consistency, global consistency, and the
\ltgc~for~$K$-relations.  We then show that a
collection~$X_1,\ldots,X_m$ of sets of attributes has the
\ltgc~for~$K$-relations if and only if the hypergraph
with~$X_1,\ldots,X_m$ as hyperedges is acyclic. We also show that a
hypergraph is Vorob'ev regular if and only if it is acyclic (this
result has been mentioned in passing or has been taken for granted in
earlier papers, but we have not found an explicit reference for
it). The results by Vorob'ev~\cite{vorob1962consistent} and by Beeri
et al.~\cite{BeeriFaginMaierYannakakis1983} then follow as immediate
corollaries.

While the proof of our main result about the equivalence between
hypergraph acyclicity and the \ltgc~for~$K$-relations bears some
similarities and analogies with the earlier proofs of its special
cases, it also brings in some new concepts and tools that may be of
independent interest. We conclude this section by highlighting some of
these concepts and tools.

To prove that hypergraph acyclicity implies the local-to-global
consistency property for~$K$-relations, we introduce a \emph{join}
operation on~$K$-relations. We make the case that this is the
``right'' extension to~$K$-relations of the notion of the join of two
ordinary relations. In particular, we show that the join of two
consistent~$K$-relations witnesses their consistency and also that the
basic results about lossless-join decompositions of ordinary relations
extend to~$K$-relations. Note that if~$K$ is the semiring of
non-negative integers, then the~$K$-relations are precisely the
\emph{bags} (also known as \emph{multisets}). Our join operation on
bags is, in general, different from the standard bag join used in SQL
(for bag operations in SQL, see~\cite{DBLP:books/daglib/0011318}). We
point out, however, that unlike the join operation introduced here,
the standard bag join does not always witness the consistency of two
consistent bags. Furthermore, we show that the join of two consistent
probability distributions is the unique probability distribution that
maximizes entropy among all probability distributions that witness the
consistency of the two probability distributions we started with.

To prove that the \ltgc~for~$K$-relations implies hypergraph
acyclicity, we need to have a systematic way to produce negative
instances of the universal relation problem, such as the instances
found in Hardy's paradox and related constructions in the study of
non-locality and contextuality. Note that, in our setting, we need the
relations in the negative instances to be~$K$-relations where~$K$ is
an \emph{arbitrary} positive semiring, instead of ordinary relations
over the Boolean semiring or probability distributions over the
semiring of nonnegative real numbers; furthermore, we need to be able
to produce such negative instance for \emph{any} given cyclic
collection~$X_1,\ldots,X_m$ of sets of attributes.  For the special
cases of ordinary relations and probability distributions, Beeri et
al.~\cite{BeeriFaginMaierYannakakis1983} and
Vorob'ev~\cite{vorob1962consistent} provided suitable such
constructions, which, as far as we can tell, do not generalize to
arbitrary positive semirings. For our construction, which works for an
arbitrary positive semiring, we adapt an idea that can be traced to
Tseitin~\cite{Tseitin1968} in his study of hard-to-prove tautologies
in propositional logic. In brief, Tseitin constructed arbitrarily
large sets of propositional clauses such that any fixed number of them
are satisfiable, but, when taken jointly, they are unsatisfiable.  The
combinatorial principle underlying Tseitin's construction is the
following basic~\emph{parity principle}: for every undirected graph
and for every labeling of the vertices of the graph with 0's and 1's
with an odd total number of 1's, there is no subset of the edges that
touches every vertex a number of times that is congruent to the label
of the vertex modulo~$2$. To generalize this to arbitrary cyclic
hypergraphs and to arbitrary semirings, we resort to a similar modular
counting principle for a modulus~$d \geq 2$ that depends on the
structure of the hyperedges~$X_1,\ldots,X_m$. While similar but
different variations of Tseitin's construction have been used in other
contexts (see, e.g.,~\cite{DBLP:journals/jcss/BussGIP01}
and~\cite{DBLP:journals/tcs/AtseriasBD09}), we are not aware of any
other construction that simultaneously generalizes the results in
Beeri et al.~\cite{BeeriFaginMaierYannakakis1983} and
Vorob'ev~\cite{vorob1962consistent}.  Furthermore, it is worth noting
that our construction contains as a special case the most
basic~Popescu-Rorhlich~box~\cite{popescu1994quantum}, which is another
well-known example of non-locality and contextuality (see,
e.g.,~\cite{DBLP:journals/corr/abs-1102-0264}). Specifically, the
support of the Popescu-Rorhlich~box is precisely the special case of
our construction in which the hypergraph~$X_1,\ldots,X_m$ is the
4-cycle~$AB,BC,CD,DA$ on the four vertices~$A,B,C,D$.

\section{Valued Relations up to Normalization} \label{sec:valuedrelations}

In this section we define the notion of \emph{valued relation},
or~\emph{$K$-relation} for a positive semiring~$K$ of values, as a
generalization of the database-theoretic notion of relation. We study
its most basic properties and discuss some examples. Besides the
standard concept of ordinary relation from database theory, two other
canonical examples will be the \emph{bags} and the
\emph{probability distributions}.

\subsection{Definition of Valued Relations and Their Basic Properties}

We start by recalling some basic terminology and notation from the
theory of databases. While most of our notation is standard and
well-established, we refer to the standard
textbooks~\cite{DBLP:books/cs/Ullman88}
and~\cite{DBLP:books/aw/AbiteboulHV95} for further elaboration.

\paragraph{Attributes, Tuples, and Relations}
An \emph{attribute}~$A$ is a symbol with an associated
set~$\domain(A)$ called its \emph{domain}. If~$X$ is a finite set of
attributes, then we write~$\tuples(X)$ for the set
of~\emph{$X$-tuples}; i.e.,~$\tuples(X)$ is the set of maps that take
each attribute~$A \in X$ to an element of its
domain~$\domain(A)$. Note that~$\tuples(\emptyset)$ is non-empty as it
contains the \emph{empty tuple}, i.e., the unique map with empty
domain. If~$Y \subseteq X$ is a subset of attributes and~$t$ is
an~$X$-tuple, then the \emph{projection of~$t$ on~$Y$}, denoted
by~$t[Y]$, is the unique~$Y$-tuple that agrees with~$t$ on~$Y$. In
particular,~$t[\emptyset]$ is the empty tuple.

 A \emph{relation
  over~$X$} is a subset of~$\tuples(X)$; it is a finite relation if it
is a finite subset of~$\tuples(X)$. In what follows, we will often refer to such relations
as \emph{ordinary} relations to differentiate them from $K$-relations, where $K$ is a positive semiring other than the Boolean semiring.
We write~$R(X)$ to emphasize the
fact that the relation~$R$ has \emph{schema}~$X$. In this paper all
sets of attributes and all relations are finite, so we omit the term.
If~$Y \subseteq X$ and~$R$ is a relation over~$X$, then the
\emph{projection of~$R$ on~$Y$}, denoted $R[Y]$, is the relation over~$Y$ made of all
the projections~$t[Y]$ as~$t$ ranges over~$R$.
 If~$R$ is a relation over~$X$
and~$S$ is a relation over~$Y$, then their \emph{join}~$R \Join S$ is
the relation over~$X \cup Y$ made of all the~$X \cup Y$-tuples~$t$
such that~$t[X]$ is in~$R$ and~$t[Y]$ is in~$S$.

If~$X$ and~$Y$ are sets of attributes, then we write~$X\!Y$ as
shorthand for the union~$X \cup Y$. Accordingly, if~$x$ is
an~$X$-tuple and~$y$ is a~$Y$-tuple with the property
that~$x[X \cap Y] = y[X \cap Y]$, then we write~$xy$ to denote
the~$X\!Y$-tuple that agrees with~$x$ on~$X$ and on~$y$ on~$Y$.  We
say that~\emph{$x$ joins with~$y$}, and that~\emph{$y$ joins
  with~$x$}, to \emph{produce} the tuple~$xy$.

\paragraph{Positive Semirings} A \emph{commutative semiring} is a
set~$K$ with two binary operations~$+$ and~$\times$ that are
commutative, associative, have $0$ and $1$, respectively, as identity elements,~$\times$ distributes over~$+$, and~$0$
\emph{annihilates}~$K$, that is,~$0 \times a = a \times 0 = 0$ holds for
all~$a \in K$. We assume that~$0 \not= 1$, that is, the semiring is
\emph{non-trivial}. The identity of multiplication~$1$ is also called
the \emph{unit} of the semiring. We write multiplication~$a \times b$
by concatenation~$ab$ or with a dot~$a \cdot b$. If there do not exist
non-zero~$a$ and~$b$ in~$K$ such that~$a+b=0$, then we say that~$K$ is
\emph{plus-positive}. If there do not exist non-zero~$a$ and~$b$
in~$K$ such that~$ab = 0$, then we say that~\emph{$K$ has no
  zero-divisors}, or that~$K$ is a semiring \emph{without
  zero-divisors}. A plus-positive commutative semiring without
zero-divisors is called \emph{positive}. In the sequel,~$K$ will always
denote a non-trivial positive commutative semiring.

We introduce some examples.  The \emph{Boolean
  semiring}~$\mathbb{B} = (\{0,1\},\vee,\wedge,0,1)$ has $0$ (false)
and $1$ (true) as elements, and disjunction ($\vee$) and conjunction
($\wedge$) as operations. This is a commutative semiring that is
plus-positive and has no zero-divisors, hence it is  positive; it is not a
ring since disjunction does not have an inverse. The non-negative
integers~$\mathbb{Z}^{\geq 0}$, the non-negative
rationals~$\mathbb{Q}^{\geq 0}$, and the non-negative
reals~$\mathbb{R}^{\geq 0}$ with their usual arithmetic operations~$+$
and~$\times$ and their identity elements~$0$ and~$1$ are also positive
semirings. In contrast, the full integers~$\mathbb{Z}$, the
rationals~$\mathbb{Q}$, or the reals~$\mathbb{R}$ are commutative
semirings without zero-divisors that are not plus-positive. The
semiring of non-negative integers~$\mathbb{Z}^{\geq 0}$ is also
denoted by~$\mathbb{N}$, and it is called the \emph{bag semiring}.
For an integer~$m \geq 2$, the semiring~$\mathbb{Z}_m$ of arithmetic
mod~$m$, also denoted by~$\mathbb{Z}/m\mathbb{Z}$, is a commutative
semiring that is not plus-positive, and that has no zero-divisors if
and only if~$m$ is prime;~$\mathbb{Z}_1$ is not even non-trivial.

Under the convention that~$0 < 1$, the disjunction~$\vee$ and
conjunction~$\wedge$ operations of the Boolean semiring can also be
written as~$\max$ and~$\min$, respectively. Semirings over arithmetic
ordered domains that combine the~$\max$ or~$\min$ operations with the
usual arithmetic operations are called \emph{tropical} semirings. The
\emph{min-plus} semiring has the extended
reals~$\mathbb{R} \cup \{ +\infty,-\infty \}$ as elements, and the
standard operations of minimum and addition for~$+$ and~$\times$,
with~$-\infty$ playing the role of the identity for~$\min$. The
\emph{positive min-plus} semiring  has the extended positive
reals~$\mathbb{R}^{\geq 0} \cup \{+\infty\}$ as elements, and again
standard minimum and addition for~$+$ and~$\times$, with~$0$ playing
the role of identity for~$\min$.  The \emph{Viterbi} semiring elements ranging over the
unit interval~$[0,1]$, and the standard operations of
maximum and multiplication for~$+$ and~$\times$, respectively. The
\emph{rational tropical} semirings are based on the extended rational numbers
in place of the extended real numbers.

\paragraph{Definition of~$K$-relations and Their Marginals}
Let~$K = (K^*,+,\times,0,1)$ be a semiring and let~$X$ be a finite set
of attributes. A~\emph{$K$-relation over~$X$} is a
map~$R : \tuples(X) \rightarrow K$ that assigns a value~$R(t)$ in~$K$
to every~$X$-tuple~$t$ in~$\tuples(X)$. Note that this definition
makes sense even if~$X$ is the empty set of attributes; in such a
case, a~$K$-relation over~$X$ is simply a single value from~$K$ that
is assigned to the empty tuple. Note also that the ordinary relation are precisely the $\mathbb B$-relation, where $\mathbb B$~is the Boolean semiring.

The \emph{support} of
the~$K$-relation~$R$, denoted by~$\supp(R)$, is the set
of~$X$-tuples~$t$ that are assigned non-zero value, i.e.,
\begin{equation}
\supp(R) := \{ t \in \tuples(X) : R(t) \not= 0 \}. \label{def:support}
\end{equation}
Whenever this does not lead to confusion, we write~$R'$ to
denote~$\supp(R)$. Note that~$R'$ is an ordinary relation
over~$X$. A~$K$-relation is \emph{finitely supported} if its support is a
finite set. In this paper, all~$K$-relations are finitely supported
and we omit the term. When~$R'$ is empty we say that~$R$ is the
empty~$K$-relation over~$X$.  For~$a \in K$, we write~$aR$ to denote
the~$K$-relation over~$X$ defined by~$(aR)(t) = aR(t)$ for
every~$X$-tuple~$t$. It is always the case
that~$\supp(aR) \subseteq \supp(R)$, and, as the proof of the next lemma shows, the
reverse inclusion~$\supp(R) \subseteq \supp(aR)$ also holds in
case~$a$ is a non-zero element of $K$ and~$K$ has no zero-divisors.
If~$t$ is a~$Y$-tuple for some~$Y \subseteq X$, then the
\emph{marginal of~$R$ over~$t$} is defined by
\begin{equation}
R(t) := \sum_{r \in R': \atop r[Y] = t} R(r). \label{eqn:marginal}
\end{equation}
Accordingly, any~$K$-relation over~$X$ induces a~$K$-relation over~$Y$
for any~$Y \subseteq X$. This~$K$-relation is denoted
by~$R[Y]$  and is called the \emph{marginal of~$R$
  on~$Y$}. Note that if $R$ is an ordinary relation (i.e., $R$ is a $\mathbb B$-relation), then the marginal $R[Y]$ is the projection of $R$ on $Y$, so the notation for the marginal is consistent with the one introduced for the projection earlier.
  It is always the case
that~$\supp(R[Y]) \subseteq \supp(R)[Y]$, and, as the proof of the next lemma shows, the
reverse inclusion also holds in case the semiring~$K$ is plus-positive.

From now on, we make the blanket assumption that $K$ is a positive semiring.  This hypothesis will not be explicitly spelled out in the statements of the various lemmas in which $K$-relations are mentioned.

\begin{lemma} \label{lem:easyfacts1} Let $R(X)$ be a~$K$-relation. The
  following statements hold:
  \begin{enumerate} \itemsep=0pt
  \item For all non-zero elements $a$ in $K$, we have $(aR)' = R'$.
  \item For all~$Y \subseteq X$, we have~$R'[Y] = R[Y]'$.
  \item For all $Z \subseteq Y \subseteq X$, we have $R[Y][Z] = R[Z]$.
\end{enumerate}
\end{lemma}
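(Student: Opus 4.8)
All three statements are ``easy facts'' that should follow directly from the definitions of support, marginal, and the positivity hypotheses on $K$; the proof is a matter of unwinding definitions and applying the no-zero-divisor and plus-positive properties at the right moments. I would prove the three parts in order, as they build on one another.

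For part~(1), fix a non-zero $a \in K$. The inclusion $(aR)' \subseteq R'$ is immediate: if $(aR)(t) = a R(t) \neq 0$ then certainly $R(t) \neq 0$, since $a \cdot 0 = 0$. For the reverse inclusion, suppose $t \in R'$, i.e.\ $R(t) \neq 0$. Since $a \neq 0$ and $R(t) \neq 0$ and $K$ has no zero-divisors, we get $a R(t) \neq 0$, so $t \in (aR)'$. This establishes $(aR)' = R'$ and, as promised in the text before the lemma, also shows $\supp(R) \subseteq \supp(aR)$ for non-zero $a$.

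For part~(2), I would show the two inclusions $R[Y]' \subseteq R'[Y]$ and $R'[Y] \subseteq R[Y]'$ separately. For the first: if $t \in R[Y]'$ then $R[Y](t) = \sum_{r \in R' : r[Y] = t} R(r) \neq 0$, so the sum is over a non-empty index set, hence there is at least one $r \in R'$ with $r[Y] = t$, which means $t = r[Y] \in R'[Y]$. For the reverse inclusion, which is where plus-positivity is used: suppose $t \in R'[Y]$, so there is some $r \in R'$ with $r[Y] = t$; thus the sum defining $R[Y](t)$ contains the term $R(r) \neq 0$. The sum is a finite sum of elements of $K$, at least one of which is non-zero. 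By iterating plus-positivity (a finite sum of semiring elements is zero only if every summand is zero, proved by induction on the number of summands using the two-summand case), the whole sum $R[Y](t)$ is non-zero, so $t \in R[Y]'$. This is the only place in the lemma where a genuine structural hypothesis beyond bare semiring axioms is invoked, and the small induction to lift two-summand plus-positivity to arbitrary finite sums is the one non-trivial point worth spelling out.

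For part~(3), with $Z \subseteq Y \subseteq X$, I would compute $R[Y][Z](t)$ for a $Z$-tuple $t$ directly from the definition of marginal and rearrange a double sum into a single sum. Using part~(2) to identify the support of $R[Y]$ with $R'[Y]$, we have $R[Y][Z](t) = \sum_{s \in R'[Y] : s[Z] = t} R[Y](s) = \sum_{s \in R'[Y] : s[Z] = t} \ \sum_{r \in R' : r[Y] = s} R(r)$. Every $r \in R'$ with $r[Z] = t$ appears in exactly one inner sum, namely the one indexed by $s = r[Y]$ (which satisfies $s[Z] = r[Y][Z] = r[Z] = t$ since these are ordinary-tuple projections and composition of projections is obvious), and conversely every pair $(s,r)$ occurring in the double sum has $r \in R'$ and $r[Z] = t$. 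Hence the double sum collapses to $\sum_{r \in R' : r[Z] = t} R(r) = R[Z](t)$, using finiteness of $R'$ and associativity/commutativity of $+$ to justify the regrouping. The only mild subtlety is bookkeeping with the index sets of the nested sums, and confirming that the grouping is exactly a partition of $\{ r \in R' : r[Z] = t\}$ according to the value of $r[Y]$; once that is noted, the identity is immediate. I do not expect any real obstacle here — the main ``work'' is being careful that supports are handled correctly via part~(2) so that the sums range over the intended tuples.
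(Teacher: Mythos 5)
Your proposal is correct and follows essentially the same route as the paper's proof: part (1) via annihilation and no zero-divisors, part (2) via plus-positivity (the paper leaves the finite-sum induction implicit, which you rightly spell out), and part (3) by using part (2) to rewrite the support of $R[Y]$ and then collapsing the double sum by partitioning $R'$ according to the projection on $Y$. No gaps.
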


\begin{proof}
  For~1, the inclusion~$(aR)' \subseteq R'$ holds for all semirings
  since if~$t \in (aR)'$, then~$aR(t) \not= 0$, so~$R(t) \not= 0$
  since~$0$ annihilates~$K$, and hence~$t \in R'$. For the converse,
  if~$t \in R'$, then~$R(t) \not = 0$, so~$aR(t) \not= 0$ since~$a$ is
  non-zero and~$K$ has no zero-divisors, and hence~$t \in (aR)'$.
  For~2, the inclusion~$R[Y]' \subseteq R'[Y]$ is obvious and holds
  for all semirings. For the converse, assume that~$t \in R'[Y]$, so
  there exists~$r$ such that~$R(r) \not= 0$ and~$r[Y] =
  t$. By~\eqref{eqn:marginal} and the plus-positivity of~$K$ we have
  that~$R(t) \not= 0$. Hence~$t \in R[Y]'$.  For~3, we have
  \begin{equation}
  R[Y][Z](u) = \sum_{v \in R[Y]': \atop v[Z]=u} R[Y](v) =
  \sum_{v \in R'[Y]: \atop v[Z]=u} \sum_{w \in R': \atop w[Y]=v} R(w) =
  \sum_{w \in R': \atop w[Z]=u} R(w) = R[Z](u)
\end{equation}
where the first equality follows from~\eqref{eqn:marginal}, the second
follows from Part~2 of this lemma to replace~$R[Y]'$ by~$R'[Y]$, and
again~\eqref{eqn:marginal}, the third follows from partitioning the
tuples in~$R'$ by their projection on~$Y$, together
with~$Z \subseteq Y$, and the fourth follows from~\eqref{eqn:marginal}
again.
\end{proof}

Some examples follow.

\begin{example} \label{ex:examplessemirings}
When~$K$ is the Boolean semiring~$\mathbb{B}$,
a~$\mathbb{B}$-relation over~$X$ is simply an ordinary relation
over~$X$; its support is the relation itself, and its marginals are
the ordinary projections. When~$K$ is the bag semiring~$\mathbb{N}$,
the~$\mathbb{N}$-relations are called \emph{bags} or
\emph{multi-sets}. If~$T$ is a bag and~$t$ is a tuple in its support,
then~$T(t)$ is called the \emph{multiplicity} of~$t$ in~$T$. When~$K$
is the semiring of non-negative reals~$\mathbb{R}^{\geq 0}$, the
finite~$\mathbb{R}^{\geq 0}$-relations~$T$ that satisfy
\begin{equation}
  T[\emptyset] = \sum_{t \in T'} T(t) = 1 \label{eqn:normalizationequation}
\end{equation}
are the probability distributions of finite support over the
set~$\tuples(X)$ of~$X$-tuples, or, in short, the \emph{probability
  distributions over~$X$}. Conversely, to every finite
non-empty~$\mathbb{R}^{\geq 0}$-relation~$T$ one can associate a probability
distribution~$T^*$ through \emph{normalization}; this means that if we
set~$N_T := \sum_{t \in T'} T(t)$ and~$n_T = 1/N_T$, then the
$\mathbb{R}^{\geq 0}$-relation~$T^* := n_T T$ is a probability distribution. Finally, when~$K$ is the
semiring of non-negative rationals~$\mathbb{Q}^{\geq 0}$, the
corresponding probability distributions are called \emph{rational} probability distributions.
\end{example}

\subsection{Equivalence of~$K$-Relations}

We introduce a notion of equivalence between two~$K$-relations over
the same set of attributes that will play an important role in the
later sections of this paper. To motivate this definition,
let us look again at the probability distributions seen
as the~$\mathbb{R}^{\geq 0}$-relations that satisfy the normalization
equation~\eqref{eqn:normalizationequation} from
Example~\ref{ex:examplessemirings}.

\paragraph{Derivation of the Equivalence Relation}
Recall that to every non-empty~$\mathbb{R}^{\geq 0}$-relation~$T$ one
can associate a probability distribution~$T^*$ through
normalization~$T \mapsto T^*$. More generally, a normalization
operation can be defined for any semiring~$K$ that is actually a
\emph{semifield}, which is a semiring whose multiplication
operation~$\times$ admits an inverse~$\div$. Note
that both~$\mathbb{R}^{\geq 0}$ and~$\mathbb{Q}^{\geq 0}$ are
semifields. Formally, if~$K$ is a semifield and~$T$ is a
non-empty~$K$-relation over a set of attributes~$X$, then we
define~$T^*$ as the~$K$-relation defined by~$T^*(t) := (1/N_T) T(t)$
for every~$X$-tuple~$t$,
where~$N_T := T[\emptyset] = \sum_{t \in T'} T(t)$, and~$1/N_T$ is the
multiplicative inverse of~$N_T$ in the semifield~$K$. Note that~$T$
was assumed non-empty, so~$N_T \not= 0$ since~$K$ is plus-positive,
and the multiplicative inverse~$1/N_T$ exists. When~$T$ is the
empty~$K$-relation, we let~$T^*$ be the empty~$K$-relation itself.

With this definition in hand, still assuming that~$K$ is a semifield,
we can define an equivalence relation~$R \equiv S$ to hold between
two~$K$-relations~$R$ and~$S$ if and only if~$R^* = S^*$. An important
observation that follows from the definitions is that if~$R$ and~$S$
are~$K$-relations over the same set of attributes, then~$R^*=S^*$
holds if and only if~$aR = bS$ for some non-zero~$a$ and~$b$
in~$K$. For the \emph{only if} direction just take~$a = 1/N_R$
and~$b = 1/N_S$ if both~$R$ and~$S$ are non-empty~$K$-relations,
and~$a = b = 1$ otherwise. For the \emph{if} direction, assuming
that~$R$ and~$S$ are both non-empty~$K$-relations over~$X$, for
every~$X$-tuple~$t$ we have
  \begin{equation}
  R^*(t) = \frac{1}{N_R} R(t) = \frac{a}{aN_R} R(t)
  = \frac{b}{bN_S} S(t) =
  \frac{1}{N_S} S(t) = S^*(t),
\end{equation}
where the first equality follows from the fact that~$R^*$ is defined
from~$R$ through normalization, the second follows from the assumption
that~$a$ is non-zero, the third follows from the assumption
that~$aR = bS$, so, in particular,~$aR(t) = bS(t)$ and also~$R'=S'$
and~$a N_R = b N_S$, the fourth follows from the assumption that~$b$ is
non-zero, and the last follows from the definition of~$S^*$ through
normalization. This observation motivates the following definition of
the equivalence relation~$\equiv$ for arbitrary positive semirings
that are not necessarily semifields.

\paragraph{Definition of the Equivalence Relation for Positive Semirings}
Let~$K$ be a positive semiring. Two~$K$-relations~$R$ and~$S$ over the
same set of attributes are \emph{equivalent up to normalization},
denoted by~$R \equiv S$, if there exist non-zero~$a$ and~$b$ in~$K$
such that~$aR = bS$. It is obvious that~$\equiv$ is reflexive and
symmetric. The next lemma collects a few easy facts about~$\equiv$,
the first of which states that~$\equiv$ is also transitive, and hence
an equivalence relation. We write~$[R]$ for the equivalence class
of~$R$ under~$\equiv$.

\begin{lemma} \label{lem:easyfacts2} Let~$R(X),S(X),T(X)$
  be~$K$-relations over the same set~$X$ of attributes. The following statements hold:
\begin{enumerate} \itemsep=0pt
\item If $R \equiv S$ and $S \equiv T$, then $R \equiv T$.
\item If $R \equiv S$, then $R' = S'$.
\item If $R \equiv S$ and $Y \subseteq X$, then $R[Y] \equiv S[Y]$.
\end{enumerate}
\end{lemma}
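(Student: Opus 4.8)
The plan is to verify each of the three claims directly from the definition of $\equiv$ (the existence of non-zero $a,b\in K$ with $aR=bS$), using the positivity of $K$ — specifically the absence of zero-divisors — to keep products of non-zero scalars non-zero.

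For Part~1, suppose $R\equiv S$ and $S\equiv T$, so there are non-zero $a,b,c,d\in K$ with $aR=bS$ and $cS=dT$. The natural move is to scale: multiply $aR=bS$ by $c$ and $cS=dT$ by $b$ to get $(ca)R=(cb)S=(bd)T$, hence $(ca)R=(bd)T$. Since $K$ has no zero-divisors, $ca$ and $bd$ are both non-zero, so $R\equiv T$. I do not expect any obstacle here; this is the same "common refinement" trick used to prove transitivity of equivalence up to a multiplicative unit in a cancellative monoid.

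For Part~2, if $R\equiv S$ then $aR=bS$ for some non-zero $a,b$. Applying Part~1 of Lemma~\ref{lem:easyfacts1} twice gives $R'=(aR)'=(bS)'=S'$, since scaling a $K$-relation by a non-zero element of a semiring without zero-divisors preserves the support. For Part~3, take $Y\subseteq X$ with $aR=bS$; I would first observe that for any scalar $a$ and any $K$-relation $R$ one has $(aR)[Y]=a(R[Y])$ — this is just the distributive law applied term-by-term inside the marginal sum~\eqref{eqn:marginal}, using that $R'=(aR)'$ so the index set of the sum is unchanged when $a$ is non-zero. Then $a(R[Y])=(aR)[Y]=(bS)[Y]=b(S[Y])$, so $R[Y]\equiv S[Y]$ with the same witnessing scalars $a,b$. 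The only mild care needed is the identity $(aR)[Y]=a(R[Y])$, which should be stated as a one-line computation; everything else is immediate.

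Overall there is no real "hard part": the proof is a routine unwinding of definitions, and the single ingredient doing all the work is that in a positive semiring the product of two non-zero elements is non-zero, which lets the witnessing scalars compose (Part~1) and survive both support extraction (Part~2) and marginalization (Part~3).
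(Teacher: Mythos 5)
Your proof is correct and takes essentially the same approach as the paper's: transitivity by composing the witnessing scalars (non-zero because $K$ has no zero-divisors), and the marginal statement by distributivity together with equality of supports. The only cosmetic differences are that you invoke Part~1 of Lemma~\ref{lem:easyfacts1} for Part~2 where the paper re-derives the same argument inline, and that you package Part~3 as the identity $(aR)[Y]=a(R[Y])$ rather than writing out the chain of sums directly.
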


\begin{proof}
  For~1, assume that~$a R = b S$ and~$c S = d T$ for non-zero~$a$
  and~$b$ in~$K$, and non-zero~$c$ and~$d$ in~$K$. Since~$K$ has no
  zero-divisors we have that~$ac$ and~$bd$ are non-zero.
  Moreover,
  \begin{equation}
  ac R = ca R = cb S = bc S = bd T,
  \end{equation}
  where the first equality is commutativity, the second follows
  from~$a R = b S$, the third is commutativity, and the last follows
  from~$c S = d T$. For~2, assume that~$a R = b S$ for non-zero~$a$
  and~$b$, and that~$R(t) \not= 0$ for some~$X$-tuple~$t$.
  Then~$aR(t) \not= 0$ because~$K$ has no zero-divisors,
  hence~$bS(t) \not= 0$ by the assumption that~$aR = bS$,
  and~$S(t) \not= 0$ since~$0$ annihilates~$K$. This
  shows~$R' \subseteq S'$ and the reverse inclusion follows from
  symmetry. For~3, assume that~$a R = b S$ for non-zero~$a$ and~$b$
  and that~$Y \subseteq X$. For every~$Y$-tuple~$u$ we have
  \begin{equation}
  a R(u) = a \sum_{r \in R': \atop r[Y] = u} R(r) =
  \sum_{r \in R' : \atop r[Y] = u} aR(r) =
  \sum_{r \in R' : \atop r[Y] = u} bS(r) =
  \sum_{s \in S' : \atop s[Y] = u} bS(r) =
  b \sum_{s \in S' : \atop s[Y] = u} S(r) =
  b S(u),
  \end{equation}
  where the first equality follows from~\eqref{eqn:marginal}, the second
  is distributivity, the third follows from the assumption
  that~$aR = bS$, the fourth follows from point 2 in this lemma, the
  fifth is again distributivity, and the sixth is~\eqref{eqn:marginal}.
\end{proof}

\section{Consistency of Two $K$-Relations} \label{sec:tworelations}

For ordinary relations~$R(X)$ and~$S(Y)$, there are several different
ways to define the concept of~$R$ and~$S$ being \emph{consistent},
and all these concepts turn out to be  equivalent to each other. One way is to say that~$R$
and~$S$ arise as the projections~$T[X]$ and~$T[Y]$ of a single
relation~$T$ over the union of attributes~$XY$. Another way is to say
that~$R$ and~$S$ agree on their projections to the set~$Z = X \cap Y$ of their common
attributes. Yet a third way is to say that their
ordinary join~$R \Join S$ projects to~$R$ on~$X$ and to~$S$ on~$Y$. In
this section, we study the analogous concepts for~$K$-relations with
consistency defined up to normalization. Along the way, we will also
define a notion of~$\Join$ for two~$K$-relations.

\subsection{Consistency of Two $K$-Relations and Their Join}

Let~$K$ be an arbitrary but fixed  positive semiring. We start with the
definition of consistency up to normalization, or more simply,
consistency of two~$K$-relations.

\paragraph{Consistency Up to Normalization} Let~$R(X)$ and~$S(Y)$ be
two~$K$-relations. We say that~$R$ and~$S$ are \emph{consistent} if
there is a~$K$-relation~$T(XY)$ such that~$R \equiv T[X]$
and~$S \equiv T[Y]$. We say that~$T$ \emph{witnesses} their
consistency. Two equivalence classes~$[R]$ and~$[S]$ of~$K$-relations
are \emph{consistent} if their representatives~$R$ and~$S$ are consistent. It
is easy to see that this notion of consistency among equivalence
classes is well-defined in that it does not depend on the chosen
representatives~$R$ and~$S$. Indeed, if~$[R]$ and~$[S]$ are consistent
and~$T$ witnesses the consistency of~$R$ and~$S$, then for
every~$R_0 \equiv R$ and every~$S_0 \equiv S$, we have that~$T$ also
witnesses the consistency of~$R_0$ and~$S_0$, by the
transitivity of $\equiv$. Conversely, if~$T_0$ witnesses the consistency
of~$R_0 \equiv R$ and~$S_0 \equiv S$, then it also witnesses the
consistency of~$R$ and~$S$, again by the transitivity of $\equiv$.

\paragraph{Naive Join Operation of Two~$K$-Relations}
We want to define a join operation~$R \Join S$ for~$K$-relations~$R$
and~$S$ with the property that if~$R$ and~$S$ are consistent, then
their join witnesses the consistency. A natural candidate for such an
operation would be to define~$(R \Join S)(t)$ by~$R(t[X]) S(t[Y])$ for
every~$XY$-tuple~$t$, where~$X$ and~$Y$ are the sets of attributes
of~$R$ and~$S$, respectively. This is the straightforward
generalization of the ordinary join of ordinary relations since for
the Boolean semiring both definitions give the same
operation. Moreover, this is the way the join of bags is defined in SQL (see \cite{DBLP:books/daglib/0011318}).
As we show below, however, this naive generalization does
not work: in fact, even for bags, the bag defined this way does not always
witness the consistency of  two consistent bags.

\begin{example} \label{ex:naivejoin}
Let~$R(AB),S(BC),J(ABC),U(ABC)$ be the four bags given by the
following tables of multiplicities (the $\#$-column is the multiplicity):
  \begin{center}
  \begin{tabular}{lllllllllllllllllll}
  $R(AB)$\, \# & \;\;\;\; & $S(BC)$ \# & \;\;\;\; & $J(ABC)$\, \# & \;\;\;\; & $U(ABC)$\, \# \\
  \;\;\;\, 1 2 : 6 & & \;\;\; 2 3 : 2 & & \;\;\, 1 2 3 : 12 & & \;\;\;\, 1 2 3 : 6 \\
  \;\;\;\, 2 3 : 3 & & \;\;\; 2 4 : 2 & & \;\;\, 1 2 4 : 12 & & \;\;\;\, 1 2 4 : 6 \\
  \;\;         & & \;\;\; 3 4 : 2 & & \;\;\, 2 3 4 : 6 & & \;\;\;\, 2 3 4 : 6
  \end{tabular}
  \end{center}
  Consider also the marginals of $J$ and $U$ on $AB$ and $BC$:
  \begin{center}
  \begin{tabular}{llllllllllllllllll}
  $J[AB]$\; \# & \;\;\;\; & $J[BC]$\; \# & \;\;\;\; & $U[AB]$\; \# & \;\;\;\; & $U[BC]$\, \# \\
  \;\;\, 1 2 : 24 & & \;\;\, 2 3 : 12 & & \;\;\; 1 2 : 12 & & \;\;\; 2 3 : 6 \\
  \;\;\, 2 3 : 6 & & \;\;\, 2 4 : 12 & & \;\;\; 2 3 : 6 & & \;\;\; 2 4 : 6 \\
  \;\;         & & \;\;\, 3 4 : 6 & & \;\;\; & & \;\;\; 3 4 : 6
  \end{tabular}
  \end{center}
  The bags~$R$ and~$S$ are consistent since~$U$ witnesses their
  consistency: $2R = U[AB]$ and~$3S = U[BC]$. The bag~$J$ is
  actually the naive join of~$R$ and~$S$ defined
  by~$J(t) = R(t[AB])S(t[BC])$, and there are no non-zero~$a$ and~$b$
  in~$\mathbb{N}$ such that~$aR = bJ[AB]$, and also there are no
  non-zero~$c$ and~$d$ in~$\mathbb{N}$ such that~$c S = dJ[BC]$.
\end{example}

\noindent  This example has shown that the \emph{naive} join need not witness the
consistency of~$R$ and~$S$. We need a different way of defining the
join operation.

\paragraph{Derivation of the New Join Operation}
To arrive at the definition of the join operation that will work for
arbitrary semirings, we turn again to probability distributions
from Example~\ref{ex:examplessemirings} as the motivating example.
Recall that a probability distribution is
a~$\mathbb{R}^{\geq 0}$-relation that
satisfies~\eqref{eqn:normalizationequation}.  As in the discussion for
defining the equivalence relation, this motivation will generalize to
any semifield beyond~$\mathbb{R}^{\geq 0}$. {From} there, generalizing
the definition to arbitrary semirings will be a small step.

Let~$R(X)$ and~$S(Y)$ be probability distributions. It is easy to see
that if~$X$ and~$Y$ were disjoint, then the~$\mathbb{R}^{\geq 0}$
relation given by~$t \mapsto R(t[X])S(t[Y])$ would again be a
probability distribution. This, however, fails badly if~$X$ and~$Y$
are not disjoint as can be seen from turning the example bags~$R(AB)$
and~$S(BC)$ from Example~\ref{ex:naivejoin} into probability
distributions through normalization (when seen
as~$\mathbb{R}^{\geq 0}$-relations). The catch is of course that
if~$Z = X \cap Y$ is non-empty, then two independent samples from the
distributions~$R$ and~$S$ need not agree on their projections
on~$Z$. The solution is to define the join~$R \Join S$ as the
probability distribution on~$XY$ that is sampled by the following
different process: first sample~$r$ from the distribution~$R$, then
sample~$s$ from the distribution~$S$ \emph{conditioned}
on~$s[Z] = r[Z]$, finally output the tuple~$rs$ which is well-defined
since~$s[Z] = r[Z]$.  This leads to the  expression
\begin{equation}
  (R \Join_{\mathrm{P}} S)(t) := R(t[X])S(t[Y])/S(t[Z]) \label{eqn:prop1}
\end{equation}
defined for all~$XY$-tuples~$t$, with convention that~$0/0 = 0$.
Observe that, by writing~$r := t[X]$,~$s :=
t[Y]$,~$u := t[Y\setminus Z]$ and~$v := t[Z]$, the factor~$R(t[X])$
in~\eqref{eqn:prop1} is the probability~$R(r)$ of getting~$r$ in a
sample from the distribution~$R$, and the factor~$S(t[Y])/S(t[Z])$ is
the probability~$S(s)/S(v) = S(uv)/S(v)$ of getting~$s$ in a sample
from the distribution~$S$ conditioned on~$s[Z] = v = r[Z]$.

Naturally, we could have equally well considered the reverse sampling
process that first samples~$s$ from~$S$, and then samples~$r$ from~$R$
\emph{conditioned} on~$r[Z] = s[Z]$. This would lead to the alternative expression
\begin{equation}
(R \mathbin{{_\mathrm{P}}\!\!\Join} S)(t) :=
S(t[X])R(t[Y])/R(t[Z]). \label{eqn:prop2}
\end{equation}
It is clear from the definitions
that~$R \Join_{\mathrm{P}} S = S \mathbin{{_\mathrm{P}}\!\!\Join} R$,
but for the two proposals to agree we would need to
have~$R[Z] = S[Z]$. Luckily, this can actually be seen to hold in
case~$R$ and~$S$ are consistent probability distributions since if~$T$
is a~$\mathbb{R}^{\geq 0}$-relation that satisfies~$T[X] = R$
and~$T[Y] = S$, then also~$T[Z] = R[Z] = S[Z]$ by Part~3 of
Lemma~\ref{lem:easyfacts1}. It will follow from the lemmas below that
in such a case we also have~$(R\Join S)[X] \equiv R$
and~$(R \Join S)[Y] \equiv S$ for both~$\Join = \Join_{\mathrm{P}}$
and~$\Join = \mathbin{{_\mathrm{P}}\!\!\Join}$, which is what we want.

It is clear that the expression in~\eqref{eqn:prop1}, in addition to
being defined for~$K = \mathbb{R}^{\geq 0}$, could have been defined
for any semiring~$K$ that is actually a semifield where a division
operation is available. On the other hand,  to obtain an expression
that works for arbitrary semirings, we need to eliminate the
divisions. A natural approach for this would be to multiply the
expression in~\eqref{eqn:prop1} by the
product~$\prod_{s \in S[Z]'} S(s[Z])$ of all the values that appear in
the denominator. This way, the denominator would cancel, yet the
resulting~$K$-relation would remain equivalent up to normalization
because the multiplying products do not depend on the tuple~$t$. We
are now ready to formally define this.

\paragraph{Definition of the Join of Two $K$-Relations}
For a~$K$-relation~$T(X)$, a
subset~$Z \subseteq X$, and a~$Z$-tuple~$u$, define
\begin{equation}
c^*_{T,Z} := \prod_{v \in T[Z]'} T(v) \;\;\;\text{ and }\;\;\;
c_{T}(u) := \prod_{v \in T[Z]': \atop v \not= u} T(v), \label{eqn:cr}
\end{equation}
with the understanding that the empty product evaluates to~$1$, the
unit of the semiring~$K$. Observe that~$c_{T}(u)$ and~$c^*_{T,Z}$ are
always non-zero because~$T[Z]'$ is precisely the set of~$Z$-tuples~$v$
with non-zero~$T(v)$, and~$K$ has no zero-divisors. The \emph{join} of
two~$K$-relations~$R(X)$ and~$S(Y)$ is the~$K$-relation over~$XY$ defined, for every~$XY$-tuple~$t$, by
\begin{equation}
 (R \Join S)(t) :=
 R(t[X]) S(t[Y]) c_S(t[X \cap Y]). \label{eqn:joinasymmetric}
\end{equation}
It is worth noting at this point that the
identity~$c^*_{T,Z} = c_T(u) T(u)$ holds, which means that
whenever~$K$ is a semifield such as~$\mathbb{R}^{\geq 0}$, we
have~$c^*_{T,Z}/T(u) = c_T(u)$ for all~$u \in T[Z]'$, and therefore
\begin{equation}
  R \Join S = c^*_{S[Z]} (R \Join_{\mathrm{P}} S), \label{eqn:coincides}
\end{equation}
where $\Join_{\mathrm{P}}$ is defined as in~\eqref{eqn:prop1}.

Note that, for ordinary relations, the join operation just introduced coincides with the (ordinary) join operation in relational databases.
Note also that the definition
of~$\Join$ is asymmetric. Thus, on the face of its definition,  the join of two~$K$-relations
need not be commutative, i.e., there may be $K$-relations $R$ and $S$ such that $R\Join S \not = S\Join R$. As a matter of fact, something  stronger holds: in general, $R\Join S \not \equiv S\Join R$; furthermore, as we shall see next, this happens even for bags.
Nonetheless,  we will show later
that~$R \Join S \equiv S \Join R$ does hold in case~$R$ and~$S$ agree
on their common marginals; moreover, in that case,  both joins $R\Join S$ and $S\Join R$  witness
the consistency of $R$ and $S$.

The example that follows illustrates the definition of the join and also shows that the join operation need not be commutative, not even up to equivalence.

\begin{example}
  Consider the bags~$R(AB)$ and~$S(BC)$ from
  Example~\ref{ex:naivejoin}, where they were shown to be consistent using the
  bag~$U(ABC)$ from the same example. The joins~$V := R \Join S$
  and~$W := S \Join R$ defined by~\eqref{eqn:joinasymmetric} are the
  bags given by the following tables of multiplicities. We display~$V$
  and~$W$ alongside their  marginals on~$AB$ and~$BC$.
  \begin{center}
  \begin{tabular}{lllllllllllllllllllllll}
  $V(ABC)$\; \# & \; & $W(ABC)$\; \# & \; & $V[AB]$\; \# & \; & $V[BC]$\; \# & \; & $W[AB]$\; \# & \; & $W[BC]$\; \# \\
  \;\;\; 1 2 3 : 24 & & \;\;\;\; 1 2 3 : 36 & & \;\;\; 1 2 : 48 & & \;\;\; 2 3 : 24 & & \;\;\;\; 1 2 : 72 & & \;\;\;\; 2 3 : 36 \\
  \;\;\; 1 2 4 : 24 & & \;\;\;\; 1 2 4 : 36 & & \;\;\; 2 3 : 24 & & \;\;\; 2 4 : 24 & & \;\;\;\; 2 3 : 36 & & \;\;\;\; 2 4 : 36 \\
  \;\;\; 2 3 4 : 24 & & \;\;\;\; 2 3 4 : 36 & & \;\;\;    & & \;\;\; 3 4 : 24 & & \;\;\;    & & \;\;\;\; 3 4 : 36
  \end{tabular}
  \end{center}
  For example, the entry~$V(234)$ is computed
  as~$3\!\cdot\!2\!\cdot\!(2\!+\!2) = 24$
  according to the expression~\eqref{eqn:joinasymmetric} that defines the
  join operation. By inspection, we have that~$8R=V[AB]$
  and~$12S=V[BC]$, so~$V$ witnesses the consistency of~$R$ and~$S$,
  and~$24R = W[AB]$ and~$18S = W[BC]$, so~$W$ also witnesses the
  consistency of~$R$ and~$S$. Indeed,~$3V = 2W$, which shows that  $R\Join S \equiv S\Join R$ holds for these two bags $R$ and $S$.

  Next, we use the bag~$R$ and another bag~$T$ to show that~$\Join$ need not be commutative, not
  even up to equivalence.  Consider the bag~$T(BC)$ given below by its
  table of multiplicities together with~$J_1 = R \Join T$
  and~$J_2 = T \Join R$:
  \begin{center}
  \begin{tabular}{lllllllllllllllllllllll}
  $T(BC)$\, \# & \; & $J_1(ABC)$\; \# & \; & $J_2(ABC)$\; \# \\
  \;\;\;\, 2 3 : 2 & & \;\;\;\; 1 2 3 : 48 & & \;\;\;\; 1 2 3 : 36 \\
  \;\;\;\, 2 4 : 2 & & \;\;\;\; 1 2 4 : 48 & & \;\;\;\; 1 2 4 : 36 \\
  \;\;\;\, 3 4 : 4 & & \;\;\;\; 2 3 4 : 48 & & \;\;\;\; 2 3 4 : 72
  \end{tabular}
\end{center}
Clearly, there are no non-zero~$a$ and~$b$ in~$\mathbb{N}$ such
that~$a J_1 = b J_2$, thus~$R \Join T \not\equiv T \Join R$. Note that
the pair of~$K$-relations~$R$ and~$T$ that gave this
has~$R[B] \not\equiv T[B]$. This is no coincidence, since, in what
follows, we will show that if the two~$K$-relations have
equivalent common marginals, then their join is commutative up to
equivalence. This was the case, for example, for the pair of bags~$R$
and~$S$ considered also in this example, which had~$R[B] \equiv S[B]$
and~$R \Join S \equiv S \Join R$.
\end{example}

\paragraph{Properties of the Join Operation}
The first property we show about the join of two~$K$-relations is that it
is well-defined in the sense that its equivalence class does not
depend on the representatives. In other words, we show that the join operation $\Join$ is a congruence with respect to the equivalence relation
$\equiv$ on $K$-relations.

\begin{lemma} Let $R, R_0$ be two $K$-relations over a set $X$ and let~$S, S_0$ be two $K$-relations over a set $Y$. If~$R \equiv R_0$
  and~$S \equiv S_0$, then~$R \Join S \equiv R_0 \Join S_0$.
\end{lemma}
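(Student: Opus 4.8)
The plan is to reduce everything to two elementary ``scaling'' identities for $\Join$ and then chain them together. Write $Z := X \cap Y$ for the common attributes. By the definition of $\equiv$ there are non-zero $a, a_0 \in K$ with $aR = a_0 R_0$ and non-zero $b, b_0 \in K$ with $bS = b_0 S_0$. The first thing I would record is that the exponent that will appear is the same on both sides: combining Part~2 of Lemma~\ref{lem:easyfacts2} with Part~2 of Lemma~\ref{lem:easyfacts1} (equivalently, Part~3 of Lemma~\ref{lem:easyfacts2} followed by Part~2 of Lemma~\ref{lem:easyfacts2}) gives $S[Z]' = S_0[Z]'$; call its cardinality $n$, which is finite since all $K$-relations here are finitely supported.

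Next I would prove the two auxiliary identities, valid for every non-zero $c \in K$:
\begin{equation*}
(cR) \Join S \;=\; c\,(R \Join S) \qquad\text{and}\qquad R \Join (cS) \;=\; c^{\,|S[Z]'|}\,(R \Join S).
\end{equation*}
The first is immediate from the defining formula~\eqref{eqn:joinasymmetric}, since scaling the first argument scales the factor $R(t[X])$ and leaves everything else untouched. The second is the only place that needs real work. Using Part~1 of Lemma~\ref{lem:easyfacts1} (applied both to $S$ and to the $Z$-relation $S[Z]$, after observing that $(cS)[Z] = c\cdot S[Z]$, which follows by pulling $c$ out of the marginal sum by distributivity) one gets $(cS)' = S'$ and $(cS)[Z]' = S[Z]'$, whence $c_{cS}(u) = c^{\,|S[Z]' \setminus \{u\}|}\, c_S(u)$ for every $Z$-tuple $u$. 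Plugging this into~\eqref{eqn:joinasymmetric} gives $(R\Join(cS))(t) = c^{\,m_t+1}(R\Join S)(t)$ with $m_t = |S[Z]' \setminus \{t[Z]\}|$. Then I would split on whether $t[Z] \in S[Z]'$: if it is, then $m_t = n-1$ and the factor is exactly $c^n$; if it is not, then $t[Y] \notin S'$ (otherwise $t[Z] = t[Y][Z] \in S'[Z] = S[Z]'$ by Part~2 of Lemma~\ref{lem:easyfacts1}), so $S(t[Y]) = 0$ and both $(R\Join(cS))(t)$ and $(R\Join S)(t)$ vanish, so $(R\Join(cS))(t) = c^n(R\Join S)(t)$ still holds trivially. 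This case analysis — in particular, making sure the identity survives on tuples lying outside the relevant support — is the main obstacle; it is the price of the asymmetric shape of~\eqref{eqn:joinasymmetric}, where scaling $S$ by $c$ scales the join by $c^n$ rather than by $c$.

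With the two identities in hand, the conclusion follows by a short chain:
\begin{equation*}
a\,b^{n}\,(R \Join S) \;=\; b^{n}\big((aR)\Join S\big) \;=\; b^{n}\big((a_0 R_0)\Join S\big) \;=\; a_0\big(R_0 \Join (bS)\big) \;=\; a_0\big(R_0 \Join (b_0 S_0)\big) \;=\; a_0\, b_0^{n}\,(R_0 \Join S_0),
\end{equation*}
where the second and fourth equalities use $aR = a_0 R_0$ and $bS = b_0 S_0$, and the first, third, and fifth use the two scaling identities together with $|S[Z]'| = |S_0[Z]'| = n$. Since $K$ has no zero-divisors and $a, a_0, b, b_0$ are non-zero, the scalars $a b^{n}$ and $a_0 b_0^{n}$ are non-zero, so this equality exhibits non-zero multipliers witnessing $R \Join S \equiv R_0 \Join S_0$. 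I would remark that the argument makes no use of non-emptiness of the relations: if $S$ is empty then $n = 0$ and $b^{0} = 1$, and if $R$ is empty then $R \Join S$ and $R_0 \Join S_0$ are both the empty $K$-relation over $XY$; in either case the chain above goes through unchanged.
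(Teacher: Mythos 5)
Your proof is correct, and at bottom it is the same computation as the paper's, just packaged differently: the paper fixes combined scalars up front (in its notation, $a^{*}=ac^{m}$ and $b^{*}=bd^{m}$ with $aR=bR_0$, $cS=dS_0$) and verifies the pointwise identity $a^{*}(R\Join S)(t)=b^{*}(R_0\Join S_0)(t)$ by one case split, whereas you factor the argument through the two one-sided scaling identities $(cR)\Join S=c\,(R\Join S)$ and $R\Join(cS)=c^{\,|S[Z]'|}(R\Join S)$ (writing $Z=X\cap Y$ as you do) and then chain them. Your modularization buys two things. First, the case analysis inside the second identity --- splitting on $t[Z]\in S[Z]'$ and noting that otherwise $S(t[Y])=0$ by Part~2 of Lemma~\ref{lem:easyfacts1}, so both sides vanish --- is the honest way to make the scaling law hold on all tuples, not just on the support. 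Second, and worth flagging: your exponent is keyed to the right set. By \eqref{eqn:cr} and \eqref{eqn:joinasymmetric} the factor $c_S(t[X\cap Y])$ is a product over $S[Z]'$, so rescaling $S$ by $c$ multiplies the join by $c^{\,|S[Z]'|}$, and the equality of supports one needs is $S[Z]'=S_0[Z]'$, which you correctly extract from $S\equiv S_0$. The paper's write-up instead records $R[Z]'=R_0[Z]'$, sets $m=|R[Z]'|$, and writes the products in \eqref{eqn:onehand} and~\eqref{eqn:theother} over $R[Z]'$; as written, that agrees with the definition of $c_S$ only when $R[Z]'=S[Z]'$, which is precisely the consistency condition of Lemma~\ref{lem:characterization} and is not assumed in this lemma. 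So your version is not merely a stylistic variant: it quietly repairs this slip, and the paper's argument becomes correct exactly by replacing $R[Z]'$, $R_0[Z]'$, $m$ with $S[Z]'$, $S_0[Z]'$, $n=|S[Z]'|$, as you do. Your closing remark about empty relations is fine but dispensable, since the pointwise identities already cover those cases.
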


\begin{proof}
  Let~$X$ be the set of attributes of~$R$ and~$R_0$, and let~$Y$ be
  that of~$S$ and~$S_0$. Write~$Z = X \cap Y$.  Let~$a$ and~$b$ be
  non-zero elements in~$K$ such that~$a R = b R_0$, and let~$c$
  and~$d$ be non-zero elements in~$K$ such that~$c S = d S_0$.  First
  note that~$R[Z]' = R_0[Z]'$ by Parts~3 and~2 in
  Lemma~\ref{lem:easyfacts2}.  Let~$m$ be the cardinality
  of~$R[Z]' = R_0[Z]'$ and set~$a^* = a c^m$ and~$b^* = b d^m$.  We
  argue that~$a^* (R \Join S)(t) = b^* (R_0 \Join S_0)(t)$ for
  every~$XY$-tuple~$t$. Fix an~$XY$-tuple~$t$ and
  distinguish the cases~$t[Z] \not\in R[Z]'$ from~$t[Z] \in R[Z]'$. In
  the first case, we have~$R(t[Z])=0$ and hence~$R(t[X]) = 0$ by Part~2
  in Lemma~\ref{lem:easyfacts1}, so~$R_0(t[X]) = 0$ by Part~2 in
  Lemma~\ref{lem:easyfacts2}. It follows
  that~$(R \Join S)(t) = (R_0 \Join S_0)(t) = 0$ in this case. In the
  second case, we have~$m \geq 1$ and
  \begin{align}
  a^* (R\Join S)(t) = a R(t[X]) \cdot c S(t[Y]) \cdot \prod_{r \in R[Z]' :
  \atop r \not= t[Z]} c S(r) \label{eqn:onehand}
  \end{align}
  on one hand since $a^* = acc^{m-1}$, and
  \begin{align}
  b^* (R_0 \Join S_0)(t) = b R_0(t[X]) \cdot d S_0(t[Y]) \cdot
  \prod_{r \in R_0[Z]': \atop r \not= t[Z]} d S_0(r), \label{eqn:theother}
  \end{align}
  on the other since~$b^* = bdd^{m-1}$. The right-hand sides
  of~\eqref{eqn:onehand}~and~\eqref{eqn:theother} are equal
  by~$a R = b R_0$ and~$c S = d S_0$, and~$R[Z]'=R_0[Z]'$, so the
  lemma is proved.
\end{proof}

Next, we show that the support of the join is the ordinary join
of the supports.

\begin{lemma} \label{lem:supportsjoin}
  For all $K$-relations $R$ and $S$, we have that
  $(R \Join S)' = R' \Join S'$.
\end{lemma}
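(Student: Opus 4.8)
The plan is to prove the equality of the two \emph{ordinary} relations $(R\Join S)'$ and $R'\Join S'$ by showing that they contain exactly the same $XY$-tuples, where $X,Y$ are the schemas of $R,S$ and $Z := X\cap Y$. By definition of support, a tuple $t$ lies in $(R\Join S)'$ if and only if $(R\Join S)(t)\neq 0$, and by~\eqref{eqn:joinasymmetric} this value equals the product $R(t[X])\cdot S(t[Y])\cdot c_S(t[Z])$; meanwhile $t$ lies in the ordinary join $R'\Join S'$ exactly when $t[X]\in R'$ and $t[Y]\in S'$, i.e., exactly when $R(t[X])\neq 0$ and $S(t[Y])\neq 0$. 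So the whole statement reduces to determining when that three-fold product is nonzero.

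The key observation, already recorded right after~\eqref{eqn:cr}, is that the auxiliary factor $c_S(t[Z])$ is \emph{always} nonzero: it is a product of values $S(v)$ with $v$ ranging over $S[Z]'$, each of which is nonzero by the very definition of the support, and the empty product is the unit $1\neq 0$; since $K$ has no zero-divisors, the product is nonzero. Hence $c_S(t[Z])$ can never be responsible for the product vanishing, nor can it introduce spurious tuples into the support.

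Given this, the two inclusions follow by invoking the two defining features of a positive semiring in the right places. For $(R\Join S)'\subseteq R'\Join S'$ we use that $0$ annihilates $K$: if $(R\Join S)(t)\neq 0$, then none of the three factors can be $0$, so in particular $R(t[X])\neq 0$ and $S(t[Y])\neq 0$, i.e., $t[X]\in R'$ and $t[Y]\in S'$, i.e., $t\in R'\Join S'$. For the reverse inclusion we use the absence of zero-divisors: if $t\in R'\Join S'$, then $R(t[X])\neq 0$ and $S(t[Y])\neq 0$, and together with $c_S(t[Z])\neq 0$ all three factors are nonzero, so their product is nonzero by no-zero-divisors, whence $t\in (R\Join S)'$.

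I do not expect a genuine obstacle here; the proof is a short factor-by-factor analysis. The only point requiring care is to notice that the bookkeeping factor $c_S(t[Z])$ is unconditionally nonzero, so that the support of $R\Join S$ is governed solely by the two ``essential'' factors $R(t[X])$ and $S(t[Y])$, and to be mindful that one inclusion needs $0$ to annihilate $K$ while the other needs $K$ to have no zero-divisors.
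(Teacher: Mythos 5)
Your proof is correct and follows essentially the same route as the paper's: a factor-by-factor analysis of the product in~\eqref{eqn:joinasymmetric}, using that $c_S(t[Z])$ is always nonzero, with annihilation by $0$ giving one inclusion and the absence of zero-divisors giving the other (the paper phrases the latter direction contrapositively, but the content is identical).
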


\begin{proof}
  Let~$X$ be the set of attributes of~$R$, and let~$Y$ be that
  of~$S$. Write~$Z = X \cap Y$ and~$T = R \Join S$. Fix
  an~$XY$-tuple~$t$. If~$t$ is in~$T'$, then~$T(t) \not= 0$ and
  in particular~$R(t[X]) \not= 0$ and~$S(t[Y]) \not= 0$
  by~\eqref{eqn:joinasymmetric}. It follows that~$t[X]$ is in~$R'$
  and~$t[Y]$ is in~$S'$; i.e.,~$t$ is in the relational join of~$R'$
  and~$S'$. Conversely, if~$T(t) = 0$, then
  by~\eqref{eqn:joinasymmetric} again either~$R(t[X])=0$
  or~$S(t[Y])=0$ or~$c_S(t[Z]) = 0$ since~$K$ has no
  zero-divisors. The third case is absurd: we already argued
  that~$c_S(t[Z]) \not= 0$ since~$S[Z]'$ is precisely the set
  of~$Z$-tuples~$v$ with~$S(v) \not= 0$. In the first two cases, we can
  conclude that either~$t[X]$ is not in~$R'$ or~$t[Y]$ is not in~$S'$,
  so~$t$ is not in their join.
\end{proof}

The \emph{left semijoin}~$R' \ltimes S'$ of two ordinary
relations~$R'(X)$ and~$S'(Y)$ is the set of~$X$-tuples
in~$R'$ that join with some~$Y$-tuple in~$S'$,
i.e.,~$R' \ltimes S' = (R' \Join S')[X]$. We use
Lemma~\ref{lem:supportsjoin} to show that the asymmetric join behaves
like a left semijoin up to equivalence, in a strong
sense~(with~$a = 1$).

\begin{lemma} \label{lem:semijoinlike} For all $K$-relations $R$ and
  $S$ and all $r \in R' \ltimes S'$, we have that
  $(R \Join S)(r) = c^*_S R(r)$.
\end{lemma}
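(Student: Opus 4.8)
The plan is to unfold the marginal $(R\Join S)(r)$ directly from the definitions and collapse it to the claimed product. Here $r$ is an $X$-tuple, where $X$ is the schema of $R$, so $(R\Join S)(r)$ denotes the marginal of the $XY$-relation $R\Join S$ over $r$ in the sense of~\eqref{eqn:marginal}; write $Y$ for the schema of $S$, $Z=X\cap Y$, and recall that $c^*_S$ abbreviates $c^*_{S,Z}$. First I would record two consequences of the hypothesis $r\in R'\ltimes S'$: that $r\in R'$, and that $r$ joins with some $s\in S'$, so that $r[Z]=s[Z]$ lies in $S'[Z]=S[Z]'$ by Part~2 of Lemma~\ref{lem:easyfacts1}. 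This last fact is what will license the cancellation identity $c^*_{S,Z}=c_S(r[Z])\,S(r[Z])$ at the end.

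Next I would compute. Using the definition of the marginal~\eqref{eqn:marginal}, Lemma~\ref{lem:supportsjoin} (which gives $(R\Join S)'=R'\Join S'$), and the defining formula~\eqref{eqn:joinasymmetric},
\begin{equation}
 (R\Join S)(r)\;=\;\sum_{t\in R'\Join S':\; t[X]=r} R(t[X])\,S(t[Y])\,c_S(t[Z]).
\end{equation}
Every $t$ in this sum has $t[X]=r$, hence $t[Z]=r[Z]$ since $Z\subseteq X$, so the factors $R(t[X])=R(r)$ and $c_S(t[Z])=c_S(r[Z])$ are constant and pull out of the sum. For the residual sum, note that the condition ``$t\in R'\Join S'$ and $t[X]=r$'' (recall $r\in R'$) is equivalent to ``$t[Y]\in S'$ and $t[Y][Z]=r[Z]$'', so $t\mapsto t[Y]$ is a bijection onto $\{s\in S':s[Z]=r[Z]\}$ and $\sum_t S(t[Y])=\sum_{s\in S':\,s[Z]=r[Z]} S(s)=S(r[Z])$ by~\eqref{eqn:marginal}. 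Hence $(R\Join S)(r)=R(r)\,c_S(r[Z])\,S(r[Z])$.

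Finally, since $r[Z]\in S[Z]'$, the identity $c^*_{S,Z}=c_S(r[Z])\,S(r[Z])$ noted just after~\eqref{eqn:joinasymmetric} turns this into $(R\Join S)(r)=c^*_S\,R(r)$, which is the claim. I do not expect a genuine obstacle: the argument is a direct calculation, and the only points needing care are (i) reading $(R\Join S)(r)$ as a marginal on the schema $X$ of $R$, so that the index set of the sum is the correct one, and (ii) checking $r[Z]\in S[Z]'$ so that the cancellation identity for $c^*_{S,Z}$ applies — both immediate from $r\in R'\ltimes S'$.
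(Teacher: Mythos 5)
Your proposal is correct and follows essentially the same route as the paper's proof: expand the marginal over the support $R'\Join S'$ (via Lemma~\ref{lem:supportsjoin}), pull out the constant factors $R(r)$ and $c_S(r[Z])$, use the bijection $t\mapsto t[Y]$ onto $\{s\in S': s[Z]=r[Z]\}$ to collapse the residual sum to $S(r[Z])$, and finish with the identity $c^*_{S}=c_S(r[Z])\,S(r[Z])$, justified by $r[Z]\in S[Z]'$ exactly as in the paper.
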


\begin{proof}
  Let~$X$ and~$Y$ be the sets of attributes of~$R$ and~$S$,
  respectively, and write~$Z = X \cap Y$ and~$T = R \Join S$.  Fix
  an~$X$-tuple~$r \in R' \ltimes S'$ and write~$u = r[Z]$. We have
  \begin{equation}
    T(r) = \sum_{t \in T': \atop t[X]=r} T(t) =
    \sum_{t \in T':\atop t[X]=r} R(t[X])S(t[Y]) c_S(t[Z]) =
    c_S(u) R(r) \sum_{t \in T': \atop t[X]=r} S(t[Y]), \label{eqn:firsthah}
   \end{equation}
   where the first equality follows from~\eqref{eqn:marginal}, the
   second follows from~\eqref{eqn:joinasymmetric}, and the third
   follows from the condition that~$t[X]=r$ because~$Z \subseteq X$
   implies~$t[Z]=t[X][Z]=r[Z]=u$. At this point, we use the fact
   that~$r \in R' \ltimes S'$ and hence~$r \in R'$, together with
   Lemma~\ref{lem:supportsjoin}, to argue that the map
 \begin{align}
   f : & \{ t \in T' : t[X] = r \} \rightarrow
   \{ s \in S' : s[Z] = u \} :: t \mapsto t[Y]
 \label{eqn:bijectionh}
 \end{align}
 is a bijection.  Indeed, since by Lemma~\ref{lem:supportsjoin}
 each~$t \in T'$ comes from the relational join of~$R'$ and~$S'$, for
 each~$t \in T'$ such that~$t[X]=r$ there exists~$s \in S'$
 with~$t[Y]=s$ and~$s[Z]=t[Y][Z]=t[Z]=t[X][Z]=r[Z]=u$. Clearly
 this~$s=t[Y]$ is uniquely determined from~$t$. Conversely,
 if~$s \in S'$ is such that~$s[Z]=u=r[Z]$, then the join tuple~$t$
 of~$r$ and~$s$ exists, it is in~$T'$ by Lemma~\ref{lem:supportsjoin}
 and the fact that~$r \in R'$, and moreover~$t[X] =
 r$. This~$t$ is uniquely determined from~$s$ (and the
 fixed~$r$). This proves that~\eqref{eqn:bijectionh} is a
 bijection. Therefore, continuing from~\eqref{eqn:firsthah}, we have
 \begin{equation}
 c_S(u) R(r) \sum_{t \in T': \atop t[X]=r} S(t[Y]) = c_S(u) R(r) \sum_{s \in S': \atop s[Z]=u} S(s) = c_S(u) R(r) S(u). \label{eqn:last}
\end{equation}
where the first equality follows from the just shown fact
that~\eqref{eqn:bijectionh} is a bijection, and the second follows
from~\eqref{eqn:marginal}. Recall now  that~$u = r[Z]$
and~$r \in R' \ltimes S'$, which means that~$u \in (R' \Join S')[Z]$.
In particular,~$u \in S'[Z]$, so~$u \in S[Z]'$ by Part~2 of
Lemma~\ref{lem:easyfacts1}.  Thus,
by~\eqref{eqn:cr}, we have ~$c^*_S = c_S(u) S(u)$, and
equations~\eqref{eqn:firsthah} and~\eqref{eqn:last} actually show
that~$T(r) = c^*_S R(r)$.
\end{proof}

Next we show that if two~$K$-relations are consistent in the sense
that their marginals on the common attributes are equivalent, then
their join commutes up to equivalence. Later we will use this to
argue that this sense of consistency in terms of marginals is
equivalent to the one defined earlier in this section, and thus that
if two~$K$-relations are consistent, then their join commutes up
to equivalence.

\begin{lemma} \label{lem:commutativeifconsistent} For
  all~$K$-relations~$R(X)$ and~$S(Y)$,
  if~$R[X \cap Y] \equiv S[X \cap Y]$, then~$\Join$ commutes on~$R$
  and~$S$ up to equivalence, i.e.,~$R \Join S \equiv S \Join R$.
\end{lemma}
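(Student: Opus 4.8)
The plan is to compute both $R \Join S$ and $S \Join R$ explicitly using the defining formula~\eqref{eqn:joinasymmetric}, and then exhibit non-zero semiring elements $a,b$ such that $a(R\Join S) = b(S\Join R)$. Write $Z = X \cap Y$ and let $m$ denote the common cardinality of $R[Z]'$ and $S[Z]'$; these sets are equal as ordinary relations because $R[Z] \equiv S[Z]$ implies $R[Z]' = S[Z]'$ by Part~2 of Lemma~\ref{lem:easyfacts2}. Let $e,f$ be non-zero elements of $K$ with $e\,R[Z] = f\,S[Z]$; in particular $e\,R(v) = f\,S(v)$ for every $Z$-tuple $v \in R[Z]' = S[Z]'$. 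The natural guess for the multipliers is $a = e^{m-1} f$ and $b = f^{m-1} e$ (or possibly with one extra factor of $e$ or $f$), chosen so that the $m$ denominator-like factors bundled inside $c_S(\cdot)$ and $c_R(\cdot)$ get balanced against each other.

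Concretely, for an $XY$-tuple $t$ with $v := t[Z]$, one has
\begin{equation}
(R \Join S)(t) = R(t[X])\, S(t[Y]) \prod_{v' \in S[Z]',\, v' \neq v} S(v'),
\qquad
(S \Join R)(t) = S(t[Y])\, R(t[X]) \prod_{v' \in R[Z]',\, v' \neq v} R(v').
\end{equation}
If $v \notin R[Z]' = S[Z]'$ then $R(t[X]) = 0 = S(t[Y])$ (using Part~2 of Lemma~\ref{lem:easyfacts1}), so both sides vanish and there is nothing to check. If $v \in R[Z]'$, then each factor $S(v')$ with $v' \neq v$ can be converted to $R(v')$ by multiplying by $e$ and dividing by $f$ — formally, multiply the whole first product by $f^{m-1}$ and use $f\,S(v') = e\,R(v')$ term by term to turn it into $e^{m-1} \prod_{v' \neq v} R(v')$. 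After also accounting for the factor $S(t[Y])R(t[X]) = R(t[X])S(t[Y])$, which is common to both, one reads off the exact constants needed: one gets $f^{m-1}(R\Join S)(t) = e^{m-1}(S \Join R)(t)$ for all such $t$. Since $e,f$ are non-zero and $K$ has no zero-divisors, $e^{m-1}$ and $f^{m-1}$ are non-zero (with the empty product equal to $1$ when $m \leq 1$), which is exactly the statement $R \Join S \equiv S \Join R$.

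The one point that needs care is making sure the bookkeeping of which tuple $v$ is excluded from the product matches up on the two sides: on the $R\Join S$ side the excluded tuple is $t[X\cap Y]$ viewed inside $S[Z]'$, and on the $S \Join R$ side it is the same tuple $t[X \cap Y]$ viewed inside $R[Z]'$, and these index sets coincide precisely because $R[Z]' = S[Z]'$. So the two products range over the same set of $Z$-tuples minus the same point, and the term-by-term substitution $f\,S(v') = e\,R(v')$ is legitimate for every surviving index. I do not expect any genuine obstacle here; the only mild subtlety is handling the degenerate cases $m = 0$ (both relations have empty $Z$-marginal, hence, by plus-positivity, empty support, so both joins are empty) and $m = 1$ (the products are empty on both sides and the joins already agree up to the common factor $R(t[X])S(t[Y])$), which the formula $f^{m-1}(R\Join S) = e^{m-1}(S\Join R)$ handles uniformly once we adopt the convention that the empty product is $1$.
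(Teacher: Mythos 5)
Your proposal is correct and follows essentially the same route as the paper's own proof: establish $R[Z]'=S[Z]'$, split on whether $t[Z]$ lies in this common support, and balance the $m-1$ factors inside $c_S$ against those inside $c_R$ via $e\,R(v')=f\,S(v')$, arriving at exactly the paper's identity $f^{m-1}(R\Join S)=e^{m-1}(S\Join R)$ (the paper's $b^{m-1},a^{m-1}$). The only cosmetic slip is the suggestion that this formula covers $m=0$ "uniformly" (the exponent $m-1$ is then not an empty product), but you also give the correct direct argument for that degenerate case, just as the paper does.
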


\begin{proof}
  Write~$Z = X \cap Y$.  Let~$a$ and~$b$ be non-zero and such
  that~$a R[Z]= b S[Z]$.  First note that~$R[Z]' = S[Z]'$ by Part~3
  and~2 of Lemma~\ref{lem:easyfacts2}. Let~$m$ be the cardinality of
  $R[Z]' = S[Z]'$.  If~$m = 0$,
  then~$(R \Join S)(t) = R(t[X])S(t[Y]) = S(t[Y])R(t[X]) = (S \Join
  R)(t)$ for every~$XY$-tuple~$t$, and we are done. Assume then
  that~$m \geq 1$ and set~$a^* = a^{m-1}$ and~$b^* = b^{m-1}$.  We
  argue that~$b^* (R \Join S)(t) = a^* (S \Join R)(t)$ for
  every~$XY$-tuple~$t$. Fix an~$XY$-tuple~$t$ and
  distinguish the cases~$t[Z] \not\in S[Z]'$ from~$t[Z] \in S[Z]'$. In
  the first case we have~$S(t[Y]) = 0$ by Part~2 of
  Lemma~\ref{lem:easyfacts1} and it follows
  that~$b^* (R \Join S)(t) = 0 = a^* (S \Join R)(t)$ in this case. In
  the second case we have
  \begin{equation}
  b^* (R \Join S)(t) = R(t[X])S(t[Y])\prod_{s \in S[Z]': \atop s\not=t[Z]}
  b S(r) \label{eqn:onehandn}
  \end{equation}
  on one hand since $b^* = b^{m-1}$ and $t[Z] \in S[Z]'$, and
  \begin{equation}
  a^* (S \Join R)(t) = S(t[Y])R(t[X])\prod_{r \in R[Z]': \atop r\not=t[Z]}
  a R(r) \label{eqn:otherhandn}
  \end{equation}
  on the other since~$a^* = a^{m-1}$ and~$t[Z] \in S[Z]'=R[Z]'$.  Now,
  given hat~$aR(r) = bS(r)$ for every~$Z$-tuple~$r$, the right-hand
  sides of~\eqref{eqn:onehandn} and~\eqref{eqn:otherhandn} are equal,
  and the lemma is proved.
\end{proof}

We are ready to show that the join witnesses the consistency of any two
consistent~$K$-relations. Along the way, we also prove that
two~$K$-relations are consistent if and only if their marginals on the
common attributes are equivalent. This result tells that the join operation on two $K$-relations introduced here possesses most of the desirable properties that the join of ordinary relations in relational databases does.

\begin{lemma} \label{lem:characterization} Let~$R(X)$ and~$S(Y)$
  be~$K$-relations. The following statements are equivalent:
  \begin{enumerate} \itemsep=0pt
  \item[(a)] $R$ and $S$ are consistent.
  \item[(b)] $R[X \cap Y] \equiv S[X \cap Y]$.
  \item[(c)] $R'$ and $S'$ are consistent and $R \Join S \equiv S \Join R$.
  \item[(d)] $R \equiv (R \Join S)[X]$ and $S \equiv (R \Join S)[Y]$.
  \end{enumerate}
\end{lemma}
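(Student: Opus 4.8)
The plan is to establish the cycle of implications (a)~$\Rightarrow$~(b)~$\Rightarrow$~(c)~$\Rightarrow$~(d)~$\Rightarrow$~(a), writing $Z = X \cap Y$ throughout. For (a)~$\Rightarrow$~(b): if $T(XY)$ witnesses the consistency of $R$ and $S$, then $R \equiv T[X]$ and $S \equiv T[Y]$, so applying Part~3 of Lemma~\ref{lem:easyfacts2} with the subset $Z$, and then Part~3 of Lemma~\ref{lem:easyfacts1} to collapse the iterated marginals, we get $R[Z] \equiv T[X][Z] = T[Z]$ and likewise $S[Z] \equiv T[Z]$; symmetry and transitivity of $\equiv$ (Part~1 of Lemma~\ref{lem:easyfacts2}) then give $R[Z] \equiv S[Z]$. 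Note that the same argument over the Boolean semiring shows that ordinary relations $P$ and $Q$ are consistent iff they agree on their common attributes, since $\equiv$ is just equality over $\mathbb{B}$; I will use this below.

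For (b)~$\Rightarrow$~(c): from $R[Z] \equiv S[Z]$ together with Part~2 of Lemma~\ref{lem:easyfacts2} and Part~2 of Lemma~\ref{lem:easyfacts1} we get $R'[Z] = R[Z]' = S[Z]' = S'[Z]$, hence $R'$ and $S'$ agree on $Z$ and are therefore consistent (their ordinary join witnesses it, every tuple of $R'$ joining with some tuple of $S'$ and vice versa because the supports agree on $Z$); and $R \Join S \equiv S \Join R$ is exactly the content of Lemma~\ref{lem:commutativeifconsistent}. For (c)~$\Rightarrow$~(d): consistency of $R'$ and $S'$ gives $R'[Z] = S'[Z]$, hence $R' \ltimes S' = R'$ and, symmetrically, $S' \ltimes R' = S'$. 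By Lemma~\ref{lem:supportsjoin} and Part~2 of Lemma~\ref{lem:easyfacts1}, the support of $(R \Join S)[X]$ is $(R' \Join S')[X] = R' \ltimes S' = R'$, so $(R \Join S)[X](r) = 0 = c^*_S R(r)$ for $r \notin R'$, while for $r \in R' = R' \ltimes S'$ Lemma~\ref{lem:semijoinlike} gives $(R \Join S)[X](r) = c^*_S R(r)$; thus $(R \Join S)[X] = c^*_S R$, and since $c^*_S \neq 0$ this gives $(R \Join S)[X] \equiv R$. The $Y$-marginal cannot be treated the same way because $\Join$ is asymmetric; instead, the symmetric computation applied to $S \Join R$ yields $(S \Join R)[Y] = c^*_R S \equiv S$, and combining this with $R \Join S \equiv S \Join R$ (part of~(c)) via Part~3 of Lemma~\ref{lem:easyfacts2} gives $(R \Join S)[Y] \equiv (S \Join R)[Y] \equiv S$.

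Finally, (d)~$\Rightarrow$~(a) is immediate: $T := R \Join S$ is a $K$-relation over $XY$ with $R \equiv T[X]$ and $S \equiv T[Y]$, so it witnesses the consistency of $R$ and $S$. The main obstacle I anticipate is precisely the asymmetry of $\Join$ in the step (c)~$\Rightarrow$~(d): one cannot read (d) directly off the definition of the join, since the formula~\eqref{eqn:joinasymmetric} treats $R$ and $S$ differently, and the $Y$-side genuinely requires the detour through commutativity up to equivalence (Lemma~\ref{lem:commutativeifconsistent}), which is why (b) must appear earlier in the cycle than (d). A secondary point of care is making sure that ``$R'$ and $S'$ are consistent'' in~(c) really is equivalent to agreement on the common attributes, which is where positivity (collapsing $\equiv$ to equality over $\mathbb{B}$) and the support lemmas~\ref{lem:easyfacts1}(2) and~\ref{lem:supportsjoin} enter.
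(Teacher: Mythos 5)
Your proof is correct and follows essentially the same route as the paper's: the cycle (a)$\Rightarrow$(b)$\Rightarrow$(c)$\Rightarrow$(d)$\Rightarrow$(a) using Lemmas~\ref{lem:easyfacts1}, \ref{lem:easyfacts2}, \ref{lem:commutativeifconsistent}, \ref{lem:supportsjoin}, and~\ref{lem:semijoinlike}, with the $Y$-marginal handled via commutativity up to equivalence exactly as in the paper. The extra care you take in (c)$\Rightarrow$(d) with tuples outside $R' \ltimes S'$ (showing $(R \Join S)[X] = c^*_S R$ on all tuples) just makes explicit a detail the paper leaves implicit, and is fine.
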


\begin{proof} Write~$Z = X \cap Y$. For~(a) implies~(b), let~$T$
  witness that~$R$ and~$S$ are consistent, so~$R \equiv T[X]$
  and~$S \equiv T[Y]$. Then, by Part~3 of Lemma~\ref{lem:easyfacts2},
  we have~$R[Z] \equiv T[X][Z]$ and~$S[Z] \equiv T[Y][Z]$. Since by
  Part~3 of Lemma~\ref{lem:easyfacts1} we also
  have~$T[X][Z] = T[Z] = T[Y][Z]$, we get~$R[Z] \equiv S[Z]$, as was
  to be shown. For~(b) implies~(c) first apply Part~2 of
  Lemma~\ref{lem:easyfacts1} followed by Part~2 of
  Lemma~\ref{lem:easyfacts2} to conclude
  that~$R'[X \cap Y] = S'[X \cap Y]$ and hence that~$R'$ and~$S'$ are
  consistent as ordinary relations. By
  Lemma~\ref{lem:commutativeifconsistent} we also
  have~$R \Join S \equiv S \Join R$. For~(c) implies~(d) first note
  that the consistency of~$R'$ and~$S'$ implies
  that~$R' = R' \ltimes S'$ and~$S' = S' \ltimes R'$. Thus,
  Lemma~\ref{lem:semijoinlike} gives~$R \equiv (R \Join S)[X]$
  and~$S \equiv (S \Join R)[Y]$. Together with the assumption
  that~$R \Join S \equiv S \Join R$ this also
  gives~$S \equiv (R \Join S)[Y]$ by Part~3 of
  Lemma~\ref{lem:easyfacts2}. That~(d) implies~(a) is direct
  since~(d) says that~$R \Join S$ witnesses the consistency of~$R$
  and~$S$.
\end{proof}

\subsection{Justification of the Join of Two $K$-Relations}

In this section, we address the question whether the join operation on
two relations that we defined in Section~\ref{sec:tworelations} is
well motivated. For the rest of this section, fix a finite set of
attributes and let~$\tuples$ denote the set of all tuples over these
attributes, which we assume is a computable set through the
appropriate encodings. We also assume that the positive
semiring~$K$ is a computable structure in the sense that the elements
of its domain admit a computable presentation that makes its
operations be computable functions. The bag semiring~$\mathbb{N}$, as
well as the semiring~$\mathbb{Q}^{\geq 0}$ of non-negative rationals
and many others, are of course computable in this sense. Furthermore, we
require the equivalence relation~$\equiv$ to be decidable; in other
words, we require that the following computational problem is
decidable:
\begin{center}
\emph{Given two~$K$-relations~$R$ and~$S$ over the same set,
does~$R \equiv S$ hold?}
\end{center}
We note that for the bag semiring~$\mathbb{N}$, as well as for the
semiring~$\mathbb{Q}^{\geq 0}$ of non-negative rationals, this problem
is very easily decidable, even polynomial-time solvable through what
we call the \emph{ratio test}: first, check
whether~$R' = S'$, and then check  whether~$R(t_1)/S(t_1) =
R(t_2)/S(t_2)$ holds for every two tuples~$t_1$ and~$t_2$ in~$R' =
S'$.

\paragraph{Deciding Consistency Despite the Plethora of Witnesses}
Let~$R$ and~$S$ denote two~$K$-relations on the sets of attributes~$X$
and~$Y$ and consider the following computational problem:
\begin{center}
  \emph{Given two~$K$-relations~$R$ and~$S$, are~$R$ and~$S$
    consistent?}
\end{center}
For an infinite positive semiring~$K$, such as the bag
semiring~$\mathbb{N}$, there is no immediate and a priori reason to
think that this problem is algorithmically solvable. The difficulty is
that in principle there are infinitely many candidate~$K$-relations to
test as witness for consistency, and the arithmetic theory of the
natural numbers is highly undecidable. However, what
Lemma~\ref{lem:characterization} shows is that the two
given~$K$-relations~$R$ and~$S$ are consistent if and only if~the
single, finite and explicitly defined~$K$-relation~given
by~$R \Join S$ witnesses their consistency. Thus, if~$K$ is a semiring
for which the equivalence relation~$\equiv$ is decidable, this can be
checked in finite time and the problem is decidable. In the next
example, we show that, even for bags, the consistency of two bags may very well be witnessed by infinitely
many pairwise inequivalent witnesses.

\begin{example}
  Let~$a$ be a positive integer and let~$R(AB)$,~$S(BC)$
  and~$T_a(ABC)$ be the three bags given by the following multiplicity
  tables, listed alongside the two projections of~$T_a$ on~$AB$
  and~$BC$:
  \begin{center}
  \begin{tabular}{lllllllllllllllllll}
  $R(AB)$\, \# & \; & $S(BC)$ \# & \; & $T_a(ABC)$\, \# & \; & $T_a[AB]$\, \# & \; & $T_a[BC]$\, \# \\
  \;\;\;\, 0 0 : 1 & & \;\;\; 0 0 : 1 & & \;\;\;\; 0 0 0 : $a$ & & \;\;\;\; 0 0 : $a+1$ & & \;\;\;\; 0 0 : $a+1$ \\
  \;\;\;\, 1 0 : 1 & & \;\;\; 0 1 : 1 & & \;\;\;\; 0 0 1 : 1 & & \;\;\;\; 1 0 : $a+1$ & & \;\;\;\; 0 1 : $a+1$ \\
  \;\;\;\,   & & \;\;\;  & & \;\;\;\; 1 0 0 : 1 & & \;\;\;\;  & & \;\;\;\; \\
  \;\;\;\,   & & \;\;\;  & & \;\;\;\; 1 0 1 : $a$ & & \;\;\;\;  & & \;\;\;\;
  \end{tabular}
  \end{center}
  It is evident that~$T_a[AB] = (a+1)R$ and~$T_a[BC] = (a+1)S$,
  but~$T_a \not\equiv T_b$ unless~$a = b$. The conclusion is that
  there are infinitely many different equivalence classes that witness
  the consistency of~$R$ and~$S$.
%
%
\end{example}

\paragraph{Entropy Maximization}
Let us turn our attention again to probability distributions. The
canonical representatives of the equivalence classes are
the~$\mathbb{R}^{\geq 0}$-relations~$T$ that
satisfy~\eqref{eqn:normalizationequation}. We argued already that for
such canonical~$\mathbb{R}^{\geq 0}$-relations we have that~$\equiv$
agrees with~$=$. Therefore, the set of
canonical~$\mathbb{R}^{\geq 0}$-relations~$T$ that witness the
consistency of two given probability distributions~$R(X)$ and~$S(Y)$
can be identified with the set of feasible solutions of a linear
program that has one real variable~$x_t$ representing~$T(t)$ for
each~$XY$-tuple~$t$ in the join of the supports of~$R$ and~$S$:
\begin{equation}
\begin{array}{lll}
\sum_{t: t[X] = r} x_t = R(r) & \;\;\; & \text{ for each } r \in R', \\
\sum_{t: t[Y] = s} x_t = S(r) & \;\;\; & \text{ for each } s \in S', \\
\sum_{t} x_t = 1 & & \\
x_t \geq 0 & & \text{ for each } t \in R' \Join S'. \\
\end{array} \label{eqn:linearprogram}
\end{equation}
The set of probability distributions~$P(XY)$ that witness the
consistency of~$R$ and~$S$ is thus a polytope~$\mathrm{W}(R,S)$ which
is non-empty if and only if~$R$ and~$S$ are consistent. A natural
question to ask is whether there is some particular probability
distribution in this polytope that is better motivated than any
other such probability distribution. For example, following the principle of maximum entropy, we
could ask for the probability distribution that maximizes
\emph{Shannon's Entropy} (see section~2.1. in~\cite{CoverThomas})
among those that witness the consistency, i.e., we want to maximize
\begin{equation}
  \Entropy_P(XY) = -\sum_{xy \in P(XY)'} P(xy) \log_2(P(xy)) \label{eqn:entropy}
\end{equation}
subject to the constraint that $P$ is in $\mathrm{W}(R,S)$.
Since the entropy is a concave function over the probability simplex
(Theorem~2.7.3 in~\cite{CoverThomas}), and since~$\mathrm{W}(R,S)$ is
a bounded polytope and hence a compact subset of~$\mathbb{R}^n$ in
appropriate dimension~$n$ (unless it is empty), the maximum
of~\eqref{eqn:entropy} exists and is achieved at a unique point
in~$\mathrm{W}(R,S)$. In our setting, writing~$Z = X \cap Y$, it is
perhaps more natural to maximize the \emph{conditional entropy}, i.e.,
\begin{equation}
  \Entropy_P(XY|Z) := -\sum_{z \in P(Z)'} P(z) \sum_{xy \in P(XY)'} P(xy|z) \log_2(P(xy|z)), \label{eqn:conditionalentropy}
\end{equation}
where $P(xy|z) := 0$ if $(xy)[Z] \not= z$ or $P(z) = 0$, and
$P(xy|z) := P(xy)/P(z)$ otherwise, with the added convention that
$0\log_2(0) = 0$.
For being a convex combination of concave functions over the
probability simplex, the conditional entropy~$\Entropy_P(XY| Z)$ is
again a concave function of~$P$ ranging over~$\mathrm{W}(R,S)$, which
means that the maximum also exists and is achieved at a unique point
in~$\mathrm{W}(R,S)$. We write~$R \Join_{\mathrm{H}} S$ for the unique
probability distribution in~$\mathrm{W}(R,S)$ that achieves the
maximum of~\eqref{eqn:entropy} and we write~$R \Join_{\mathrm{CH}} S$
for the one that achieves the maximum
of~\eqref{eqn:conditionalentropy}. Note that, a priori, due to the
logarithms in the definition of entropy, the probability
distributions~$R \Join_{\mathrm{H}} S$ and~$R \Join_{\mathrm{CH}} S$
need not even have rational components. Interestingly, as will follow
from the development below, our join operation~$\Join$ applied
to consistent probability distributions coincides with
both~$\Join_{\mathrm{H}}$ and~$\Join_{\mathrm{CH}}$, up to the
equivalence, which means that both~$R \Join_{\mathrm{H}} S$
and~$R \Join_{\mathrm{CH}} S$ \emph{are} indeed rational probability
distributions in case~$R$ and~$S$ are themselves rational.

We argued already in~\eqref{eqn:coincides} that, for probability
distributions~$R$ and~$S$, our join~$R \Join S$ coincides
with~$R \Join_{\mathrm{P}} S$ up to equivalence. Moreover, if~$R$
and~$S$ are consistent, then we have~$R[Z]=S[Z]$ for~$Z := X \cap Y$,
which means that if we write~$r := t[X]$,~$s := t[Y]$,~$u := t[Z]$,
and~$U := R[Z]=S[Z]$, then
\begin{equation}
(R \Join_{\mathrm{P}} S)(t) = R(r)S(s)/U(u). \label{eqn:identity}
\end{equation}
This identity implies that~$R \Join_{\mathrm{P}} S$ is a \emph{product
  extension} of~$R$ and~$S$ in the sense of
Malvestuto~\cite{DBLP:journals/dm/Malvestuto88} (see the first
paragraph of page~73 in \cite{DBLP:journals/dm/Malvestuto88}),
hence~$R \Join_{\mathrm{P}} S$ maximizes entropy as a consequence of
Malvestuto's Theorem~8. We reproduce his short proof for completeness.

\begin{lemma}[\cite{DBLP:journals/dm/Malvestuto88}] \label{lem:maximizesentropy}
If~$R$ and~$S$ are consistent probability distributions, then 
~$R \Join_{\mathrm{P}} S = R \Join_{\mathrm{H}} S$.
\end{lemma}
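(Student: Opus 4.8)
The plan is to prove that the probability distribution $R \Join_{\mathrm{P}} S$, which by~\eqref{eqn:identity} has the product form $(R\Join_{\mathrm{P}} S)(t) = R(r)S(s)/U(u)$ with $r=t[X]$, $s=t[Y]$, $u=t[Z]$ and $U=R[Z]=S[Z]$, is the entropy-maximizing element of the witness polytope $\mathrm{W}(R,S)$. Since the conditional entropy is strictly concave on $\mathrm{W}(R,S)$ and the polytope is compact, the maximum is attained at a unique point, so it suffices to show that $R\Join_{\mathrm{P}} S$ attains at least as high an entropy as any other $P \in \mathrm{W}(R,S)$. First I would fix an arbitrary $P \in \mathrm{W}(R,S)$ and write $Q := R\Join_{\mathrm{P}} S$; note that $P$ and $Q$ have the same marginals on $X$ (namely $R$), on $Y$ (namely $S$), and hence on $Z$ (namely $U$), by Part~3 of Lemma~\ref{lem:easyfacts1}. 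The key structural fact is the factorization $Q(t) = R(r)S(s)/U(u) = U(u)\cdot \big(R(r)/U(u)\big)\cdot\big(S(s)/U(u)\big)$, i.e.\ conditioned on the value $u$ on $Z$, the coordinates on $X\setminus Z$ and $Y\setminus Z$ are independent under $Q$.

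The main tool is the nonnegativity of Kullback--Leibler divergence (Gibbs' inequality): $\KL(P\,\|\,Q) = \sum_t P(t)\log_2(P(t)/Q(t)) \geq 0$, with equality iff $P=Q$. Expanding $\log_2 Q(t) = \log_2 R(r) + \log_2 S(s) - \log_2 U(u)$ inside this sum and collecting terms by the marginals, one gets
\begin{equation}
0 \leq \KL(P\,\|\,Q) = -\Entropy_P(XY) - \sum_r P_X(r)\log_2 R(r) - \sum_s P_Y(s)\log_2 S(s) + \sum_u P_Z(u)\log_2 U(u).
\end{equation}
Because $P_X = R$, $P_Y = S$, $P_Z = U$, the three sums over $r$, $s$, $u$ are exactly $-\Entropy_R(X)$, $-\Entropy_S(Y)$, and $+\Entropy_U(Z)$ respectively, and the very same identities hold with $P$ replaced by $Q$ (since $Q$ has the same marginals) together with the fact that for $Q$ itself $\KL(Q\,\|\,Q)=0$ forces $\Entropy_Q(XY) = \Entropy_R(X)+\Entropy_S(Y)-\Entropy_U(Z)$. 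Substituting, the inequality collapses to $\Entropy_P(XY) \leq \Entropy_Q(XY)$, which is precisely what we want.

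The step I expect to require the most care is the bookkeeping in the expansion: one must be sure that the sums $\sum_t P(t)\log_2 R(t[X])$ genuinely reindex as $\sum_r \big(\sum_{t: t[X]=r} P(t)\big)\log_2 R(r) = \sum_r R(r)\log_2 R(r)$, which uses that $P$ lies in $\mathrm{W}(R,S)$ and hence $\sum_{t:t[X]=r}P(t) = R(r)$ for each $r\in R'$ — and similarly for $S$ and $U$ — and that the supports line up so that no term $\log_2 0$ is multiplied by a nonzero weight (here the support identity $Q' = R'\Join S'$ from Lemma~\ref{lem:supportsjoin} guarantees $P \ll Q$, so the $0\log_2 0 = 0$ convention handles the rest). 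Once this reindexing is in place, the argument is just Gibbs' inequality plus the observation that the ``cross'' terms depend only on the marginals, which $P$ and $Q$ share; uniqueness of the maximizer then follows from the equality case of Gibbs' inequality, recovering $R\Join_{\mathrm{P}} S = R\Join_{\mathrm{H}} S$.
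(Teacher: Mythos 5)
Your proof is correct and follows essentially the same route as the paper's: both rest on the nonnegativity of $\KL(P\,\|\,Q)$ together with the product-form identity~\eqref{eqn:identity} and the fact that every witness in $\mathrm{W}(R,S)$ shares its marginals on $X$, $Y$, $Z$ with $Q = R \Join_{\mathrm{P}} S$, so the cross terms reindex to quantities depending only on $R$, $S$, $U$. Your packaging of those cross terms as $\Entropy_R(X)+\Entropy_S(Y)-\Entropy_U(Z)$ and your use of the equality case of Gibbs' inequality for uniqueness are only cosmetic variations (and note the stated lemma concerns plain entropy, not conditional entropy, and one sign in your displayed identification of the third sum is off, though the final inequality is right).
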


\begin{proof}
  Let~$X$ and~$Y$ be the sets of attributes of~$R$ and~$S$,
  write~$Z = X \cap Y$, and assume that~$R$ and~$S$ are consistent.
  Let~$U := R[Z] = S[Z]$, where the equality follows from the
  assumption that~$R$ and~$S$ are probability distributions that are
  consistent.  Write~$P := R \Join_{\mathrm{H}} S$
  and~$Q := R \Join_{\mathrm{P}} S$.  By~\eqref{eqn:coincides} we
  have~$Q \equiv R \Join S$, so~$Q$ witnesses the consistency of~$R$
  and~$S$ by Lemma~\ref{lem:characterization}. Moreover, by
  design,~$Q$ is a probability distribution, and so are~$R$ and~$S$ by
  assumption, so~$Q[Z] = R[Z] = S[Z] = U$. Since~$P$ is also a
  feasible solution of~\eqref{eqn:linearprogram}, also~$P$ is a
  probability distribution that witnesses the consistency of~$R$
  and~$S$, so~$P[Z] = R[Z] = S[Z] = U$. The conclusion of these is
  that
  \begin{align}
    & P[Z] = Q[Z] = U, \label{eqn:marginalurs1} \\
    & P[X] = Q[X] = R, \label{eqn:marginalurs2} \\
    & P[Y] = Q[Y] = S. \label{eqn:marginalurs3}
  \end{align}
  In particular~$\Entropy_P(XY) \geq
  \Entropy_Q(XY)$ since~$P$
  maximizes~\eqref{eqn:entropy}. We show
  that~$\Entropy_P(XY) \leq \Entropy_Q(XY)$, from which it will
  follow that~$P = Q$ since we argued already that the maximum
  of~\eqref{eqn:entropy} is unique.

  Let~$\KL(P||Q)$ denote the Kullback-Leibler divergence (see
  Section~2.3 in~\cite{CoverThomas}) between two probability
  distributions~$P(X)$ and~$Q(X)$ over the same set of attributes~$X$,
  which is defined as
\begin{equation}
  \KL(P||Q) := \sum_{x \in P'} P(x)\log(P(x)/Q(x)),
\end{equation}
with the conventions that~$0\log(0/q)=0$ and~$p\log(p/0) = \infty$.
The Information Inequality (Theorem~2.6.3 in~\cite{CoverThomas})
states that~$\KL(P||Q) \geq 0$. Therefore
  \begin{align}
    \Entropy_P(XY) & = -\sum_{t \in P'} P(t) \log(P(t)) \leq -\sum_{t \in P'} P(t) \log(Q(t)). \label{eqn:step1}
  \end{align}
  Using~\eqref{eqn:identity}, the right-hand side of~\eqref{eqn:step1}
  equals
  \begin{equation}
    \sum_{t \in P'} P(t) \log(U(t[Z]))) -\sum_{t \in P'}
    P(t)\log(R(t[X])) - \sum_{t \in P'} P(t) \log(S(t[Y]))
    \label{eqn:threeway}
  \end{equation}
  Splitting the set of tuples $t$ in $P'$ by $t[Z]$, the first term
  in~\eqref{eqn:threeway} rewrites into
  \begin{align}
    \sum_{u \in U'} \sum_{t \in P' : \atop t[Z]=u} P(t) \log(U(u)) = \sum_{u \in U'} \log(U(u)) P(u)
    = \sum_{u \in U'} \log(U(u)) Q(u)
    \label{eqn:turu1}
  \end{align}
  where the first equality follows
  from~\eqref{eqn:marginal}, and the second follows
  from~\eqref{eqn:marginalurs1}. Exactly the same argument for the second and third terms
  in~\eqref{eqn:threeway}, and applying~\eqref{eqn:marginal} to $Q(u)$, $Q(r)$, and $Q(s)$,
  rewrites~\eqref{eqn:threeway} into
    \begin{equation}
    \sum_{t \in Q'} Q(t)
    \log(U(t[Z]))) - \sum_{t \in Q'} Q(t)\log(R(t[X])) - \sum_{t \in Q'} Q(t) \log(S(t[Y]))
    \label{eqn:threeway2}
    \end{equation}
  and therefore, by~\eqref{eqn:identity}, into
  \begin{equation}
   -\sum_{t \in P'} Q(t) \log(Q(t)) = \Entropy_Q(XY). \label{eqn:stepfinal}
  \end{equation}
  Combining~\eqref{eqn:step1},~\eqref{eqn:threeway},~\eqref{eqn:threeway2},
  and~\eqref{eqn:stepfinal} we get $\Entropy_P(XY) \leq
  \Entropy_Q(XY)$ as was to be shown.
\end{proof}

Next, we show that~$\Join_{\mathrm{P}}$ also maximizes conditional
entropy; it follows that~$R \Join_{\mathrm{CH}} S = R \Join_{\mathrm{H}} S$.

\begin{lemma} \label{lem:maximizesconditionalentropy}  If~$R$ and~$S$ are two consistent probability distributions,  then~
$R \Join_{\mathrm{P}} S = R \Join_{\mathrm{CH}} S$.
\end{lemma}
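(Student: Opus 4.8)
The plan is to reduce the maximization of the conditional entropy to the maximization of the (unconditional) entropy already handled in Lemma~\ref{lem:maximizesentropy}. The bridge is the chain rule for entropy, together with the observation that every witness of the consistency of $R$ and $S$ has one and the same marginal on $Z := X \cap Y$.

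First I would record, exactly as in the proof of Lemma~\ref{lem:maximizesentropy}, that every probability distribution $P$ in the polytope $\mathrm{W}(R,S)$ satisfies $P[X] = R$ and $P[Y] = S$, and hence, by Part~3 of Lemma~\ref{lem:easyfacts1} and the consistency of $R$ and $S$, also $P[Z] = R[Z] = S[Z] =: U$; thus the $Z$-marginal is the \emph{same} distribution $U$ for all feasible $P$, and in particular $P(Z)' = U'$. Next I would carry out the routine computation expanding the definition in~\eqref{eqn:conditionalentropy}: on the support one has $P(xy|z) = P(xy)/P(z)$, so $\log_2 P(xy|z) = \log_2 P(xy) - \log_2 P(z)$; after interchanging the two sums and using $\sum_{xy \in P' : (xy)[Z]=z} P(xy) = P(z)$ from~\eqref{eqn:marginal}, this yields the chain-rule identity
\[
  \Entropy_P(XY|Z) = \Entropy_P(XY) - \Entropy_P(Z),
\]
valid for every $P \in \mathrm{W}(R,S)$, where $\Entropy_P(Z) = -\sum_{u \in U'} U(u)\log_2 U(u)$ depends only on $U$, hence only on $R$ and $S$ and not on the particular witness $P$. (The degenerate cases — tuples or marginals of value $0$ — all drop out thanks to the stated conventions $0\log_2 0 = 0$.)

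Finally, since $\Entropy_P(Z)$ is constant over $\mathrm{W}(R,S)$, the maps $P \mapsto \Entropy_P(XY)$ and $P \mapsto \Entropy_P(XY|Z)$ differ on $\mathrm{W}(R,S)$ by an additive constant, so they attain their maxima at the same point; this point is unique by the concavity remarks recorded above, whence $R \Join_{\mathrm{CH}} S = R \Join_{\mathrm{H}} S$. Combining this with Lemma~\ref{lem:maximizesentropy}, which identifies $R \Join_{\mathrm{H}} S$ with $R \Join_{\mathrm{P}} S$, gives the claimed equality $R \Join_{\mathrm{P}} S = R \Join_{\mathrm{CH}} S$. There is no real obstacle here beyond bookkeeping; the only point that genuinely needs the hypotheses is that all feasible $P$ share the marginal $U$ on $Z$, which is what makes $\Entropy_P(Z)$ a constant and lets the chain rule do the work.
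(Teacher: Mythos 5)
Your proof is correct, but it takes a genuinely different route from the paper's. You reduce conditional-entropy maximization to plain entropy maximization via the chain rule $\Entropy_P(XY|Z) = \Entropy_P(XY) - \Entropy_P(Z)$, combined with the observation that every feasible $P \in \mathrm{W}(R,S)$ has the same $Z$-marginal $U = R[Z] = S[Z]$, so that $\Entropy_P(Z)$ is constant on the polytope; the two objectives then differ by an additive constant, their unique maximizers coincide, and Lemma~\ref{lem:maximizesentropy} finishes the argument. The paper instead argues directly and does not invoke Lemma~\ref{lem:maximizesentropy}: writing $P := R\Join_{\mathrm{CH}} S$ and $Q := R\Join_{\mathrm{P}} S$, it decomposes $\Entropy_P(XY|Z) = \sum_{u\in U'} U(u)\,\Entropy_{P_u}(X_0Y_0)$ with $X_0 := X\setminus Z$ and $Y_0 := Y\setminus Z$, uses~\eqref{eqn:identity} to show that each conditional $Q_u$ is the product of its marginals $R_u$ and $S_u$, notes that $P_u$ and $Q_u$ share those marginals since $P[X]=Q[X]=R$ and $P[Y]=Q[Y]=S$, and then applies the Information Inequality to get $\Entropy_{P_u}(X_0Y_0) \leq \Entropy_{Q_u}(X_0Y_0)$ for every $u$, whence $P = Q$ by uniqueness of the maximizer. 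Your argument is shorter and more elementary (one marginal computation plus the chain rule), and it delivers $R\Join_{\mathrm{H}} S = R\Join_{\mathrm{CH}} S$ --- the content of the paper's subsequent corollary --- as an immediate byproduct; on the other hand it leans on Lemma~\ref{lem:maximizesentropy}, whereas the paper's proof is self-contained and isolates the structural fact that $R\Join_{\mathrm{P}} S$ makes $X_0$ and $Y_0$ conditionally independent given $Z$, which is of independent interest. Both proofs rely legitimately on the uniqueness of the maximizers, which the paper records via concavity before the lemma statements.
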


\begin{proof}
  Let~$X$ and~$Y$ be the sets of attributes of~$R$ and~$S$,
  write~$Z = X \cap Y$, and assume that~$R$ and~$S$ are consistent.
  Let~$U := R[Z] = S[Z]$, where the equality follows from the
  assumption that~$R$ and~$S$ are probability distributions that are
  consistent.  Write~$P := R \Join_{\mathrm{CH}} S$
  and~$Q := R \Join_{\mathrm{P}} S$.  By~\eqref{eqn:coincides} we
  have~$Q \equiv R \Join S$, so~$Q$ witnesses the consistency of~$R$
  and~$S$ by Lemma~\ref{lem:characterization}. Moreover, by
  design,~$Q$ is a probability distribution, and so are~$R$ and~$S$ by
  assumption, hence~$Q[Z] = R[Z] = S[Z] = U$. Since~$P$ is also a
  feasible solution of~\eqref{eqn:linearprogram}, also~$P$ is a
  probability distribution that witnesses the consistency of~$R$
  and~$S$, hence~$P[Z] = R[Z] = S[Z] = U$. The conclusion of these is
  that~$P[Z] = Q[Z] = U$ and both~$P$ and~$Q$ are feasible solutions
  of~\eqref{eqn:linearprogram}. In
  particular, $\Entropy_P(XY|Z) \geq \Entropy_Q(XY|Z)$ since~$P$
  maximizes~\eqref{eqn:conditionalentropy}. We show
  that~$\Entropy_P(XY|Z) \leq \Entropy_Q(XY|Z)$, from which it will
  follow that~$P = Q$ since we argued already that the maximum
  of~\eqref{eqn:entropy} is unique.

  We introduce a piece of notation. Let~$X_0 := X\setminus Z$
  and~$Y_0 := Y\setminus Z$. For each~$Z$-tuple~$u \in U'$ we
  write~$P_u$ and~$Q_u$ to denote the probability distributions
  over~$X_0Y_0$ defined by~$P_z(w) := P(wu)/P(u)$
  and~$Q_z(w) := Q(wu)/Q(u)$ for every~$X_0Y_0$-tuple~$w$. Using the
  obvious fact that if~$D(X)$ is a probability distribution over~$X$
  and~$Z \subseteq Y \subseteq X$
  then~$\Entropy_D(Z) = \Entropy_{D(Y)}(Z)$, we have
  \begin{align}
  & \Entropy_P(XY|Z) = \textstyle{\sum_{u \in U'} U(u) \Entropy_{P_u(X_0Y_0)}(X_0Y_0)}, \\
  & \Entropy_Q(XY|Z) = \textstyle{\sum_{u \in U'} U(u) \Entropy_{Q_u(X_0Y_0)}(X_0Y_0)}.
  \end{align}
  Thus, to prove that~$\Entropy_P(XY|Z) \leq \Entropy_Q(XY|Z)$ it
  suffices to show
  that~$\Entropy_{P_u}(X_0Y_0) \leq \Entropy_{Q_u}(X_0Y_0)$ for
  each~$u \in U'$. Now note that for every~$X_0$-tuple~$r_0$ and
  every~$Y_0$-tuple~$s_0$, and every~$u \in U'$, we have
  \begin{align}
    Q_u(r_0s_0) & = Q(r_0s_0u)/Q(u) = R(r_0u)S(s_0u)/U(u)^2 = R_u(r_0) S_u(s_0), \label{eqn:firstthis}
  \end{align}
  where the first follows from the definition of~$Q_u$, the second
  from~\eqref{eqn:identity} and~$Q(u)=U(u)$, and the third follows
  from setting~$R_u(r_0) := R(r_0u)/R(u)$
  and~$S_u(s_0) := S(s_0u)/S(u)$ and the fact that~$R(u)=S(u)=U(u)$.
  Now recall that~$P[X] = Q[X] = R$, so~$P_u[X_0] = Q_u[X_0] = R_u$
  for every~$u \in U'$, and also~$P[Y] = Q[Y] = S$,
  so~$P_u[Y_0] = Q_u[Y_0] = S_u$ for every~$u \in U'$.  The conclusion
  is that the marginals of~$P_u$ and~$Q_u$ agree, and those of~$Q_u$
  are independent by~\eqref{eqn:firstthis}.  It follows that
  \begin{align}
    \Entropy_{P_u}(X_0Y_0) & \leq \Entropy_{P_u(X_0)}(X_0) + \Entropy_{P_u(Y_0)}(Y_0) \\
    & = \Entropy_{Q_u(X_0)}(X_0) + \Entropy_{Q_u(Y_0)}(Y_0) \\
    & = \Entropy_{Q_u}(X_0Y_0)
  \end{align}
  where the first follows
  from~$\KL(P_u(X_0Y_0)||P_u(X_0)P_u(Y_0)) \geq 0$ by the Information
  Inequality (Theorem~2.6.3 in~\cite{CoverThomas}), the second follows
  from equal marginals, and the third follows from the fact
  that~$\KL(Q_u(X_0Y_0)||Q_u(X_0)Q_u(Y_0)) \geq 0$ holds with equality
  if and only if the marginals~$Q_u(X_0)$ and~$Q_u(Y_0)$ are
  independent (see again Theorem~2.6.3 in~\cite{CoverThomas}), which
  we argued is the case for~$Q_u$.
\end{proof}

The following result is an  immediate consequence of Lemmas~\ref{lem:maximizesentropy}
and~\ref{lem:maximizesconditionalentropy}.

\begin{corollary}
  Let~$R(X)$ and~$S(Y)$ be probability distributions, and
  let~$Z = X \cap Y$. If~$R$ and~$S$ are consistent, then the
  probability distributions~$P$ and~$Q$ among those
  in~$\mathrm{W}(R,S)$ that maximize entropy~$\Entropy_P(XY)$ and
  conditional entropy~$\Entropy_Q(XY|Z)$ are equal. Moreover, if~$R$
  and~$S$ are rational, then~$P$ and~$Q$ are rational.
\end{corollary}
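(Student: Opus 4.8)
The plan is to obtain the corollary as a two-line deduction from Lemmas~\ref{lem:maximizesentropy} and~\ref{lem:maximizesconditionalentropy}, with one small additional observation for the rationality claim. First I would recall the setup: since $R$ and $S$ are consistent, the polytope $\mathrm{W}(R,S)$ of probability distributions over $XY$ that witness their consistency is nonempty, and by the concavity arguments already given, $\mathrm{W}(R,S)$ contains a \emph{unique} maximizer $P := R \Join_{\mathrm{H}} S$ of the entropy $\Entropy_P(XY)$ and a \emph{unique} maximizer $Q := R \Join_{\mathrm{CH}} S$ of the conditional entropy $\Entropy_Q(XY|Z)$; so $P$ and $Q$ are well-defined. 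The equality $P = Q$ is then immediate: Lemma~\ref{lem:maximizesentropy} gives $R \Join_{\mathrm{P}} S = R \Join_{\mathrm{H}} S = P$ and Lemma~\ref{lem:maximizesconditionalentropy} gives $R \Join_{\mathrm{P}} S = R \Join_{\mathrm{CH}} S = Q$, and chaining these two identities yields $P = Q = R \Join_{\mathrm{P}} S$. Both lemmas require $R$ and $S$ to be consistent probability distributions, which is exactly our hypothesis, so the invocation is legitimate.

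For the rationality claim I would use the explicit closed form of $\Join_{\mathrm{P}}$. Because $R$ and $S$ are consistent probability distributions, $R[Z] = S[Z] =: U$ (as noted in the proof of Lemma~\ref{lem:maximizesentropy}), and by~\eqref{eqn:identity} we have $(R \Join_{\mathrm{P}} S)(t) = R(t[X]) S(t[Y]) / U(t[Z])$ for every $XY$-tuple $t$, with the convention $0/0 = 0$. If $R$ and $S$ take values in $\mathbb{Q}^{\geq 0}$, then so does $U$, since each of its values is a finite sum of values of $R$ (equivalently, of $S$); hence each value $(R \Join_{\mathrm{P}} S)(t)$ is obtained from rationals by multiplication and division and is therefore rational. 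Consequently $P = Q = R \Join_{\mathrm{P}} S$ is a rational probability distribution whenever $R$ and $S$ are, which is the second assertion.

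Since the whole argument is a short deduction from already-established results, I do not anticipate any genuine obstacle. The only points that need care are (i) invoking the consistency hypothesis, which is what makes $\mathrm{W}(R,S)$ nonempty and hence $P$, $Q$ well-defined, and what makes $R[Z] = S[Z]$ hold so that the clean form~\eqref{eqn:identity} of $\Join_{\mathrm{P}}$ is available and both lemmas apply; and (ii) stating explicitly that the maximizers are unique, so that ``$P = Q$'' is an honest equality of distributions rather than merely an equality of optimal values.
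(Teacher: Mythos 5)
Your proposal is correct and matches the paper's own route: the corollary is stated there as an immediate consequence of Lemmas~\ref{lem:maximizesentropy} and~\ref{lem:maximizesconditionalentropy}, both maximizers being identified with $R \Join_{\mathrm{P}} S$, and the rationality claim follows, exactly as you argue, from the explicit formula~\eqref{eqn:identity} involving only sums, products, and divisions of rational values. Your added care about uniqueness of the maximizers and nonemptiness of $\mathrm{W}(R,S)$ is consistent with what the paper establishes before the corollary.
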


Summarizing, we have proved that whenever~$R$ and~$S$ are consistent
probability distributions we
have~$R \Join S \equiv R \Join_{\mathrm{P}} S$ and
$R \Join_{\mathrm{P}} S = R \Join_{\mathrm{H}} S = R
\Join_{\mathrm{CH}} S$, which we view as evidence that our definition
of $\Join$ is well motivated.

\paragraph{Lossless Join Decompositions}

We provide further justification for the definition of the join of
two~$K$-relations by showing that a decomposed~$K$-relation can be
reconstructed (up to equivalence) by joining its decomposed parts,
under the same hypothesis that makes it possible to reconstruct an
ordinary relation by joining its decomposed parts. This justification
for the~$\Join$ operation is valid for an arbitrary positive semiring $K$.

Let~$U$ be a set of attributes, let~$P$ be an ordinary relation
over~$U$, and let~$V,W$ and~$X,Y$ be pairs of subsets of~$U$. We say
that~$P$ \emph{satisfies the functional dependency}~$V \rightarrow W$
if whenever two tuples in~$P$ agree on all attributes in~$V$, then
they also agree on all attributes in~$W$.  The \emph{decomposition
  of~$P$ along~$X$ and~$Y$} consists of the projections~$R = P[X]$
and~$S= P[Y]$ of~$P$ on the sets~$X$ and~$Y$, respectively.  Such a
decomposition is said to be a \emph{lossless-join} decomposition
if~$P=R \Join S$, that is, the relation~$P$ can be reconstructed by
joining the parts~$R = P[X]$ and~$S= P[Y]$ of the decomposition.

The following lemma gives a sufficient condition for a decomposition
to be a lossless join one. Even though this lemma is standard
textbook material, we include a proof for completeness and comparison
with what is to follow.

\begin{lemma} \label{prop:lossless} Let~$P$ be an ordinary
  relation that is decomposed along~$X$ and~$Y$. If~$P$ satisfies the
  functional dependency~$X\cap Y \rightarrow X\setminus Y$ or~$P$
  satisfies the functional
  dependency~$X\cap Y \rightarrow Y \setminus X$, then this
  decomposition is lossless-join.
\end{lemma}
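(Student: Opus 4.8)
The plan is to establish the two inclusions $P \subseteq R \Join S$ and $R \Join S \subseteq P$ separately, writing $R = P[X]$, $S = P[Y]$, $Z = X \cap Y$, and $U = X \cup Y$ for the ambient schema over which both $P$ and $R \Join S$ live (this is the implicit convention behind ``decomposition along $X$ and $Y$'', and it is what makes the equality $P = R \Join S$ meaningful in the first place).

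First I would observe that the inclusion $P \subseteq R \Join S$ holds with no hypothesis whatsoever: given $t \in P$, we have $t[X] \in P[X] = R$ and $t[Y] \in P[Y] = S$, so $t$ belongs to $R \Join S$ directly from the definition of the ordinary join of two relations.

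The content of the lemma is the reverse inclusion $R \Join S \subseteq P$, and this is where exactly one of the two functional dependencies enters. Assume $P$ satisfies $X \cap Y \rightarrow X \setminus Y$; the case of $X \cap Y \rightarrow Y \setminus X$ is identical after interchanging the roles of $X$ and $Y$. Let $t \in R \Join S$. By the definition of $\Join$ there are tuples $p, q \in P$ with $p[X] = t[X]$ and $q[Y] = t[Y]$. Since $Z \subseteq X$ and $Z \subseteq Y$, we get $p[Z] = p[X][Z] = t[Z]$ and $q[Z] = q[Y][Z] = t[Z]$, so $p$ and $q$ agree on every attribute of $Z$. The functional dependency $Z \rightarrow X \setminus Y$ then forces $q[X \setminus Y] = p[X \setminus Y] = t[X \setminus Y]$. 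Using the decomposition $X = (X \setminus Y) \cup (X \cap Y)$ together with the already noted fact $q[Z] = t[Z]$, this yields $q[X] = t[X]$; combined with $q[Y] = t[Y]$ and $X \cup Y = U$, we conclude $q = t$, hence $t \in P$.

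I do not expect a genuine obstacle: the argument is a routine manipulation of tuple projections. The only two points that require a little care are the implicit hypothesis that the decomposition is taken over $U = X \cup Y$ (so that $R \Join S$ and $P$ share a schema), and the use of $X = (X \setminus Y) \cup (X \cap Y)$, which is precisely what allows the functional dependency on $X \cap Y$ to propagate the agreement of $p$ and $q$ from $X \cap Y$ to all of $X$ and thereby pin down $q = t$.
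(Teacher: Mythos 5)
Your proof is correct and follows essentially the same route as the paper's: the trivial inclusion $P \subseteq R \Join S$, then for the reverse inclusion picking witnesses $p,q \in P$ for $t[X]$ and $t[Y]$, using the functional dependency on $X \cap Y$ to force agreement on $X \setminus Y$, and concluding $t = q \in P$. You merely spell out more explicitly the step showing $q[X] = t[X]$ via $X = (X\setminus Y)\cup(X\cap Y)$, which the paper leaves implicit.
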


\begin{proof}
  From the definitions, it follows that we always
  have~$P \subseteq R\Join S$.  Assume that~$P$ satisfies the
  functional dependency~$X\cap Y \rightarrow X\setminus Y$ (the other
  case is proved using a symmetric argument).  We will show
  that~$R\Join S \subseteq P$. Let~$t$ be a tuple in~$R\Join S$. It
  follows that~$t[X] \in R = P[X]$ and~$t[Y] \in S = P[Y]$. Therefore,
  there are tuples~$t_1$ and~$t_2$ in~$P$, such that~$t[X]=t_1[X]$
  and~$t[Y] = t_2[Y]$. Since~$P$ satisfies the functional
  dependency~$X\cap Y \rightarrow X\setminus Y$ and
  since~$t_1[X\cap Y] = t[X\cap Y] = t_2[X \cap Y]$, we must have
  that~$t_1[X\setminus Y]= t_2[X\setminus Y]$. Since~$t[X] = t_1[X]$
  and~$t[Y] = t_2[Y]$, it follows that~$t = t_2$, hence~$t \in P$;
  this completes the proof that~$R\Join S \subseteq P$.
 \end{proof}

 It is easy to see that there are lossless-join decompositions of
 relations that satisfy neither the functional
 dependencies~$X\cap Y \rightarrow X\setminus Y$ nor the functional
 dependency~$X\cap Y \rightarrow Y \setminus X$. Thus,
 Lemma~\ref{prop:lossless} is a sufficient, but not necessary,
 condition for a decomposition to be a lossless join one. The
 condition, however, is necessary and sufficient for relations over a
 schema that satisfy a set functional dependencies. To make this
 statement precise, we recall a basic definition from relational
 databases.
 Let~$U$ be a set of attributes, let~$F$ be a set of functional
 dependencies between subsets of~$U$, and let~$V\rightarrow W$ be a
 functional dependency.  We say that~$F$ \emph{logically
   implies}~$V\rightarrow W$, denoted~$F\models V\rightarrow W$ if
 whenever a relation~$R$ satisfies every functional dependency in~$F$,
 then~$R$ also satisfies~$V\rightarrow W$.  The following is a well
 known result in relational database theory (see,~e.g.,~Theorem~7.5
 in~\cite{DBLP:books/cs/Ullman88}).

 \begin{theorem} \label{lossless-join:thm} Let~$U$ be a set of
   attributes, let~$F$ be a set of functional dependencies between
   subsets of~$U$, and let~$X$ and~$Y$ are two subsets of~$U$. Then
   the following statements are equivalent:
\begin{enumerate}
\item[(a)] $F\models X\cap Y \rightarrow X\setminus Y$
  or~$F\models X\cap Y\rightarrow Y\setminus X$.
\item[(b)] For every relation~$R$ over~$U$ that satisfies every
  functional dependency in~$F$, it holds that if~$R$ is decomposed
  along~$X$ and~$Y$, then this decomposition is a lossless-join one.
    \end{enumerate}
    \end{theorem}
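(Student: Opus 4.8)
The plan is to prove the two implications separately. The forward direction (a)$\Rightarrow$(b) is essentially a restatement of Lemma~\ref{prop:lossless}: writing $Z=X\cap Y$, suppose first that $F\models Z\rightarrow X\setminus Y$, and let $R$ be any relation over $U$ satisfying every functional dependency in $F$. By the definition of logical implication, $R$ itself satisfies $Z\rightarrow X\setminus Y$, so Lemma~\ref{prop:lossless} applies and the decomposition of $R$ along $X$ and $Y$ is lossless-join. The case $F\models Z\rightarrow Y\setminus X$ is handled by the symmetric clause of Lemma~\ref{prop:lossless}. Since (a) is the disjunction of these two hypotheses, (b) follows in either case.

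For (b)$\Rightarrow$(a) I would argue the contrapositive, producing an explicit counterexample relation. Assume $F\not\models Z\rightarrow X\setminus Y$ and $F\not\models Z\rightarrow Y\setminus X$. Let $Z^{+}$ denote the closure of $Z$ under $F$, i.e., the set of attributes $A\in U$ with $F\models Z\rightarrow A$, and recall the standard fact that $F\models Z\rightarrow W$ holds if and only if $W\subseteq Z^{+}$. The two hypotheses then yield attributes $A\in(X\setminus Y)\setminus Z^{+}$ and $B\in(Y\setminus X)\setminus Z^{+}$; in particular $X\setminus Y$ and $Y\setminus X$ are both non-empty. Now take a relation $R=\{t_1,t_2\}$ with exactly two tuples over $U$ such that $t_1$ and $t_2$ agree on an attribute $C$ precisely when $C\in Z^{+}$ (possible as soon as every attribute domain has at least two elements, which one may assume).

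The crux is to check that $R$ satisfies every functional dependency in $F$: given $V\rightarrow W$ in $F$, if $V\not\subseteq Z^{+}$ then $t_1$ and $t_2$ disagree on some attribute of $V$ and the dependency holds vacuously, while if $V\subseteq Z^{+}$ then $F\models Z\rightarrow V$, hence $F\models Z\rightarrow W$ by transitivity, so $W\subseteq Z^{+}$ and $t_1,t_2$ agree on $W$; either way the dependency is satisfied. Since $Z\subseteq Z^{+}$, the tuples $t_1$ and $t_2$ agree on $Z=X\cap Y$, so the tuple $t:=t_1[X]\,t_2[Y]$ over $U=XY$ is well-defined; it lies in $R[X]\Join R[Y]$ because $t[X]=t_1[X]\in R[X]$ and $t[Y]=t_2[Y]\in R[Y]$, yet it differs from $t_1$ on $B$ (since $B\in Y$, so $t[B]=t_2[B]\neq t_1[B]$) and from $t_2$ on $A$ (since $A\in X$, so $t[A]=t_1[A]\neq t_2[A]$), hence $t\notin R$. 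Thus $R\subsetneq R[X]\Join R[Y]$, so the decomposition of $R$ along $X$ and $Y$ is not lossless-join, contradicting (b).

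I expect the main obstacle to lie entirely in the (b)$\Rightarrow$(a) direction, and concretely in the two technical points: verifying that the two-tuple relation satisfies all of $F$, which rests on the closure characterization $F\models Z\rightarrow W\iff W\subseteq Z^{+}$ (equivalently, soundness and completeness of Armstrong's axioms), and confirming that the mixed tuple $t$ genuinely escapes $R$, which is exactly where the non-membership of $A$ and $B$ in $Z^{+}$ is used. The remainder is routine bookkeeping, and the forward direction is immediate from Lemma~\ref{prop:lossless}.
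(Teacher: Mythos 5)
Your proof is correct, but note that the paper itself does not prove Theorem~\ref{lossless-join:thm} at all: it is quoted as a known result with a pointer to Theorem~7.5 of Ullman's textbook, so there is no in-paper argument to compare against. What you supply is essentially the standard textbook proof, and it meshes well with the paper's setup: the direction (a)$\Rightarrow$(b) is, as you say, an immediate application of Lemma~\ref{prop:lossless} (which the paper does prove), and the direction (b)$\Rightarrow$(a) is the classical two-tuple ``Armstrong relation'' counterexample built from the attribute closure $Z^{+}$ of $Z=X\cap Y$. Your verification that the two-tuple relation satisfies $F$ and that the mixed tuple $t_1[X]\,t_2[Y]$ escapes $R$ is sound, and in fact you do not even need completeness of Armstrong's axioms if $Z^{+}$ is taken to be the semantic closure $\{A: F\models Z\rightarrow A\}$, since decomposition and transitivity of $\models$ are immediate from the definition of satisfaction. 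Two conventions are used tacitly and are worth making explicit, since they are also implicit in the paper's (and Ullman's) setting: every attribute domain must have at least two elements (otherwise a singleton-domain attribute is semantically implied by anything, and the closure characterization of $\models$ fails), and $X\cup Y=U$, without which $R$ and $R[X]\Join R[Y]$ are relations over different schemas and the losslessness equation $R=R[X]\Join R[Y]$ could not hold in the first place; your construction of $t$ over $U=XY$ silently invokes the latter. With those standard assumptions stated, the argument is complete.
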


    Our next result tells that Lemma~\ref{prop:lossless} extends
    to decompositions of~$K$-relations, where~$K$ is a positive
    semiring. We first need to extend the notions appropriately.
    If~$P$ is a~$K$-relation, then the \emph{decomposition of~$P$
      along~$X$ and~$Y$} consists of the marginals~$R=P[X]$
    and~$S=P[Y]$. We say that the decomposition is
    \emph{lossless-join} if~$P \equiv R\Join S$, where~$\Join$ is the
    join operation on~$K$-relations.

    \begin{lemma} \label{prop:lossless-gen} Let~$P$ be
      a~$K$-relation that is decomposed along~$X$ and~$Y$. If the
      support~$P'$ of~$P$ satisfies the functional
      dependency~$X\cap Y \rightarrow X\setminus Y$ or~$P'$ satisfies
      the functional dependency~$X\cap Y \rightarrow Y \setminus X$,
      then this decomposition is lossless-join.
    \end{lemma}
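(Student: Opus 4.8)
The plan is to prove the sharper statement that, writing $R=P[X]$, $S=P[Y]$, $Z=X\cap Y$, and $c^*_S:=c^*_{S,Z}$ for the scalar in~\eqref{eqn:cr}, one has the exact identity $R\Join S=c^*_S\cdot P$ as $K$-relations over $X\cup Y$. Since $c^*_S$ is a product of the nonzero values $S[Z](v)$ with $v\in S[Z]'$ and $K$ has no zero-divisors, $c^*_S\neq 0$, so this identity gives $P\equiv R\Join S$, i.e.\ the decomposition is lossless-join. First I would reduce to the case in which $P'$ satisfies $Z\to X\setminus Y$: in the other case $P'$ satisfies $Z\to Y\setminus X$, the symmetric version of the argument below yields $S\Join R=c^*_R\cdot P$, and since $R[Z]=P[X][Z]=P[Z]=P[Y][Z]=S[Z]$ by Part~3 of Lemma~\ref{lem:easyfacts1}, Lemma~\ref{lem:commutativeifconsistent} gives $R\Join S\equiv S\Join R\equiv P$.

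Next I would settle the supports. By Lemma~\ref{lem:supportsjoin}, $(R\Join S)'=R'\Join S'$; by Part~2 of Lemma~\ref{lem:easyfacts1}, $R'=P'[X]$ and $S'=P'[Y]$; and the ordinary lossless-join Lemma~\ref{prop:lossless}, applied to the \emph{ordinary} relation $P'$, which satisfies the functional dependency $Z\to X\setminus Y$, gives $P'[X]\Join P'[Y]=P'$. Hence $R\Join S$ and $P$ have the same support, and the target identity $(R\Join S)(t)=c^*_S\,P(t)$ holds trivially whenever $t\notin P'$.

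The core of the argument is the pointwise computation for $t\in P'$. Writing $r=t[X]$, $s=t[Y]$, $u=t[Z]$, the definition~\eqref{eqn:joinasymmetric} gives $(R\Join S)(t)=R(r)\,S(s)\,c_S(u)$, and the functional dependency enters through two collapses of marginal sums. First, the map $t\mapsto t[Y]$ is injective on $P'$: if two tuples of $P'$ agree on $Y$ they agree on $Z$, hence on $X\setminus Y$ by the dependency, hence on all of $X\cup Y$; therefore the sum defining $S(s)=P[Y](s)$ has the single term $P(t)$, so $S(s)=P(t)$. Second, for $p\in P'$ the condition $p[Z]=u$ forces $p[X\setminus Y]=t[X\setminus Y]$ by the dependency and hence $p[X]=r$, so $\{p\in P':p[X]=r\}=\{p\in P':p[Z]=u\}$, which makes $R(r)=P[X](r)=P[Z](u)$. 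Finally, $u\in S[Z]'$ (indeed $S[Z]'=P'[Z]=W'$ where $W:=P[Z]$) and $S[Z]=P[Y][Z]=P[Z]=W$, so the identity $c^*_{S,Z}=c_S(u)\,S[Z](u)$ noted right after~\eqref{eqn:joinasymmetric} gives $R(r)\,c_S(u)=P[Z](u)\,c_S(u)=S[Z](u)\,c_S(u)=c^*_S$. Combining, $(R\Join S)(t)=R(r)\,S(s)\,c_S(u)=c^*_S\,P(t)$, which is what was to be shown (the empty-$P$ case is subsumed, since then $c^*_S=1$ and both sides vanish).

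The main obstacle is purely bookkeeping: one must be careful to match the \emph{direction} of the functional dependency with the correct asymmetric join ($R\Join S$ versus $S\Join R$) and to verify the two marginal-collapse steps precisely, namely the injectivity of $t\mapsto t[Y]$ on $P'$ and the coincidence of the fibers of $t\mapsto t[X]$ and of $t\mapsto t[Z]$ over $P'$. No step is genuinely difficult; all the content lies in organizing these elementary projection facts together with the already-established behavior of $\Join$ on supports and the cancellation identity for $c^*_S$.
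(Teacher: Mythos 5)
Your proof is correct and follows essentially the same route as the paper: both establish the exact identity $R \Join S = c^*_{S,\,X\cap Y}\,P$ with a nonzero scalar, using Lemma~\ref{lem:supportsjoin} together with the ordinary Lemma~\ref{prop:lossless} to match supports, and the functional dependency to collapse the two marginal sums to $P[Z](u)$ and $P(t)$. The only cosmetic difference is that the paper handles the dependency $X\cap Y \rightarrow Y\setminus X$ by the same direct computation with the roles of the two collapses swapped, whereas you derive that case by symmetry plus $R[Z]=S[Z]$ and Lemma~\ref{lem:commutativeifconsistent}; both are fine.
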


    \begin{proof} For concreteness, let us assume that the attributes
      of~$P$ are~$ABC$ and that~$P$ is decomposed along~$X = AB$
      and~$Y = BC$. The proof remains the same in the general case and
      with only notational changes.

      We will show that~$P \equiv R \Join S$, where~$R=P[X]$
      and~$S = P[Y]$ . In fact, we will show
      that~$R\Join S = c^*_{S,X\cap Y} P$. By Part~2 of
      Lemma~\ref{lem:easyfacts1} we have~$R'=P'[X]$ and~$S'=P'[Y]$. In
      addition, since~$P'$ satisfies the functional
      dependency~$X \cap Y \rightarrow X \setminus Y$ or the
      functional dependency~$X \cap Y \rightarrow Y \setminus X$ we
      have~$P'= R' \Join S'= (R \Join S)'$, where the first equality
      follows from Lemma~\ref{prop:lossless} and the second from
      Lemma~\ref{lem:supportsjoin}.  Let~$(a,b,c)$ be a tuple
      in~$(R\Join S)'$, so in particular~$(a,b,c) \in P'$. By the
      definition of~$R\Join S$, we have
      that~$(R \Join S)(a,b,c) = R(a,b)S(b,c)c_S(b)$. We now examine
      the quantities~$R(a,b)$ and~$S(b,c)$ separately, and for that we
      distinguish by cases.
      \bigskip

      \noindent\textit{Case 1}: The ordinary relation~$P'$ satisfies the
      functional dependency~$X \cap Y \rightarrow X \setminus Y$,
      which, in this case, amounts to~$B\rightarrow A$. For $R(a,b)$
      we have
  \begin{equation}
    R(a,b) = \sum_{c':(a,b,c') \in P'} P(a,b,c') = \sum_{a',c':(a',b,c') \in P'} P(a',b,c') =  P(b),
    \label{eqn:hhg11}
\end{equation}
  where the first follows from~$R = P[X]$ and~\eqref{eqn:marginal},
  the second follows from the fact that, since~$P'$ satisfies the
  functional dependency~$B \rightarrow A$, we must have
  that~$(a',b,c') \in P'$ implies~$a'=a$, and the third follows
  from~\eqref{eqn:marginal}. For~$S(b,c)$ we have
  \begin{equation}
  S(b,c) = \sum_{a':(a',b,c)\in P'}P(a',b,c) = P(a,b,c), \label{eqn:hhg12}
  \end{equation}
  where the first follows from~$S = P[Y]$ and~\eqref{eqn:marginal},
  and the second follows from the fact that, since~$P'$ satisfies the
  functional dependency~$B \rightarrow A$ and~$(a,b,c) \in P'$, we
  must have that~$(a',b,c) \in P'$ holds if and only if~$a' = a$.
  \bigskip

  \noindent \textit{Case 2}: The ordinary relation~$P'$ satisfies the functional
  dependency~$X \cap Y \rightarrow Y \setminus X$, which, in this
  case, amounts to~$B\rightarrow C$.  Using a similar analysis as in
  the previous case, for~$S(b,c)$ we have that
\begin{equation}
S(b,c) = \sum_{a':(a',b,c) \in P'} P(a',b,c) = \sum_{a',c':(a',b,c')} P(a',b,c') = P(b),
\label{eqn:hhg21}
\end{equation}
where the first follows from~$S = P[Y]$ and~\eqref{eqn:marginal}, the
second follows from the fact that, since~$P'$ satisfies the functional
dependency~$B \rightarrow C$, we must have that~$(a',b,c') \in P'$
implies~$c'=c$, and the third follows from~\eqref{eqn:marginal}. For $R(a,b)$ we have
\begin{equation}
R(a,b) = \sum_{c':(a,b,c') \in P'} P(a,b,c') = P(a,b,c)
\label{eqn:hhg22}
\end{equation}
where the first follows from~$R = P[X]$ and~\eqref{eqn:marginal}, and
the second follows from the fact that, since~$P'$ satisfies the
functional dependency~$B \rightarrow C$ and~$(a,b,c) \in P'$ we must
have that~$(a,b,c') \in P'$ holds if and only if~$c'=c$.

In both cases, it follows that
\begin{eqnarray}
  (R \Join S)(a,b,c)  = R(a,b)S(b,c)c_S(b)
  = P(b) P(a,b,c) c_S(b) = c^*_{P,B} P(a,b,c),
\end{eqnarray}
where the first follows from~\eqref{eqn:joinasymmetric}, the second
follows from~\eqref{eqn:hhg11} and~\eqref{eqn:hhg12} in one case, and
from~\eqref{eqn:hhg21} and~\eqref{eqn:hhg22} in the other, and the
last follows from~$c^*_{P,B} = c_S(b) P(b)$ by \eqref{eqn:cr}
since~$b \in P[B]' = P'[B]$ given that~$(a,b,c) \in P'$ and Part~2 of
Lemma~\ref{lem:easyfacts1}. This proves
that~$R \Join S = c^*_{P,B} P$, which was to be shown.
\end{proof}

The last result in this section asserts that the preceding Theorem \ref{lossless-join:thm} extends to decompositions of $K$-relations.

\begin{proposition} \label{lossless-join-K:thm} Let $K$ be a positive semiring, let $U$ be a set of
  attributes, let $F$ be a set of functional dependencies between
  subsets of $U$, and let $X$ and $Y$ be two subsets of $U$.  The
  following statements are equivalent:
 \begin{enumerate}  \itemsep=0pt
\item[(a)] $F\models X\cap Y \rightarrow X\setminus Y$ or $F\models X\cap Y\rightarrow Y\setminus X$.
\item[(b)] For every $K$-relation $R$ over $U$ whose support $R'$ satisfies every functional dependency in $F$, it holds that if $R$ is decomposed along $X$ and $Y$, then this decomposition is a lossless-join one.
    \end{enumerate}
    \end{proposition}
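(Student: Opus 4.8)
The plan is to establish the two implications separately, deriving each from results already in place.

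For the direction (a)~$\Rightarrow$~(b): I would take a $K$-relation $R$ over $U$ whose support $R'$ satisfies every functional dependency in $F$, and assume $F \models X\cap Y \rightarrow X\setminus Y$ (the case $F \models X\cap Y \rightarrow Y\setminus X$ being symmetric). By the very definition of logical implication of functional dependencies, $R'$ then satisfies $X\cap Y \rightarrow X\setminus Y$. Hence Lemma~\ref{prop:lossless-gen} applies directly to $R$ and yields that the decomposition of $R$ along $X$ and $Y$ is lossless-join, which is exactly what (b) asserts. So this direction is immediate.

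For the direction (b)~$\Rightarrow$~(a), I would argue the contrapositive. Assume that neither $F \models X\cap Y \rightarrow X\setminus Y$ nor $F \models X\cap Y \rightarrow Y\setminus X$ holds. Then, by the ordinary-relation Theorem~\ref{lossless-join:thm}, there is an ordinary relation $P$ over $U$ that satisfies every functional dependency in $F$ and whose decomposition along $X$ and $Y$ is \emph{not} lossless-join; since one always has $P \subseteq P[X] \Join P[Y]$, this means $P \subsetneq P[X] \Join P[Y]$. Now turn $P$ into a $K$-relation $R$ over $U$ by setting $R(t) := 1$ for $t \in P$ and $R(t) := 0$ for $t \in \tuples(U)\setminus P$; this is well-defined since $K$ is non-trivial, and $R' = P$, so $R'$ satisfies every functional dependency in $F$. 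It remains to check that $R \not\equiv R[X] \Join R[Y]$. Suppose toward a contradiction that $R \equiv R[X] \Join R[Y]$. By Part~2 of Lemma~\ref{lem:easyfacts2} this gives $R' = (R[X]\Join R[Y])'$, while Lemma~\ref{lem:supportsjoin} together with Part~2 of Lemma~\ref{lem:easyfacts1} gives $(R[X]\Join R[Y])' = R[X]' \Join R[Y]' = R'[X] \Join R'[Y] = P[X] \Join P[Y]$. Combining, $P = R' = P[X]\Join P[Y]$, contradicting $P \subsetneq P[X]\Join P[Y]$. Hence the decomposition of $R$ along $X$ and $Y$ is not lossless-join, so (b) fails, completing the contrapositive.

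The only point that genuinely needs care is the passage from the classical counterexample to the $K$-relation one, but this turns out to be painless: the failure of a lossless-join decomposition of an \emph{ordinary} relation already manifests at the level of supports, as the strict inclusion $P \subsetneq P[X]\Join P[Y]$, and since $\equiv$ preserves supports and the support of a join is the join of the supports, that strict inclusion by itself obstructs equivalence. No semiring-specific computation is required, and in particular the same characteristic $K$-relation $R$ works uniformly for every positive semiring $K$. The bulk of the real work for this proposition has therefore been done already in Lemma~\ref{prop:lossless-gen} (for one direction) and in Theorem~\ref{lossless-join:thm} plus the support lemmas (for the other).
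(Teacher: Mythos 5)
Your proposal is correct and takes essentially the same approach as the paper: the direction (a)$\Rightarrow$(b) is the identical appeal to Lemma~\ref{prop:lossless-gen}, and for (b)$\Rightarrow$(a) you use the same key ingredients --- the characteristic $K$-relation of an ordinary relation $P$, the support lemmas (Lemma~\ref{lem:supportsjoin} together with Part~2 of Lemmas~\ref{lem:easyfacts1} and~\ref{lem:easyfacts2}), and Theorem~\ref{lossless-join:thm} --- merely organized as a contrapositive, whereas the paper argues directly by transferring the losslessness of the $K$-relation decomposition back to $P = P[X] \Join P[Y]$ and then invoking Theorem~\ref{lossless-join:thm}; this is a presentational rather than substantive difference.
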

\begin{proof}

  First, assume that~$F\models X\cap Y \rightarrow X\setminus Y$
  or~$F\models X\cap Y\rightarrow Y\setminus X$. Let~$R$ be a be
  a~$K$-relation whose support~$R'$ satisfies every functional
  dependency in~$F$. Then~$R'$ satisfies the functional
  dependency~$X \cap Y \rightarrow X \setminus Y$ or~$R'$ satisfies
  the functional dependency~$X \cap Y \rightarrow Y \setminus X$. By
  Lemma~\ref{prop:lossless-gen}, the decomposition of~$R$ along~$X$
  and~$Y$ is lossless join.

  Next, assume that for every~$K$-relation~$R$ over~$U$ whose
  support~$R'$ satisfies every functional dependency in~$F$, it holds
  that if~$R$ is decomposed along~$X$ and~$Y$, then this decomposition
  is a lossless-join one.  Let~$P$ be an arbitrary ordinary relation
  over~$U$ that satisfies every functional dependency in~$F$. We will
  show that if~$P$ is decomposed along~$X$ and~$Y$, then the
  decomposition is a lossless join one, hence, by Theorem
  \ref{lossless-join:thm}, we have
  that~$F\models X\cap Y \rightarrow X\setminus Y$
  or~$F\models X\cap Y\rightarrow Y\setminus X$.  Turn~$P$ into
  a~$K$-relation~$R$ whose support is~$P$ and where all tuples in the
  support have value~$1$ in~$K$.  In other words, consider
  the~$K$-relation~$R$ such that for every tuple~$t$, we have
  that~$R(t)=1$ if~$t \in P$, and~$R(t) = 0$ if~$t\not \in P$. By
  hypothesis, the decomposition of~$R$ along~$X$ and~$Y$ is a lossless
  join one. Therefore,~$R \equiv R[X] \Join R[Y]$, as~$K$-relations.
  By Lemma~\ref{lem:supportsjoin} and Part~2 of
  Lemma~\ref{lem:easyfacts2}, we have that~$R' = R[X]' \Join R[Y]'$,
  i.e., the support of the join is the join of the supports as
  ordinary relations. From the definition of~$R$, we have that~$R'=P$;
  moreover, from Part~2 of Lemma \ref{lem:easyfacts1}, we have
  that~$R[X]'= R'[X]$ and~$R[Y]'= R'[Y]$. Since~$R'[X] = P[X]$
  and~$R'[Y]= P[Y]$, we conclude that~$P = P[X]\Join P[Y]$, thus the
  decomposition of~$P$ along~$X$ and~$Y$ is a lossless join one, which
  was to be shown.
\end{proof}

\section{Consistency of Three or More $K$-Relations} \label{sec:cons-more-than-two}

While the definition of consistency of two~$K$-relations has a
straightforward generalization to the case of three or
more~$K$-relations, not all the related concepts will go through: the
join of three or more~$K$-relations will be particularly
problematic. We start with the definitions.

Let~$K$ be a positive semiring and let~$R_1(X_1),\ldots,R_m(X_m)$
be~$K$-relations. We say that the collection~$R_1,\ldots,R_m$ is
\emph{globally consistent} if there is a~$K$-relation~$T$
over~$X_1 \cup \cdots \cup X_m$ such that~$R_i \equiv T[X_i]$ for
all~$i \in [m]$.  We say that such a~$K$-relation \emph{witnesses} the
global consistency of~$R_1,\ldots,R_m$.  The equivalence
classes~$[R_1],\ldots,[R_m]$ are called \emph{globally consistent} if
their representatives~$R_1,\ldots,R_m$ are globally consistent. As in
the case of two equivalence classes, it is easy to see using
transitivity of~$\equiv$ that this notion is well-defined in that it
does not depend on the chosen representatives.

We also say that the relations~$R_1,\ldots,R_m$ are \emph{pairwise
  consistent} if for every~$i,j \in [m]$ we have that~$R_i[X_i]$
and~$R_j[X_j]$ are consistent. From the definitions, it follows that
if~$R_1,\ldots,R_m$ are globally consistent, then they are also
pairwise consistent. The converse, however, need not be true, in
general. In fact, the converse fails even for ordinary relations, that
is, for~$\mathbb{B}$-relations, where~$\mathbb{B}$ is the Boolean
semiring.
For example, it is easy to see that the ordinary relations~$R(AB) =
\{00, 11\}$,~$S(BC) = \{01, 10\}$,~$T(AC)=
\{00,11\}$ are pairwise consistent but not globally consistent.

In the context of relational databases, there has been an extensive
study of global consistency for ordinary relations. We present an
overview of some of the main findings next.

\subsection{Global Consistency in the Boolean Semiring} \label{subsec:cons-boole}

For this section~$K$ is the Boolean semiring~$\mathbb{B}$ and
therefore~$K$-relations are ordinary relations or, simply,
\emph{relations}.  Let~$R_1,\ldots, R_m$ be a collection of
relations. The \emph{relational join} or, simply, the~\emph{join}
of~$R_1,\dots,R_m$ is the relation~$R_1\Join \cdots \Join R_m$
consisting of all~$(X_1\cup \cdots \cup X_m)$-tuples~$t$ such
that~$t[X_i]$ belongs to~$R_i$ for all~$i = 1,\ldots,m$. The following
facts are well known and easy to prove (e.g., see
\cite{DBLP:journals/ipl/HoneymanLY80}):
\begin{itemize} \itemsep=0pt
\item If $T$ is a relation witnessing the global consistency of  $R_1,\ldots,R_m$, then $T \subseteq R_1\Join \cdots \Join R_m$.
\item The collection~$R_1,\ldots,R_m$ is
  globally consistent if and only
  if~$(R_1\Join \cdots \Join R_m)[X_i]= R_i$ for all~$i = 1,\ldots,m$.
\end{itemize}
Consequently, if the collection $R_1,\ldots,R_m$ is globally
consistent, then the join $R_1\Join \cdots \Join R_m$ is the largest
relation witnessing their consistency.


As seen earlier, pairwise consistency is a necessary, but not
sufficient, condition for global consistency. This was exemplified by
three relations~$R,S,T$ with schema~$AB$,~$BC$,~$AC$, respectively. In
contrast, it is not hard to see that if the schema of the three
relations had been~$AB$,~$BC$,~$CD$, then pairwise consistency would
have been a necessary and sufficient condition for the global
consistency of any three~$K$-relations over these schema. This raises
the question whether it is possible to characterize the set of schema
for which pairwise consistency is a necessary and sufficient condition
for global consistency. This question was investigated and answered by
Beeri, Fagin, Maier, and Yannakakis
\cite{BeeriFaginMaierYannakakis1983}. Before describing their results,
we need to introduce a number of notions from hypergraph theory.

\paragraph{Hypergraphs}
A \emph{hypergraph} is a pair $H = (V,E)$, where $V$ is a set of
\emph{vertices} and $E$ is a set of \emph{hyperedges}, each of which
is a non-empty subset of $V$.  Clearly, the undirected graphs without
self-loops are precisely the hypergraphs all the hyperedges of which
are two-element sets of vertices; such hyperedges are called
\emph{edges}.

Let~$H = (V,E)$ be a hypergraph. The \emph{reduction} of~$H$, denoted
by~$R(H)$, is the hypergraph whose set of vertices is~$V$ itself and
whose hyperedges are those hyperedges~$X \in E$ that are not included
in any other hyperedge of~$H$. A hypergraph~$H$ is \emph{reduced}
if~$H=R(H)$.  If~$W \subseteq V$, then the \emph{hypergraph induced
  by~$W$ on~$H$}, denoted by~$H[W]$, is the hypergraph whose set of
vertices is~$W$ and whose hyperedges are the non-empty subsets of the
form~$X \cap W$, where~$X \in E$ is a hyperedge of~$H$; in
symbols,~$H[W] = (W, \{X \cap W: X \in E\}\setminus \{\emptyset\})$.

Every collection~$X_1,\ldots,X_m$ of sets of attributes can be
identified with a hypergraph~$H=(V,E)$,
where~$V = X_1\cup \cdots \cup X_m$ and~$E
=\{X_1,\ldots,X_m\}$. Conversely, every hypergraph~$H = (V,E)$ gives
rise to a collection~$X_1,\ldots,X_m$ of sets of attributes,
where~$X_1,\dots,X_m$ are the hyperedges of~$H$. For this reason, we
can move from collections of sets of attributes to hypergraphs (and
vice versa) in a seamless way.  In what follows, we will consider
several structural properties of hypergraphs that, as shown in
\cite{BeeriFaginMaierYannakakis1983}, give rise to necessary and
sufficient conditions for pairwise consistency to coincide with global
consistency.

\paragraph{Acyclic Hypergraphs}
We begin by defining the notion of an acyclic hypergraph, which
generalizes the notion of an acyclic undirected graph with no
self-loops.  For this, we need to introduce several auxiliary
notions. If~$H$ is a hypergraph and~$u$ and $v$ are vertices of~$H$,
then a~\emph{path from~$u$ to~$v$} is a sequence~$Y_1,\ldots,Y_k$ of
hyperedges of~$H$, for some positive integer~$k$, such that~$u\in Y_1$
and~$v\in Y_k$, and~$Y_i\cap Y_{i+1}\neq \emptyset$, for
all~$i \in [k-1]$.  Using the notion of a path, one defines the
notions of a \emph{connected component} of a hypergraph and of a
\emph{connected} hypergraph in the obvious way.

Let~$H=(V,E)$ be a reduced hypergraph and let~$X$ and~$Y$ be two
distinct hyperedges of~$H$ with~$X\cap Y \not= \emptyset$. We say
that~$X\cap Y$ is an \emph{articulation set} of~$H$ if the number of
connected components of the reduction~$R(H[V\setminus (X\cap Y)])$
of~$H[V\setminus (X\cap Y)]$ is greater than the number of the
connected components of~$H$.  We say that a reduced hypergraph~$H$ is
\emph{acyclic} if for every subset~$W$ of vertices of~$H$,
if~$R(H[W])$ is connected and has more than one hyperedges, then it
has an articulation set.  Finally, a hypergraph~$H$ is \emph{acyclic}
if its reduction~$R(H)$ is acyclic.  Otherwise,~$H$ is
\emph{cyclic}. We say that a collection~$X_1,\ldots,X_m$ of sets of
attributes is \emph{acyclic} if the hypergraph with
hyperedges~$X_1,\ldots,X_m$ is acyclic; otherwise, we say that the
collection~$X_1,\ldots,X_m$ is \emph{cyclic}.

To illustrate these concepts, consider the set~$V=\{A_1,\ldots,A_n\}$
of vertices and the sets~$E_1$ and~$E_2$ of hyperedges,
where~$E_1=\{ \{A_i,A_{i+1}\}: 1\leq i\leq n-1\}$
and~$E_2 = E_1 \cup \{\{A_n,A_1\}\}$. It is not hard to verify that
the hypergraph~$H_1=(V,E_1)$ is acyclic, whereas the
hypergraph~$H_2=(V,E_2)$ is cyclic as soon as~$n \geq 3$.

\paragraph{Conformal and Chordal Hypergraphs} The \emph{primal} graph
of a hypergraph~$H = (V,E)$ is the undirected graph that has~$V$ as
its set of vertices and has an edge between any two distinct vertices
that appear together in at least one hyperedge of~$H$. A
hypergraph~$H$ is \emph{conformal} if the set of vertices of every
clique (i.e., complete subgraph) of the primal graph of~$H$ is
contained in some hyperedge of~$H$.  For example, both
hypergraphs~$H_1$ and~$H_2$ above are conformal, whereas the
hypergraph~$H_3=(V,E_3)$ with~$V=\{A_1,\ldots,A_n\}$
and~$E_3= \{ V\setminus \{A_i\}: 1\leq i\leq n\}$ is not conformal as
long as~$n \geq 3$.

A hypergraph~$H$ is \emph{chordal} if its primal graph is chordal,
that is, if every cycle of length at least four of the primal graph
of~$H$ has a chord. For example, the hypergraphs~$H_1$ and~$H_3$ above
are chordal, whereas the hypergraph~$H_2$ is not chordal
for~$n \geq 4$. Observe that, in case~$n = 3$, the hypergraph~$H_2$ is
chordal but not conformal.

\paragraph{Running Intersection Property} We say that a hypergraph~$H$
has the \emph{running intersection property} if there is a
listing~$X_1,\ldots,X_m$ of all hyperedges of~$H$ such that for
every~$i \in [m]$ with~$i \geq 2$, there exists a~$j < i$ such
that~$X_i \cap (X_1 \cup \cdots \cup X_{i-1}) \subseteq X_j$.

For example, the hypergraph~$H_1$ has the running intersection
property with the listing
being~$\{A_1,A_2\}, \ldots, \{A_{n-1},A_n\}$, whereas the
hypergraphs~$H_2$ and~$H_3$ do not have the running intersection
property as long as~$n \geq 3$.

\paragraph{Join Trees} A \emph{join tree} for a hypergraph~$H$ is an
undirected tree~$T$ with the set~$E$ of the hyperedges of~$H$ as its
vertices and such that for every vertex~$v$ of~$H$, the set of
vertices of~$T$ containing~$v$ forms a subtree of~$T$, i.e., if~$v$
belongs to two vertices~$X_i$ and~$X_j$ of~$T$, then~$v$ belongs to
every vertex of~$T$ in the unique path from~$X_i$ to~$X_j$ in~$T$.

For example, the hypergraph~$H_1$ has a join tree (in fact, the join
tree is a path) with edges of the
form~$\{\{A_i,A_{i+1}\},\{A_{i+1},A_{i+2}\}\}$ for~$i \in [n-2]$,
whereas the hypergraphs~$H_2$ and~$H_3$ do not have a join tree
for~$n \geq 3$.

\paragraph{Graham's Algorithm} Consider the following iterative
algorithm on hypergraphs: given a hypergraph~$H=(V,E)$, apply the
following two operations repeatedly until neither of the two
operations can be applied:
\begin{enumerate} \itemsep=0pt
\item If~$v$ is a vertex that appears in only one hyperedge~$X_i$
  of~$H$, then delete~$v$ from~$X_i$.
\item If there are two hyperedges~$X_i$ and~$X_j$ such
  that~$i\not = j$ and~$X_i\subseteq X_j$, then delete~$X_i$ from~$E$.
\end{enumerate}
It can be shown that this algorithm has the Church-Rosser property, that
is, it produces the same hypergraph independently of the order in
which the above two operations are applied.

We say that \emph{Graham's algorithm succeeds} on a
hypergraph~$H=(V,E)$ if the algorithm, given~$H$ as input, returns the
empty hypergraph~$H= (V, \emptyset)$ as output. Otherwise, we say that
\emph{Graham's algorithm fails} on~$H$. We also say that~$H$ is
\emph{accepted} by Graham's algorithm if the algorithm succeeds
on~$H$; otherwise we say that it is \emph{rejected}.

For example, Graham's algorithm succeeds on the hypergraph~$H_1$,
whereas it fails on the hypergraphs~$H_2$ and~$H_3$ as long
as~$n \geq 3$ (in fact, it returns~$H_2$ and~$H_3$, respectively).

This algorithm was designed by Graham \cite{Graham79}.  A similar
algorithm was designed by Yu and Ozsoyoglu \cite{yu1979algorithm};
these two algorithms are often referred to as the GYO Algorithm (see
\cite{DBLP:books/aw/AbiteboulHV95}).

\paragraph{Local-to-Global Consistency Property} The notions
introduced so far can be thought of as ``syntactic" or ``structural"
properties that some hypergraphs possess and others do not, because
their definitions involve only the vertices and the hyperedges of the
hypergraph at hand.  In contrast, the next notion is ``semantic", in
the sense that its definition also involves relations whose sets of
attributes are the hyperedges of the hypergraph at hand.

Let~$H$ be a hypergraph and let~$X_1,\dots,X_m$ be a listing of all
hyperedges of~$H$. We say that~$H$ has the \emph{\ltgc~for ordinary
  relations} if every pairwise consistent
collection~$R_1(X_1),\ldots,R_m(X_m)$ of relations of
schema~$X_1,\ldots,X_m$ is globally consistent.

For example, it can be shown that the hypergraph~$H_1$ has the
\ltgc~for ordinary relations, whereas the hypergraphs~$H_2$ and~$H_3$
do not have this property as long as~$n \geq 3$.
\bigskip

We are now ready to state the main result in Beeri, Fagin, Maier, Yannakakis \cite{BeeriFaginMaierYannakakis1983}.

\begin{theorem} [Theorem 3.4 in \cite{BeeriFaginMaierYannakakis1983}] \label{thm:BFMY}
Let~$H$ be a hypergraph. The following statements are equivalent:
\begin{enumerate} \itemsep=0pt
\item[(a)] $H$ is an acyclic hypergraph.
\item[(b)] $H$ is a conformal and chordal hypergraph.
\item[(c)] $H$ has the running intersection property.
\item[(d)] $H$ has a join tree.
\item[(e)] $H$ is accepted by Graham's algorithm.
\item[(f)] $H$ has the \ltgc~for ordinary relations.
\end{enumerate}
\end{theorem}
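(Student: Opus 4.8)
The plan is to establish the equivalence of the five ``structural'' conditions (a)--(e) among themselves by a cycle of elementary combinatorial implications, and then to tie in the ``semantic'' condition (f) by proving the implication (d)$\Rightarrow$(f) and the contrapositive of (f)$\Rightarrow$(a), i.e., that a cyclic hypergraph fails the \ltgc~for ordinary relations. All of the structural implications are finitary statements about hypergraphs, so each can be proved by induction on the number of hyperedges or of vertices; the only genuinely delicate point is producing a pairwise consistent but globally inconsistent family of relations for an arbitrary cyclic hypergraph.

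For the structural part I would proceed as follows. (c)$\Leftrightarrow$(d): from a join tree~$T$ pick an arbitrary root and list the hyperedges so that each $X_i$ appears after its parent; the subtree condition then says that $X_i\cap(X_1\cup\cdots\cup X_{i-1})\subseteq X_j$ for $X_j$ the parent of $X_i$, which is the running intersection property. Conversely, given a running-intersection listing $X_1,\dots,X_m$, build a tree by attaching each $X_i$ with $i\ge 2$ to a witnessing $X_j$ with $j<i$, and check that for each vertex~$v$ the set of hyperedges containing~$v$ is connected in this tree. (d)$\Leftrightarrow$(e): each step of Graham's algorithm either removes a vertex lying in a single hyperedge or removes a hyperedge contained in another; in a hypergraph that has a join tree one can always realize such a step at a \emph{leaf} of the tree, and performing it preserves the existence of a join tree, so by induction Graham's algorithm succeeds, while running the algorithm backwards rebuilds a hypergraph together with a join tree. (a)$\Leftrightarrow$(b) and the link to (d): this is the classical fact that a reduced hypergraph is acyclic iff it is conformal and chordal iff its maximal cliques (which, by conformality, are among its hyperedges) can be arranged in a join tree; the proof uses that a chordal graph has a simplicial vertex, that conformality lets the hyperedges through that vertex be ``peeled'' one at a time, and that articulation sets correspond to minimal separators of the primal graph. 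Chaining these gives (a)$\Leftrightarrow$(b)$\Leftrightarrow$(c)$\Leftrightarrow$(d)$\Leftrightarrow$(e).

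For (d)$\Rightarrow$(f): fix a join tree~$T$ for~$H$ and a pairwise consistent family $R_1(X_1),\dots,R_m(X_m)$, root~$T$, and process it from the leaves towards the root: whenever $X_i$ is a leaf with parent $X_j$, replace $R_j$ by the set of its tuples that join with some tuple of~$R_i$, and then delete the leaf. Using pairwise consistency and the subtree property of~$T$ one shows that no needed tuple is lost (the projection of $R_j$ onto $X_i\cap X_j$ is unchanged) and that the shrunken family remains pairwise consistent; once all leaves are processed the remaining relations are ``globally reduced'', and the two well-known facts about relational joins recalled in Section~\ref{subsec:cons-boole} imply that $R_1\Join\cdots\Join R_m$ then projects back to each $R_i$, witnessing global consistency. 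For the converse I would argue the contrapositive: if $H$ is cyclic then, by the equivalence with~(e), Graham's algorithm fails and returns a nonempty hypergraph $H'=(V',E')$ in which every vertex lies in at least two hyperedges and no hyperedge contains another. On such an $H'$ one builds a parity gadget: give every vertex the domain $\{0,1\}$, fix a cyclic structure inside $H'$ (a sequence of hyperedges with nonempty consecutive intersections returning to its start, which exists precisely because $H'$ cannot be peeled), and let each $R_i$ consist of those tuples whose restrictions to the relevant intersections encode an edge-labelling of a cycle with an odd number of sign changes; any two of the $R_i$ agree on their common attributes, but a global tuple would assign a consistent $\{0,1\}$-value to every vertex, forcing an even number of sign changes around the cycle---a contradiction. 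Padding the vertices of $H$ outside $V'$ with one-element domains extends this to all of~$H$.

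The main obstacle is exactly this last construction: turning the purely negative hypothesis ``$H$ is cyclic'' into a concrete sub-structure carrying a parity obstruction, and dealing with ``weak'' cycles that are not literal cycles of two-element edges. Making the reduction from an arbitrary cyclic $H$ to a manageable ``closed'' core $H'$ precise---via the reduction and induced-subhypergraph operations in the definition of acyclicity, or equivalently via what Graham's algorithm leaves behind---is the combinatorial heart of the argument, and it is precisely the step the present paper revisits and strengthens later with a Tseitin-style modular-counting construction valid over an arbitrary positive semiring rather than only the Boolean one.
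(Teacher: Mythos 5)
Your proof of the semantic direction (f)$\Rightarrow$(a) has a genuine gap, and it is exactly the step you yourself flag as ``the main obstacle.'' Stopping at the Graham residue $H'$ and invoking ``a cyclic sequence of hyperedges'' with a two-valued odd-parity gadget does not work in general. The residue only guarantees that every vertex lies in at least two hyperedges and that no hyperedge contains another; it does not provide a structure on which your gadget is pairwise consistent with \emph{all} pairs of hyperedges (not just consecutive ones in the chosen cyclic sequence), nor one on which a global $\{0,1\}$-assignment is actually obstructed. The minimal obstructions come in two kinds (Lemmas~\ref{lem:characconf} and~\ref{lem:characchord}): a chordless cycle, where the $2$-valued parity gadget does work, and the non-conformal core whose hyperedges are all the $(n-1)$-element subsets of an $n$-element vertex set, where it does not. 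For instance, for $n=4$ (hyperedges all triples of $\{A,B,C,D\}$) the natural mod-$2$ construction -- sum $\equiv 0$ on three hyperedges, $\equiv 1$ on the fourth -- is pairwise consistent but \emph{globally satisfiable} (any global tuple with total sum $\equiv 1$ works), so no contradiction arises; this is why the paper's construction $C(H;K)$ counts modulo $d=n-1$ rather than modulo $2$, and why Beeri et al.\ need a separate gadget for the non-conformal case. Moreover, reducing an arbitrary cyclic $H$ to one of these two cores is not a single pass of Graham's algorithm: it requires knowing that the \ltgc\ is inherited by induced subhypergraphs and by reductions (Lemmas~\ref{lem:preserv1} and~\ref{lem:preserv2}) together with a minimal-counterexample argument, since the residue can be far richer than a literal cycle. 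As it stands, the combinatorial heart of (f)$\Rightarrow$(a) is missing from your proposal; note that this is precisely the part of Theorem~\ref{thm:BFMY} that the present paper does not take from the literature but re-proves, via the Tseitin-style modular construction in the proof of Theorem~\ref{thm:main} specialized to the Boolean semiring.

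Two smaller remarks. In your (d)$\Rightarrow$(f) argument the leaf-semijoin passes are vacuous: pairwise consistency gives $R_j[X_i\cap X_j]=R_i[X_i\cap X_j]$, so the semijoin removes no tuple of $R_j$, and ``globally reduced'' is doing no work; what is actually needed is the induction along a running-intersection (or rooted join-tree) ordering showing that the iterated join of the processed prefix projects back onto each $R_k$ -- this is Lemma~\ref{lem:newlemma} of the paper restricted to $K=\mathbb{B}$, and your sketch should be recast in that form. Your sketches of the purely syntactic equivalences (a)--(e) are standard and plausible in outline; the paper does not reprove them either, citing \cite{BeeriFaginMaierYannakakis1983} for that part.
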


As an illustration of Theorem \ref{thm:BFMY}, let us return to the
hypergraphs $H_1$, $H_2$, $H_3$ encountered earlier. Hypergraph $H_1$
has all six properties in Theorem \ref{thm:BFMY}, whereas hypergraphs
$H_2$ and $H_3$ have none of these properties when $n \geq 3$.

\subsection{Global Consistency in Arbitrary Positive
  Semirings} \label{subsec:cons-pos} Let~$K$ be an arbitrary, but
fixed, positive semiring.  In this section, we investigate some
aspects of global consistency for collections of~$K$-relations.

As discussed earlier, if~$R_1,\ldots,R_m$ is a globally consistent
collection of ordinary relations, then the
join~$R_1\Join \cdots \Join R_m$ witnesses the global consistency
of~$R_1,\ldots,R_m$ (and, in fact, is the largest such witness). At
first, one may expect that a similar result may hold for globally
consistent collections~$R_1,\ldots,R_m$ of~$K$-relations. It turns
out, however, that the concept of the join of three or
more~$K$-relations is problematic, even for the case in which~$K$ is
the bag semiring~${\mathbb N}$ of non-negative integers. Note that,
using the join of two relations, the join of
three~$K$-relations~$R,S,T$ could be defined as
either~$R \Join (S \Join T)$ or as~$(R\Join S) \Join T$. The join of
ordinary relations is associative, hence these two expressions
coincide for ordinary relations. In contrast, there are bags~$R, S, T$
that are globally consistent and such
that
\begin{equation}
R \Join (S \Join T) \not \equiv (R \Join S) \Join T,
\end{equation}
and neither~$R \Join (S \Join T)$ nor~$(R \Join S) \Join T$ witnesses
the global consistency of~$R,S,T$.

\medskip

\begin{example} \label{exam:triangle} Let~$W(ABC)$ be the bag given by
  its table of multiplicities below, along with its three marginals
  $R(AB), S(BC), T(AC)$:
  \begin{center}
  \begin{tabular}{ccccccccccccccccccc}
  $W(ABC)$ \# & \;\;\;\; & $R(AB)$ \# & \;\;\;\; & $S(BC)$ \# & \;\;\;\; & $T(AC)$ \# \\
  \;\;\; 1 1 2 : 1 & & \;\; 1 1 : 1 & & \;\; 1 2 : 1 & & \;\; 1 2 : 1 \\
  \;\;\; 1 2 3 : 2 & & \;\; 1 2 : 2 & & \;\; 1 4 : 4 & & \;\; 1 3 : 2 \\
  \;\;\; 2 1 4 : 4 & & \;\; 2 1 : 4 & & \;\; 2 2 : 2 & & \;\; 2 1 : 3 \\
  \;\;\; 2 2 2 : 2 & & \;\; 2 2 : 2 & & \;\; 2 3 : 2 & & \;\; 2 2 : 2 \\
  \;\;\; 2 3 1 : 3 & & \;\; 2 3 : 3 & & \;\; 3 1 : 3 & & \;\; 2 4 : 4
  \end{tabular}
  \end{center}
  By construction the collection of three bags~$R,S,T$ is globally
  consistent as witnessed by~$W$. We
  produced~$N_1 := (R \Join S) \Join T$
  and~$N_2 := R \Join (S \Join T)$ by computer, along with their
  marginals on~$AB, BC, AC$. We display two bags~$[N_1]$ and~$[N_2]$
  that are in the equivalence classes of~$N_1$ and~$N_2$,
  respectively, along with the two marginals~$P_1 := [N_1][BC]$
  and~$P_2 := [N_2][AB]$ that suffice to verify the claim that neither~$N_1$
  nor~$N_2$ witness the consistency of~$R,S,T$:
\begin{center}
\begin{tabular}{lllllllllllllllllllllllllllll}
 $[N_1](ABC)$\, \# & \;\;\; & $[N_2](ABC)$\; \# & \;\;\; & $P_1(BC)$\, \# & \;\;\; & $P_2(AB)$\, \# \\
  \;\;\;\;\;\;\, 1 1 2 : 14 & & \;\;\;\;\;\;\; 1 1 2 : 1 & & \;\;\;\; 1 2 : 22 & &  \;\;\;\;  1 1 : 1 \\
  \;\;\;\;\;\;\, 1 2 2 : 7 & & \;\;\;\;\;\;\; 1 2 2 : 5 & & \;\;\;\; 1 4 : 48 & &   \;\;\;\;  1 2 : 10 \\
  \;\;\;\;\;\;\, 1 2 3 : 21 & & \;\;\;\;\;\;\; 1 2 3 : 5 & & \;\;\;\; 2 2 : 35 & &  \;\;\;\;  2 1 : 20 \\
  \;\;\;\;\;\;\, 2 1 2 : 8 & & \;\;\;\;\;\;\; 2 1 2 : 4 & & \;\;\;\; 2 3 : 21 & &   \;\;\;\;  2 2 : 5 \\
  \;\;\;\;\;\;\, 2 1 4 : 48 & & \;\;\;\;\;\;\; 2 1 4 : 16 & & \;\;\;\; 3 1 : 42 & & \;\;\;\;  2 3 : 15 \\
  \;\;\;\;\;\;\, 2 2 2 : 28 & & \;\;\;\;\;\;\; 2 2 2 : 5 & & & & & & & & \\
  \;\;\;\;\;\;\, 2 3 1 : 42 & & \;\;\;\;\;\;\; 2 3 1 : 15 & & & & & & & &
\end{tabular}
\end{center}
The ratio test shows that~$[N_1]$ and~$[N_2]$ are not equivalent
($14/1 \not= 7/5$), so~$\Join$ is not associative, not even up to
equivalence.
The ratio test applied to the bags~$P_1(BC)$ and~$S(BC)$ shows that
they are not equivalent~($22/1 \not= 48/4$), and the ratio test
applied to the bags~$P_2(AB)$ and~$R(AB)$ shows that they are not
equivalent~($1/1 \not= 10/2$). Thus, neither~$N_1$ nor~$N_2$ witness
the consistency of~$R,S,T$.
\end{example}

Observe that the collection $AB$, $BC$, $AC$ of the sets of attributes
of the relations in Example \ref{exam:triangle} is cyclic. It turns
out that this is no accident.
Indeed, we show next that if a collection~$X_1,\ldots,X_m$ of sets of
attributes is acyclic and if~$R_1(X_1),\ldots,R_m(X_m)$ is a globally
consistent collection of~$K$-relations of schema~$X_1,\ldots,X_m$,
then a witness of their global consistency can always be built
iteratively through joins of two~$K$-relations. In fact, we show
something stronger, namely, that it suffices for~$R_1,\ldots,R_m$ to
be pairwise consistent~$K$-relations. For stating this lemma we need
the following definition. The~\emph{iterated left-join} of
the~$K$-relations~$R_1,\ldots,R_m$ is the~$K$-relation
\begin{equation}
  ((\cdots(R_1 \Join R_2) \Join \cdots \Join R_{m-2}) \Join
  R_{m-1}) \Join R_m,  \label{eqn:iteratedjoin}
\end{equation}
i.e., the sequential join of~$R_1,\ldots,R_m$ with the join operations
associated to the left. More formally, the iterated left-join
of~$R_1,\ldots,R_m$ is defined by induction on~$m$. For~$m = 1$ it
is~$R_1$, and for~$m \geq 2$ it is~$R \Join R_m$ where~$R$ is the
iterated left-join of~$R_1,\ldots,R_{m-1}$.

\begin{lemma} \label{lem:newlemma} Let~$X_1,\ldots,X_m$ be an acyclic
  collection of sets of attributes. There exists a
  permutation~$\pi : [m] \rightarrow [m]$ such that
  if~$R_1(X_1),\ldots,R_m(X_m)$ are pairwise consistent~$K$-relations
  of schema~$X_1,\ldots,X_m$, then they are globally consistent, and
  the iterated left-join of~$R_{\pi(1)},\ldots,R_{\pi(m)}$ witnesses
  their global consistency.  In particular, if they are globally
  consistent, then the iterated left-join of some permutation of them
  witnesses their global consistency.
\end{lemma}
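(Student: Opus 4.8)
The plan is to extract the right ordering $\pi$ from the \emph{running intersection property} and then argue by induction on $m$. First I would invoke Theorem~\ref{thm:BFMY}: since the collection $X_1,\ldots,X_m$ is acyclic, the hypergraph it induces has the running intersection property, so its maximal hyperedges admit a listing in which each one meets the union of the preceding ones inside a single preceding hyperedge. I would then append the remaining members of the collection (the non-maximal hyperedges and any repeats) at the end of this listing in arbitrary order; this preserves the running intersection condition because each appended hyperedge is contained in an earlier one. The result is a permutation $\pi:[m]\to[m]$ such that, for every $i\ge 2$, there is some $j<i$ with $X_{\pi(i)}\cap\bigl(X_{\pi(1)}\cup\cdots\cup X_{\pi(i-1)}\bigr)\subseteq X_{\pi(j)}$. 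After reindexing so that $\pi$ is the identity, it suffices to prove by induction on $m$ that if $R_1(X_1),\ldots,R_m(X_m)$ are pairwise consistent, then the iterated left-join $R_1\Join\cdots\Join R_m$ witnesses their global consistency.

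The base case $m=1$ is immediate, since $R_1\equiv R_1[X_1]$. For the inductive step I would set $U:=X_1\cup\cdots\cup X_{m-1}$ and let $P$ be the iterated left-join $R_1\Join\cdots\Join R_{m-1}$, which is a $K$-relation over $U$ because each join takes the union of the two schemas. The prefix $X_1,\ldots,X_{m-1}$ still satisfies the running intersection condition and $R_1,\ldots,R_{m-1}$ are still pairwise consistent, so the induction hypothesis gives $P[X_i]\equiv R_i$ for every $i\in[m-1]$. Applying the running intersection condition to the last hyperedge, I would fix $j<m$ with $Z:=X_m\cap U\subseteq X_j$; since $X_j\subseteq U$ this also forces $X_j\cap X_m\subseteq Z$, hence $Z=X_j\cap X_m$. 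The heart of the argument is now to check that $P$ and $R_m$ are consistent, which by Lemma~\ref{lem:characterization} is the same as $P[Z]\equiv R_m[Z]$ (the common attributes of $P$ and $R_m$ being $U\cap X_m=Z$). This I would obtain by composing two equivalences: on one side $P[Z]=P[X_j][Z]\equiv R_j[Z]$ by Part~3 of Lemma~\ref{lem:easyfacts1}, Part~3 of Lemma~\ref{lem:easyfacts2}, and the induction hypothesis $P[X_j]\equiv R_j$; on the other side $R_j[Z]=R_j[X_j\cap X_m]\equiv R_m[X_j\cap X_m]=R_m[Z]$, which is the pairwise consistency of $R_j$ and $R_m$ read through Lemma~\ref{lem:characterization}.

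Once $P$ and $R_m$ are known to be consistent, I would let $T:=P\Join R_m$, which is by definition the iterated left-join $R_1\Join\cdots\Join R_m$ and has schema $U\cup X_m=X_1\cup\cdots\cup X_m$; Lemma~\ref{lem:characterization} then yields $T[U]\equiv P$ and $T[X_m]\equiv R_m$. The case $i=m$ is handled directly by $T[X_m]\equiv R_m$, and for $i\in[m-1]$, since $X_i\subseteq U$, Part~3 of Lemma~\ref{lem:easyfacts1} and Part~3 of Lemma~\ref{lem:easyfacts2} together with the induction hypothesis give $T[X_i]=T[U][X_i]\equiv P[X_i]\equiv R_i$. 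Hence $T$ witnesses the global consistency of $R_1,\ldots,R_m$, closing the induction; the final ``in particular'' clause follows because a globally consistent collection is in particular pairwise consistent. I expect the only genuinely delicate point to be the consistency check for $P$ and $R_m$ in the inductive step: this is precisely where the running intersection property enters, and it is the $K$-relational analogue of the familiar fact that an acyclic join can be evaluated one relation at a time. The rest is bookkeeping with $\equiv$ and marginals, together with the small amount of care needed to extend the running intersection listing past non-maximal and repeated hyperedges.
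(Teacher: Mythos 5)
Your proposal is correct and follows essentially the same route as the paper's proof: obtain the ordering from the running intersection property via Theorem~\ref{thm:BFMY}, reindex so it is the identity, and induct, using Lemma~\ref{lem:characterization} together with Part~3 of Lemma~\ref{lem:easyfacts1} and Part~3 of Lemma~\ref{lem:easyfacts2} to show first that the prefix join is consistent with the next relation (via $Z = X_j \cap X_m$) and then that the extended join marginalizes correctly to each $R_i$. The only difference is cosmetic bookkeeping (your explicit appending of non-maximal and repeated hyperedges to the listing, and phrasing the induction on $m$ rather than on the prefix index), which the paper handles implicitly.
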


\begin{proof}
  Assume that~$X_1,\ldots,X_m$ is an acyclic collection of sets of
  attributes.  By Theorem \ref{thm:BFMY}, this collection has the
  running intersection property, hence there exists a
  permutation~$\pi : [m] \rightarrow [m]$ such that for
  every~$i \in [m]$ with~$i\geq 2$, there exists~$j \in [m]$ such
  that~$\pi(j) < \pi(i)$
  and~$X_{\pi(i)} \cap (X_{\pi(1)} \cup \cdots \cup X_{\pi(i-1)})
  \subseteq X_{\pi(j)}$.  By renaming the sets, we may assume
  that~$\pi$ is the identity, so for every~$i \in [m]$
  with~$i \geq 2$, there is a~$j \in [i-1]$ such
  that~$X_i \cap (X_1 \cup \cdots \cup X_{i-1}) \subseteq X_j$. Fix a
  collection of~$K$-relations~$R_1,\ldots,R_m$ for~$X_1,\ldots,X_m$
  and assume that they are pairwise consistent. For each~$i \in [m]$,
  let~$T_i := ((R_{1} \Join \cdots \Join R_{i-2}) \Join
  R_{i-1}) \Join R_i$ with the joins associated to the left. We show,
  by induction on~$i = 1,\ldots,m$, that~$T_i$ is a~$K$-relation
  over~$X_1 \cup \cdots \cup X_i$ that witnesses the consistency
  of~$R_1,\ldots,R_i$.

  For~$i = 1$ the claim is obvious since~$T_1 = R_1$. Assume then
  that~$i \geq 2$ and that the claim is true for smaller indices.
  Let~$X := X_1 \cup \cdots \cup X_{i-1}$ and let~$j \in [i-1]$ be
  such that~$X_i \cap X \subseteq X_j$.  By induction hypothesis, we
  know that~$T_{i-1}$ is a~$K$-relation over~$X$ that witnesses the
  consistency of~$R_1,\ldots,R_{i-1}$.  First, we show that~$T_{i-1}$
  and~$R_i$ are consistent. By Lemma~\ref{lem:characterization} it
  suffices to show that~$T_{i-1}[X \cap X_i] \equiv R_i[X \cap
  X_i]$. Let~$Z = X \cap X_i$, so~$Z \subseteq X_j$ by the choice
  of~$j$, and indeed~$Z = X_j \cap X_i$.  Since~$j \leq i-1$, we
  have~$R_j \equiv T_{i-1}[X_j]$. By Part~3 of
  Lemma~\ref{lem:easyfacts2} and Part~3 of Lemma~\ref{lem:easyfacts1},
  we have~$R_j[Z] \equiv T_{i-1}[X_j][Z] = T_{i-1}[Z]$. By assumption,
  also~$R_j$ and~$R_i$ are consistent, and~$Z = X_j \cap X_i$, which
  by Lemma~\ref{lem:characterization} implies~$R_j[Z] \equiv
  R_i[Z]$. By transitivity, we get~$T_{i-1}[Z] \equiv R_i[Z]$, hence,
  by~$Z = X \cap X_i$ and Lemma~\ref{lem:characterization},
  the~$K$-relations~$T_{i-1}$ and~$R_i$ are consistent. We show
  that~$T_i = T_{i-1} \Join R_i$ witnesses the consistency
  of~$R_1,\ldots,R_i$. Since~$T_{i-1}$ and~$R_i$ are consistent, first
  note that~$T_{i-1} \equiv T_i[X]$ and~$R_i \equiv T_i[X_i]$ by
  Lemma~\ref{lem:characterization}. Now fix~$k \leq i-1$ and note that
  \begin{equation}
  R_k \equiv T_{i-1}[X_k] \equiv T_i[X][X_k] = T_i[X_k],
\end{equation}
where the first equivalence follows from the fact that~$T_{i-1}$ witnesses the
consistency of~$R_1,\ldots,R_{i-1}$ and~$k \leq i-1$, the second equivalence
follows from~$T_{i-1} \equiv T_i[X]$ together with Part~3 of
Lemma~\ref{lem:easyfacts2} applied to~$X_k \subseteq X$, and the equality
follows again from~$X_k \subseteq X$ and this time from Part~3 of
Lemma~\ref{lem:easyfacts1}. Thus, $T_i$ witnesses the consistency
of~$R_1,\ldots,R_i$, which was to be shown.
\end{proof}

In what follows, we explore the interplay between pairwise consistency
and global consistency of $K$-relations, aiming to extend Theorem
\ref{thm:BFMY} to arbitrary positive semirings.

\paragraph{Local-to-Global Consistency Property for~$K$-relations}
We extend the notion of local-to-global consistency property from
ordinary relations to $K$-relations. Let~$H$ be a hypergraph and
let~$X_1,\dots,X_m$ be a listing of all hyperedges of~$H$. We say
that~$H$ has the \emph{\ltgc~for~$K$-relations} if every pairwise
consistent collection~$R_1(X_1),\ldots,R_m(X_m)$ of~$K$-relations of
schema~$X_1,\ldots,X_m$ is globally consistent.

\begin{theorem} \label{thm:main}
  Let $K$ be a positive semiring and let $H$ be a hypergraph. The
  following statements are equivalent:
\begin{enumerate} \itemsep=0pt
\item [(a)] $H$ is an acyclic hypergraph.
\item [(b)] $H$ has the \ltgc~for $K$-relations.
\end{enumerate}
\end{theorem}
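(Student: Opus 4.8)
The proof naturally splits into the two implications, and the first of them is essentially already in hand.

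\textbf{(a) $\Rightarrow$ (b).} If $H$ is acyclic and $X_1,\dots,X_m$ lists its hyperedges, then $X_1,\dots,X_m$ is an acyclic collection of sets of attributes, so Lemma~\ref{lem:newlemma} supplies a permutation $\pi$ of $[m]$ for which every pairwise consistent collection $R_1(X_1),\dots,R_m(X_m)$ of $K$-relations is globally consistent, witnessed by the iterated left-join of $R_{\pi(1)},\dots,R_{\pi(m)}$. By definition this is exactly the \ltgc~for $K$-relations, so nothing further is needed for this direction.

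\textbf{(b) $\Rightarrow$ (a).} I would prove the contrapositive: if $H$ is cyclic, then some collection of $K$-relations over the hyperedges of $H$ is pairwise consistent but not globally consistent. A first reduction lets us assume $H$ is reduced: $H$ and its reduction $R(H)$ are cyclic together, and from a pairwise-but-not-globally-consistent family over $R(H)$ one builds one over $H$ by assigning to each hyperedge $X\subseteq X'$ (with $X'\in R(H)$) the marginal on $X$ of the relation attached to $X'$ --- pairwise consistency is preserved because marginals of $\equiv$-consistent relations are $\equiv$-consistent (Lemmas~\ref{lem:easyfacts1} and~\ref{lem:easyfacts2}, via Lemma~\ref{lem:characterization}), and a global witness for the larger family restricts to one for the $R(H)$-family. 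For reduced cyclic $H$, Theorem~\ref{thm:BFMY} hands us a concrete combinatorial certificate of cyclicity: running the GYO/Graham reduction leaves a nonempty ``core'' in which every vertex lies in at least two hyperedges and no hyperedge is contained in another; equivalently, via the conformal-plus-chordal characterization, the primal graph of $H$ contains a chordless cycle of length at least four, or some clique of the primal graph is contained in no hyperedge. Either way one isolates a cyclic pattern of hyperedges on which to mount the obstruction.

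\textbf{The Tseitin-style gadget.} The core idea is to encode this cyclic pattern as an unsatisfiable system of modular linear equations whose local pieces are still pairwise consistent, generalizing Tseitin's parity construction and, as a special case, the Popescu--Rohrlich box on the $4$-cycle $AB,BC,CD,DA$. I would fix a modulus $d\ge 2$ (one can take $d=2$ for a chordless cycle, but in general $d$ is read off from the hyperedge structure), give every attribute the domain $\mathbb{Z}_d$, and attach to each hyperedge $X_i$ a $\mathbb{Z}_d$-linear functional $\lambda_i$ on $\mathbb{Z}_d^{X_i}$ and a target $b_i\in\mathbb{Z}_d$; the $K$-relation $R_i$ then assigns $1\in K$ to every $X_i$-tuple $t$ with $\lambda_i(t)=b_i$ and $0$ to all others. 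The functionals are chosen so that, along the cyclic pattern, the $\lambda_i$ sum to the zero functional while the $b_i$ sum to a nonzero element of $\mathbb{Z}_d$, and so that for all $i,j$ the marginal of $R_i$ onto $X_i\cap X_j$ is \emph{uniform}, i.e.\ every tuple in its support has the same number of extensions in $\supp(R_i)$. Uniformity gives pairwise consistency: since $K$ is plus-positive, every positive integer $n$ yields a non-zero $n\cdot 1$ in $K$, so $R_i[X_i\cap X_j]$ and $R_j[X_i\cap X_j]$ are constant multiples of the same all-$1$ relation over the same support, hence $\equiv$-equivalent after cross-multiplying, hence $R_i$ and $R_j$ are consistent by Lemma~\ref{lem:characterization}. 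The cyclic sum condition gives global inconsistency: if a $K$-relation $T$ over $\bigcup_i X_i$ witnessed global consistency, then because $K$ is plus-positive and has no zero-divisors, passing to supports (Lemmas~\ref{lem:easyfacts1},~\ref{lem:easyfacts2}) shows that the ordinary relation $\supp(T)$ projects onto $\supp(R_i)$ for each $i$, so every tuple of $\supp(T)$ satisfies all the equations $\lambda_i(x)=b_i$ --- impossible, as summing them along the cycle forces $0=\sum_i b_i\neq 0$. Hence $\supp(T)$ is empty, making some $R_i\equiv T[X_i]$ the empty $K$-relation, contradicting $R_i\neq\emptyset$.

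\textbf{Main obstacle.} The $\equiv$-bookkeeping and the descent to supports are routine; the real work is constructing the functionals $\lambda_i$ uniformly for an \emph{arbitrary} cyclic hypergraph. A hyperedge may meet the cyclic pattern in more than two vertices, and distinct hyperedges may share vertices, so the coefficients --- and the choice of $d$ --- must be arranged so that \emph{every} shared-attribute marginal (not merely singleton marginals) is uniform, while still making the global cyclic sum of the $\lambda_i$ vanish and that of the $b_i$ not. I expect this to require a case analysis organized around the Graham-core (equivalently, around a chordless cycle, respectively a minimal non-contained clique), with plus-positivity of $K$ doing the essential work of turning ``a fixed number of copies of $1$'' into a non-zero scalar, so that the Boolean and probabilistic constructions of Beeri et al.\ and Vorob'ev lift without change to every positive semiring.
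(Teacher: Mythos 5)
Your direction (a)$\Rightarrow$(b) is exactly the paper's argument (Lemma~\ref{lem:newlemma}), and your Tseitin-style gadget, with all-ones relations supported on tuples satisfying a modular equation, uniform marginals giving pairwise consistency via plus-positivity, and descent to supports giving global inconsistency, is essentially the paper's construction $C(H;K)$. The genuine gap is the step you yourself flag as the ``main obstacle'': you propose to mount the gadget directly on an arbitrary cyclic (reduced) hypergraph by choosing linear functionals $\lambda_i$ and a modulus $d$ so that \emph{every} shared-attribute marginal is uniform while the functionals cancel along the cyclic pattern, and you leave that construction unspecified, expecting ``a case analysis organized around the Graham-core.'' That is precisely the hard part, and it is not clear it can be done in this generality: hyperedges outside the pattern, hyperedges meeting the pattern in large sets, and overlapping hyperedges all threaten the uniformity of the marginals, and the paper offers no such general construction.

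The paper sidesteps this difficulty with a different reduction. It proves two preservation lemmas: the \ltgc~for $K$-relations is inherited by induced hypergraphs $H[W]$ (Lemma~\ref{lem:preserv1}, proved by padding each relation with a fixed constant value $u_0$ on the deleted attributes and carefully checking that pairwise consistency and global witnesses transfer) and by reductions (Lemma~\ref{lem:preserv2}, which is the half you do sketch). Combined with a minimal-counterexample argument and the characterizations of minimal non-conformal and minimal non-chordal hypergraphs (Lemmas~\ref{lem:characconf} and~\ref{lem:characchord}), this forces the counterexample hypergraph to \emph{be} one of the two simple patterns $\{W\setminus\{A\}:A\in W\}$ or a chordless cycle, which are $k$-uniform and $d$-regular; there the sum-of-coordinates gadget with targets $0,\dots,0,1$ works verbatim, with $d=|W|-1$ in the first case and $d=2$ in the second. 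So what your proposal is missing is not a cleverer family of functionals but the transfer lemma for induced subhypergraphs (the analogue of Lemma~\ref{lem:preserv1}), which lets the easy gadget built on the small pattern $H[W]$ be lifted to a pairwise consistent, globally inconsistent family over the hyperedges of $H$ itself; your reduction handles only the passage to the reduction $R(H)$, not the passage to an induced subhypergraph, and without it the contrapositive does not go through for hypergraphs whose bad pattern sits on a proper subset of the vertices.
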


The claim that (a) implies (b) in Theorem~\ref{thm:main} follows
directly from the preceding Lemma~\ref{lem:newlemma}. We concentrate
on proving that (b) implies (a). To do this, we need four technical
lemmas. By Theorem \ref{thm:BFMY}, a hypergraph~$H$ is acyclic if and
only if it is conformal and chordal. The first two technical lemmas
state, in effect, that the ``minimal" non-conformal hypergraphs, as
well as the ``minimal" non-chordal hypergraphs, have very simple
forms.

\begin{lemma}[\cite{DBLP:journals/csur/Brault-Baron16}] \label{lem:characconf}
  A hypergraph~$H = (V,E)$ is not conformal if and only if there
  exists a subset $W$ of $V$ such that~$|W| \geq 3$
  and~$R(H[W])=(W,\{W\setminus\{A\} : A \in W\})$, where $R(H[W])$ is the reduction of the hypergraph $H[W]$ induced by $W$.
\end{lemma}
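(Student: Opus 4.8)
The plan is to unfold the definition of conformality in terms of the primal graph: $H$ is \emph{not} conformal precisely when there is a clique $C$ of the primal graph of $H$ whose vertex set is contained in no hyperedge of $H$ (call such a $C$ a \emph{bad clique}). So the lemma reduces to showing that the existence of a bad clique is equivalent to the existence of a set $W$ with $|W|\geq 3$ for which $R(H[W])=(W,\{W\setminus\{A\}:A\in W\})$. For the direction that assumes such a $W$ exists I will show that $W$ itself is a bad clique; for the converse I will take a bad clique $C$ of minimum cardinality and show $R(H[W])$ has the required form for $W:=C$.

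For the direction that assumes $R(H[W])=(W,\{W\setminus\{A\}:A\in W\})$ with $|W|\geq 3$: first, $W$ is a clique of the primal graph, because for distinct $u,v\in W$ one can choose $A\in W\setminus\{u,v\}$ (possible since $|W|\geq 3$), and then $W\setminus\{A\}$ is a hyperedge of $R(H[W])$, hence of the form $X\cap W$ for some $X\in E$, so $u,v\in X$ and $uv$ is an edge of the primal graph. Second, $W$ is contained in no hyperedge of $H$: if $W\subseteq X$ for some $X\in E$, then $X\cap W=W$ is a hyperedge of $H[W]$ and is maximal (it is all of $W$), so $W\in R(H[W])$, contradicting that the hyperedges of $R(H[W])$ are the proper subsets $W\setminus\{A\}$. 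Hence $W$ is a bad clique and $H$ is not conformal.

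For the converse, assume $H$ is not conformal and let $C$ be a bad clique of minimum cardinality; put $W:=C$. Since every vertex of $H$ lies in some hyperedge (our standing assumption on hypergraphs, and all we need, since ours arise from collections of attributes), every clique of size at most $2$ is contained in a hyperedge, so $|W|\geq 3$. Fix $A\in W$: the set $W\setminus\{A\}$ is a clique smaller than $C$, so by minimality it lies in some hyperedge $X_A\in E$; then $W\setminus\{A\}\subseteq X_A\cap W\subseteq W$, and $X_A\cap W\neq W$ (else $W\subseteq X_A$, contradicting badness of $W$), so $X_A\cap W=W\setminus\{A\}$, a hyperedge of $H[W]$. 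Conversely any hyperedge of $H[W]$ is $X\cap W$ with $X\in E$, and $W\not\subseteq X$, so some $A\in W\setminus X$ and $X\cap W\subseteq W\setminus\{A\}$. Thus the $|W|$ sets $W\setminus\{A\}$ are hyperedges of $H[W]$, every hyperedge of $H[W]$ is contained in one of them, they are pairwise incomparable (all of size $|W|-1$), and $W$ itself is not a hyperedge of $H[W]$; hence they are exactly the maximal hyperedges, i.e. $R(H[W])=(W,\{W\setminus\{A\}:A\in W\})$.

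The only genuinely delicate point is the minimality argument in the converse: it is what forces each $W\setminus\{A\}$ to be covered by a single hyperedge, and, combined with badness of $W$, forces that hyperedge to meet $W$ in exactly $W\setminus\{A\}$ rather than in something larger. The bound $|W|\geq 3$ likewise rests on the convention that every vertex occurs in a hyperedge. Everything else is routine bookkeeping with the definitions of $H[W]$ and of the reduction operator.
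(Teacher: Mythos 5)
Your proposal is correct and follows essentially the same route as the paper's proof: the easy direction shows $W$ is a clique of the primal graph contained in no hyperedge (else $W$ itself would survive the reduction), and the converse takes a minimum-cardinality bad clique $W$, uses minimality to cover each $W\setminus\{A\}$ by a hyperedge meeting $W$ in exactly $W\setminus\{A\}$, and checks that every hyperedge of $H[W]$ lies below one of these sets. You merely spell out a few details the paper leaves implicit (pairwise incomparability, the choice of a third vertex in the clique check, and the vertex-in-some-hyperedge convention behind $|W|\geq 3$).
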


\begin{proof}
  The \emph{if} direction is immediate since, given that~$|W| \geq 3$,
  the set~$W$ forms a clique in the primal graph that is not included
  in any hyperedge of~$H$; otherwise no~$W\setminus\{A\}$
  with~$A \in W$ would be a hyperedge in the reduced hypergraph
  of~$H[W]$. For the \emph{only if} direction, let~$W$ be a clique in
  the primal graph of~$H$ that is not included in any hyperedge of~$H$
  and that is minimal with this property. Since the two vertices of
  every edge of the primal graph are included in some hyperedge of~$H$
  we have~$|W| \geq 3$. In addition, by minimality of~$W$,
  each~$W\setminus\{A\}$ with~$A \in W$ is included in some
  hyperedge~$X$ of~$A$ that does not contain~$W$,
  so~$X \cap W = W \setminus \{A\}$. This means that
  each~$W \setminus \{A\}$ is a hyperedge of~$H[W]$, and also of its
  reduced hypergraph since~$W$ is not included in any hyperedge
  of~$H$. Conversely, if~$X$ is a hyperedge in~$E$, then there is
  some~$A\in W$ such that~$X\cap W\subseteq W\setminus \{A\}$.  It
  follows that~$R(H[W])=(W,\{W\setminus\{A\} : A \in W\})$.
\end{proof}

\begin{lemma} \label{lem:characchord} A hypergraph $H = (V,E)$ is not
  chordal if and only if there exists a subset $W$ of $V$
  such that $|W| \geq 4$ and $R(H[W]) = (W,\{\{A_i,A_{i+1}\} : i \in [n]\})$, where~$A_1,\ldots,A_n$
  is an enumeration of $W$ and $A_{n+1} := A_1$.
\end{lemma}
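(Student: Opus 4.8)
The plan is to prove both directions by passing to the primal graph $G$ of $H$ and using that ``$H$ not chordal'' means precisely that $G$ has a chordless cycle of length at least $4$. The only facts I need are the dictionary between the three objects involved: two distinct vertices $A,B$ are adjacent in $G$ if and only if some hyperedge $X\in E$ contains both; the hyperedges of $H[W]$ are exactly the nonempty sets $X\cap W$ for $X\in E$; and the hyperedges of $R(H[W])$ are exactly the inclusion-maximal such sets.

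For the \emph{only if} direction, assume $H$ is not chordal and fix a chordless cycle $C:A_1A_2\cdots A_nA_1$ in $G$ with $n\geq 4$; set $W=\{A_1,\ldots,A_n\}$, so $|W|=n\geq 4$. I will show $R(H[W])=(W,\{\{A_i,A_{i+1}\}:i\in[n]\})$. The crucial observation is that, since $C$ is chordless, any two distinct vertices of $W$ that are adjacent in $G$ must be consecutive on $C$; hence for every $X\in E$ we cannot have $|X\cap W|\geq 3$, because three pairwise $C$-consecutive vertices would force $n=3$ (a cycle of length $\geq 4$ is triangle-free), contradiction. Thus every hyperedge of $H[W]$ is either a singleton $\{A_i\}$ or a cycle edge $\{A_i,A_{i+1}\}$. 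Conversely, each cycle edge $\{A_i,A_{i+1}\}$ does occur: $A_i,A_{i+1}$ are adjacent in $G$, so some $X\in E$ contains both, and then $|X\cap W|\le 2$ forces $X\cap W=\{A_i,A_{i+1}\}$. Since distinct $2$-element sets are incomparable and every singleton $\{A_i\}$ is contained in $\{A_i,A_{i+1}\}$, the maximal hyperedges of $H[W]$ are exactly the cycle edges, which is the claimed form of $R(H[W])$.

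For the \emph{if} direction, suppose $W\subseteq V$ satisfies $|W|=n\geq 4$ and $R(H[W])=(W,\{\{A_i,A_{i+1}\}:i\in[n]\})$ for an enumeration $A_1,\ldots,A_n$ of $W$. I claim $C:A_1A_2\cdots A_nA_1$ is a chordless cycle of length $n\geq 4$ in $G$, so $H$ is not chordal. First, each $\{A_i,A_{i+1}\}$, being a (maximal) hyperedge of $H[W]$, equals $X\cap W$ for some $X\in E$, so $A_i$ and $A_{i+1}$ lie in a common hyperedge and are adjacent in $G$; hence $C$ is a cycle in $G$. Second, suppose $A_i,A_j$ with $i\ne j$ and $A_i,A_j$ not consecutive on $C$ were adjacent in $G$; then $\{A_i,A_j\}\subseteq X\cap W$ for some $X\in E$, so $X\cap W$ is a hyperedge of $H[W]$ and is therefore contained in some maximal one, i.e.\ in a cycle edge $\{A_k,A_{k+1}\}$. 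But $\{A_i,A_j\}\subseteq\{A_k,A_{k+1}\}$ with $A_i\ne A_j$ forces $\{A_i,A_j\}=\{A_k,A_{k+1}\}$, contradicting non-consecutiveness. So $C$ has no chord.

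The routine parts are the translations between hyperedges of $H$, hyperedges of $H[W]$, and edges of $G$. The one step that needs genuine care — the crux — is in the \emph{only if} direction: showing that no hyperedge of $H$ meets $W$ in three or more vertices, so that the reduced induced hypergraph collapses to exactly the chordless cycle with nothing surviving above it. Beyond this, I do not anticipate any real obstacle.
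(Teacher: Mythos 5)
Your proposal is correct and follows essentially the same route as the paper: both directions go through the primal graph, taking the vertex set $W$ of a chordless cycle of length at least four for the \emph{only if} direction and reading the cycle back off $R(H[W])$ for the \emph{if} direction; you merely spell out the verification (in particular that no hyperedge meets $W$ in three or more vertices) that the paper leaves implicit. The only cosmetic difference is that the paper takes a \emph{shortest} chordless cycle, whereas your argument shows, correctly, that any chordless cycle of length at least four suffices.
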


\begin{proof}
  The \emph{if} direction is immediate since it implies that the
  primal graph of~$H$ contains a chordless cycle of length at least
  four. For the \emph{only if} direction, let~$W$ be the set of
  vertices of a shortest chordless cycle of length at least four in
  the primal graph of~$H$.
\end{proof}

The next two technical lemmas state that the local-to-global
consistency property for $K$-relations is preserved under induced
hypergraphs, and also under reductions.

\begin{lemma} \label{lem:preserv1}
  If a hypergraph~$H$ has
  the \ltgc~for~$K$-relations, then for every subset~$W$ of the
  vertices of~$H$ the hypergraph~$H[W]$ also has the
  \ltgc~for~$K$-relations.
\end{lemma}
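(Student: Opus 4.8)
The plan is to show the contrapositive: assuming $H[W]$ does \emph{not} have the \ltgc~for $K$-relations, I will produce a witness to the fact that $H$ itself fails to have that property. So suppose $X_1,\ldots,X_m$ is the list of hyperedges of $H$ and the hyperedges of $H[W]$ are (the nonempty distinct sets among) $X_1 \cap W, \ldots, X_m \cap W$. By the failure hypothesis, there are $K$-relations $S_i$ over $X_i \cap W$, for $i \in [m]$, that are pairwise consistent but not globally consistent. The goal is to ``lift'' these to $K$-relations $R_i$ over $X_i$ that are pairwise consistent over $H$ but not globally consistent. The natural construction is to leave the attributes outside $W$ ``free'' by padding: for each attribute $A \in V \setminus W$ we first ensure $\domain(A)$ is nonempty (if necessary, we may add a dummy element, or simply fix one element $0_A \in \domain(A)$), and define $R_i$ on an $X_i$-tuple $t$ by $R_i(t) := S_i(t[X_i \cap W])$ if $t[A] = 0_A$ for every $A \in X_i \setminus W$, and $R_i(t) := 0$ otherwise. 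In other words, each $R_i$ is supported on tuples that are ``constant outside $W$'' and on those tuples it just copies the value from $S_i$.

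With this construction the bookkeeping is the heart of the matter. First I would check the compatibility issue caused by the reduction step: distinct hyperedges $X_i \neq X_j$ of $H$ may satisfy $X_i \cap W = X_j \cap W$, so the hypergraph $H[W]$ has fewer hyperedges than $H$; I should make sure that in the failure hypothesis for $H[W]$ I am allowed to assign the \emph{same} $S$ to both, which is fine since the \ltgc~is stated for a \emph{listing} of the hyperedges and repetitions only make the conclusion stronger (alternatively one invokes that consistency of a collection is insensitive to repeated members). Second, I would verify pairwise consistency of $R_i$ and $R_j$: by Lemma~\ref{lem:characterization} it suffices to show $R_i[X_i \cap X_j] \equiv R_j[X_i \cap X_j]$. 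Computing the marginal of $R_i$ onto $X_i \cap X_j$ using \eqref{eqn:marginal}, the only tuples contributing are those constant outside $W$, and summing over the fibres one gets that $R_i[X_i \cap X_j]$ is itself the ``padded'' version of $S_i[X_i \cap X_j \cap W]$; the padding here uses that $X_i \cap X_j$ restricted to $W$ is contained in both $X_i \cap W$ and $X_j \cap W$, and that $X_i \cap X_j \cap W$ is a hyperedge-intersection inside $H[W]$, so $S_i[\,\cdot\,] \equiv S_j[\,\cdot\,]$ gives the desired equivalence after padding (the non-zero scalars $a,b$ witnessing $a S_i[\cdots] = b S_j[\cdots]$ work verbatim for the padded relations). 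Third, and symmetrically, I would show that if some $K$-relation $T$ over $V = X_1 \cup \cdots \cup X_m$ witnessed the global consistency of $R_1,\ldots,R_m$, then its marginal $T[W]$ — or rather a suitable rescaling forced by the constancy of the supports of the $R_i$ outside $W$ — would witness the global consistency of $S_1,\ldots,S_m$, contradicting the choice of the $S_i$. Concretely: from $R_i \equiv T[X_i]$ and Part~2 of Lemma~\ref{lem:easyfacts2} we get $R_i' = T[X_i]'$, so every tuple in $T'$ is constant (equal to $0_A$) on every attribute $A$ of $V \setminus W$ (since each such $A$ lies in some $X_i$); hence $T$ is essentially its own marginal $T[W]$ up to the trivial padding, and $T[W][X_i \cap W] = T[X_i][W\cap X_i] \equiv R_i[X_i \cap W]$, which unwinds to $S_i$ by construction and Part~3 of Lemma~\ref{lem:easyfacts1}.

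The main obstacle I anticipate is purely a matter of care, not depth: keeping track of the two mismatches between $H$ and $H[W]$ — the restriction-to-$W$ and the reduction (removing non-maximal and empty sets $X_i \cap W$) — while threading the normalization scalars $a,b$ through every marginalization. A clean way to sidestep the reduction subtlety is to observe that the \ltgc~as defined quantifies over a \emph{listing} of the hyperedges, so I may work with the multiset $\{X_i \cap W : i \in [m]\}$ directly and only appeal to the fact (used implicitly already in the paper) that deleting an empty or repeated member from a pairwise-consistent collection preserves both pairwise and global consistency; the empty sets $X_i \cap W = \emptyset$ correspond to $R_i$ whose support lies entirely in the ``constant outside $W$'' tuples with no $W$-coordinates at all, i.e.\ essentially scalars, and these impose no constraint. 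Once this is set up, all three verifications (pairwise consistency of the $R_i$, non-global-consistency of the $R_i$ via a contradiction, and the translation back through marginals) are routine applications of Lemmas~\ref{lem:easyfacts1}, \ref{lem:easyfacts2}, and~\ref{lem:characterization}. I would therefore present the argument as: (i) fix the lift $R_i$ of $S_i$; (ii) prove $R_i[X_i \cap X_j]$ is the lift of $S_i[X_i \cap X_j \cap W]$, hence pairwise consistency transfers; (iii) show a global witness for the $R_i$ restricts to a global witness for the $S_i$; conclude by contraposition.
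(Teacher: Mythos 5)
Your proposal is correct and is essentially the paper's own argument: both proofs lift the relations on the hyperedges of $H[W]$ to $K$-relations on the hyperedges of $H$ by padding the attributes outside $W$ with a fixed constant value, transfer pairwise consistency to the lifted collection, and then marginalize a global witness for the lifted collection back to one for the original collection. The only differences are cosmetic: you phrase the implication contrapositively and remove all of $V\setminus W$ in one step (checking pairwise consistency via the marginal criterion of Lemma~\ref{lem:characterization}), whereas the paper argues directly, deletes one vertex at a time and iterates, and establishes pairwise consistency of the lifted relations by padding a consistency witness.
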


\begin{proof}
  Assume that the hypergraph~$H=(V,E)$ has the
  \ltgc~for~$K$-relations. We will show that, for every
  vertex~$A\in V$, the
  hypergraph~$H[V\setminus \{A\}]= (V,\{X\setminus \{A\}: X \in E\})$
  also has the \ltgc~for~$K$-relations.  The statement of the lemma
  will follow from iterating this statement over all attributes~$A$
  in~$V \setminus W$.

  Let~$X_1,\ldots,X_m$ be a listing of all hyperedges of~$H$.  Fix a
  vertex~$A$ in~$V$ and write~$Y_i := X_i \setminus \{A\}$ for
  all~$i \in [m]$. Let~$R_1,\ldots,R_m$ be a collection of pairwise
  consistent~$K$-relations for~$Y_1,\ldots,Y_m$. Fix an arbitrary
  value~$u_0$ in the domain~$\domain(A)$ of the attribute~$A$. We
  define a collection of~$K$-relations~$S_1,\ldots,S_m$
  for~$X_1,\ldots,X_m$ as follows. For each~$i \in [m]$
  with~$A \not\in X_i$, let~$S_i := R_i$. For each~$i \in [m]$
  with~$A \in X_i$, let~$S_i$ be the~$K$-relation over~$X_i$ defined
  for every~$X_i$-tuple~$t$ by~$S_i(t) := 0$ if~$t(A) \not= u_0$ and
  by~$S_i(t) := R_i(t[Y_i])$ if~$t(A) = u_0$. We claim that
  the~$K$-relations~$S_1,\ldots,S_m$ are pairwise consistent.

     In order to see this, fix~$i,j \in [m]$ and distinguish the two cases
      whether~$A \not\in X_i X_j$ or~$A \in X_i X_j$: If~$A
     \not\in X_i X_j$, then~$S_i = R_i$ and~$S_j = R_j$ and
     therefore~$S_i$ and~$S_j$ are consistent because~$R_i$ and~$R_j$
     are consistent. If~$A \in X_i X_j$, then let~$R$ be
     a~$K$-relation over~$Y_i Y_j$ that witnesses the
     consistency~$R_i$ and~$R_j$ and let~$S$ be the~$K$-relation
     over~$X_i X_j$ defined for every~$X_i X_j$-tuple~$t$
     by~$S(t) := 0$ if~$t(A) \not= u_0$ and by~$S(t) := R(t[Y_i Y_j])$
     if~$t(A) = u_0$. We claim that~$S$ witnesses the
     consistency of~$S_i$ and~$S_j$. We show that~$S_i \equiv S[X_i]$
     and a symmetric argument will show that~$S_j \equiv S[X_j]$.  In
     order to see this, first we argue that~$R[Y_i] = S[Y_i]$.  Indeed,
     for every~$Y_i$-tuple~$r$ we have
     \begin{align}
      R(r) = \sum_{s \in R' : \atop s[Y_i] = r} R(s) =
      \sum_{t \in \tuples(X_i X_j) : \atop t[Y_i] = r, t(A) = u_0} R(t[Y_i Y_j]) =
      \sum_{t \in S' : \atop t[Y_i] = r} S(t) =
      S(r), \label{eqn:lsllsggg}
    \end{align}
    where the first equality follows from~\eqref{eqn:marginal}, the second
    follows from the fact that the map~$t \mapsto t[Y_i Y_j]$ is
    a bijection between the set of~$X_i X_j$-tuples~$t$ such
    that~$t[Y_i]=r$ and~$t(A) = u_0$ and the set
    of~$Y_i Y_j$-tuples~$s$ such that~$s[Y_i]=r$, the third
    follows from the definition of~$S$, and the fourth follows
    from~\eqref{eqn:marginal}. For later use, let us note that we did
    not assume that~$i \not= j$ for showing~\eqref{eqn:lsllsggg}. In
    case~$i = j$, the~$K$-relation~$R_i$ can serve as~$R$, and~$S$
    equals~$S_i$, which shows that~$R_i = S_i[Y_i]$.

    In case~$A \not\in X_i$, we have that~$Y_i = X_i$,
    hence Equation~\eqref{eqn:lsllsggg} already shows
    that~$S_i = R_i \equiv R[Y_i] = S[Y_i]$, so~$S_i \equiv
    S[X_i]$. In case~$A \in X_i$, let~$a,b \in K\setminus\{0\}$ be
    such that~$a R_i = b R[Y_i]$ and we show~$a S_i = b S[X_i]$. For
    every~$X_i$-tuple~$r$ with~$r(A) \not= u_0$, we have~$S_i(r) = 0$
    and also~$S(r) = \sum_{t : t[X_i] = r} S(t) = 0$
    since~$t[X_i] = r$ and~$A \in X_i$
    implies~$t(A) = r(A) \not= u_0$. Thus, $a S_i(r) = 0 = b S(r)$ in this
    case. For every~$X_i$-tuple~$r$ with~$r(A) = u_0$, we have
    \begin{align}
      a S_i(r) = a R_i(r[Y_i]) = b R(r[Y_i]) = b S(r[Y_i]), \label{lem:inter}
    \end{align}
    where the first equality follows from the definition of~$S_i$ and the
    assumption that~$r(A) = u_0$, the second follows from the choices
    of~$a$ and~$b$, and the third follows from~\eqref{eqn:lsllsggg}.
    Continuing from the right-hand side of~\eqref{lem:inter}, we have
    \begin{align}
    b S(r[Y_i]) = b \sum_{t \in S' : \atop t[Y_i] = r[Y_i]} S(t) =
    b \sum_{t \in S' : \atop t[X_i] = r} S(t) = b S(r),
    \label{lem:retni}
    \end{align}
    where the first equality follows from~\eqref{eqn:marginal}, the second
    follows from the assumption that~$A \in X_i$ and~$r(A) = u_0$
    together with~$S(t) = 0$ in case~$t(A) \not= u_0$, and the third
    follows from~\eqref{eqn:marginal}.  Combining~\eqref{lem:inter}
    with~\eqref{lem:retni}, we get~$a S_i(r) = b S(r)$ also in this
    case. This proves that~$S_i \equiv S[X_i]$;  a completely
    symmetric argument proves that~$S_j \equiv S[X_j]$.

    Since~$S_1,\ldots,S_m$ are pairwise consistent~$K$-relations
    for~$X_1,\ldots,X_m$, by assumption they are globally
    consistent. Let~$N$ be a~$K$-relation
    over~$X_1 \cup \cdots \cup X_m$ that witnesses their
    consistency. Let~$M := N[Y_1 \cup \cdots \cup Y_m]$ and we argue
    that~$M$ witnesses the consistency of~$R_1,\ldots,R_m$, which will
    prove the lemma. Fix~$i \in [m]$ and let~$a,b \in K\setminus\{0\}$
    be such that~$a S_i = b S[X_i]$. For every~$Y_i$-tuple~$r$ we have
   \begin{align}
     a R_i(r) = a S_i(r) = b N(r) = b M(r),
   \end{align}
   where the first follows~$S_i = R_i$ in case~$A \not\in X_i$ and
   from~\eqref{eqn:lsllsggg} applied to~$i = j$ and~$R = R_i$ in
   case~$A \in X_i$, the second follows from the choice of~$a$
   and~$b$, and the third follows from the choice of~$M$ and Part~3 of
   Lemma~\ref{lem:easyfacts1}.
\end{proof}

\begin{lemma} \label{lem:preserv2} If a hypergraph~$H$ has the
  \ltgc~for~$K$-relations, then the hypergraph~$R(H)$ also has
  the \ltgc~for~$K$-relations.
\end{lemma}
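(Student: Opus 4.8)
The plan is to derive the local-to-global consistency property of $R(H)$ from that of $H$ by extending any given family of relations over the maximal hyperedges to a family over \emph{all} hyperedges, using marginalization. Write $H = (V,E)$ with $E = \{X_1,\dots,X_m\}$, and choose the indexing so that $X_1,\dots,X_k$ are precisely the hyperedges that are maximal under inclusion; thus $R(H)$ has hyperedge set $\{X_1,\dots,X_k\}$. For each $i \in [m]$ I would fix an index $j(i) \in [k]$ with $X_i \subseteq X_{j(i)}$, taking $j(i) = i$ when $i \le k$; such a $j(i)$ exists because a non-maximal hyperedge is contained in some hyperedge and, by finiteness of $E$, in a maximal one. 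Note also that $X_1 \cup \cdots \cup X_m = X_1 \cup \cdots \cup X_k$, since every $X_i$ with $i > k$ is contained in some $X_{j(i)}$ with $j(i) \le k$.

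Now suppose $R_1(X_1),\dots,R_k(X_k)$ are pairwise consistent $K$-relations; this is the hypothesis of the local-to-global consistency property for $R(H)$. For $k < i \le m$ I would define $R_i := R_{j(i)}[X_i]$, a $K$-relation over $X_i$, and note that, with the convention $j(i)=i$ for $i \le k$, the identity $R_i = R_{j(i)}[X_i]$ then holds for every $i \in [m]$, since trivially $R_i[X_i] = R_i$. The central step is to check that $R_1,\dots,R_m$ are pairwise consistent. Fix $i, i' \in [m]$, put $j = j(i)$, $j' = j(i')$, and $W = X_i \cap X_{i'}$. Then $W \subseteq X_j \cap X_{j'}$, and since $R_j$ and $R_{j'}$ are pairwise consistent, Lemma~\ref{lem:characterization} gives $R_j[X_j \cap X_{j'}] \equiv R_{j'}[X_j \cap X_{j'}]$; restricting this equivalence to $W$ via Part~3 of Lemma~\ref{lem:easyfacts2} and then Part~3 of Lemma~\ref{lem:easyfacts1} yields $R_j[W] \equiv R_{j'}[W]$. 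On the other hand, $R_i[W] = R_j[X_i][W] = R_j[W]$ and $R_{i'}[W] = R_{j'}[X_{i'}][W] = R_{j'}[W]$, again by Part~3 of Lemma~\ref{lem:easyfacts1}, because $W \subseteq X_i \subseteq X_j$ and $W \subseteq X_{i'} \subseteq X_{j'}$. Hence $R_i[X_i \cap X_{i'}] \equiv R_{i'}[X_i \cap X_{i'}]$, and Lemma~\ref{lem:characterization} tells us that $R_i$ and $R_{i'}$ are consistent.

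Having established that $R_1,\dots,R_m$ are pairwise consistent $K$-relations of schema $X_1,\dots,X_m$, I would invoke the assumption that $H$ has the local-to-global consistency property for $K$-relations to obtain a $K$-relation $N$ over $X_1 \cup \cdots \cup X_m$ with $R_i \equiv N[X_i]$ for all $i \in [m]$, and in particular for all $i \in [k]$. Since $X_1 \cup \cdots \cup X_m = X_1 \cup \cdots \cup X_k$, the relation $N$ witnesses the global consistency of $R_1,\dots,R_k$, which is exactly what is needed to conclude that $R(H)$ has the local-to-global consistency property for $K$-relations.

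The only point that requires care is the verification that the extended family $R_1,\dots,R_m$ is pairwise consistent; this is a routine but slightly fiddly chase through the marginal identities of Lemmas~\ref{lem:easyfacts1} and~\ref{lem:easyfacts2} together with the characterization of consistency in Lemma~\ref{lem:characterization}, and the combinatorial fact that makes it go through is the containment $X_i \cap X_{i'} \subseteq X_{j(i)} \cap X_{j(i')}$, which holds precisely because each non-maximal hyperedge sits inside a chosen maximal one.
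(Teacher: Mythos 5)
Your proof is correct and follows essentially the same route as the paper: extend the given family on the maximal hyperedges to a family on all hyperedges by marginalizing from a chosen covering hyperedge, verify pairwise consistency, invoke the hypothesis on $H$, and observe that the resulting witness lives over the same attribute set, hence also witnesses global consistency for $R(H)$. The only difference is one of packaging: the paper eliminates one covered hyperedge at a time and builds an explicit consistency witness $T[X_j X_m]$ for each pair, whereas you handle all non-maximal hyperedges in a single step and check pairwise consistency through the marginal criterion of Lemma~\ref{lem:characterization} together with Lemmas~\ref{lem:easyfacts1} and~\ref{lem:easyfacts2}.
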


\begin{proof}
  Assume that the hypergraph~$(V,\{X_1,\ldots,X_m\})$ has the
  local-to-global consistency property for~$K$-relations. We will show
  that if~$X_m$ is covered by some other hyperedge,
  i.e.,~$X_m \subseteq X_i$ for some~$i \leq m-1$, then the
  hypergraph~$(V,\{X_1,\ldots,X_{m-1}\})$ also has the \ltgc~for~$K$
  relations. The statement of the lemma will follow from iterating
  this statement over all hyperedges of~$H$ that are covered by some
  other hyperedge of~$H$.

  Let~$R_1,\ldots,R_{m-1}$ be a collection of pairwise
  consistent~$K$-relations for~$X_1,\ldots,X_{m-1}$.
  Define~$R_m := R_i[X_m]$. We claim that~$R_1,\ldots,R_m$ are
  pairwise consistent. It suffices to check that~$R_j$ and~$R_m$ are
  consistent for any~$j \in [m]$ with~$j \not= m$.  By assumption, we
  know that~$R_j$ and~$R_i$ are consistent, which means that there
  exists a~$K$-relation~$T$ that witnesses their consistency; we
  have~$R_j \equiv T[X_j]$ and~$R_i \equiv T[X_i]$.
  Let~$S := T[X_j X_m]$. We have
  \begin{equation}
  R_j \equiv T[X_j] = T[X_j X_m][X_j] = S[X_j],
  \end{equation}
  where the first follows from the choice of $T$, the
  second follows from Part~3 of Lemma~\ref{lem:easyfacts1} and the
  third follows from the choice of~$S$.
  Likewise,
  \begin{equation}
  R_m = R_i[X_m] \equiv T[X_i][X_m] = T[X_m] = T[X_j X_m][X_m] = S[X_m],
  \end{equation}
  where the first equality is by  the choice of~$R_m$, the second follows from the
  choice of $T$, the assumption that~$X_m \subseteq X_i$, and Part~3
  of Lemma~\ref{lem:easyfacts2}, the third follows from the assumption
  that~$X_m \subseteq X_i$ and Part~3 of Lemma~\ref{lem:easyfacts1},
  the fourth follows again from Part~3 of Lemma~\ref{lem:easyfacts1},
  and the fifth follows from the choice of~$S$. Thus,~$S$ witnesses
  the consistency of~$R_j$ and~$R_m$.

  Since~$R_1\ldots,R_m$ are pairwise consistent~$K$-relations
  for~$X_1,\ldots,X_m$, by assumption they are globally
  consistent. The same~$K$-relation that witnesses their global
  consistency also witnesses the global consistency
  of~$R_1,\ldots,R_{m-1}$, which completes the proof.
\end{proof}

%
%

\paragraph{Generalized Tseitin Construction and Proof of Theorem~\ref{thm:main}}
The minimal non-conformal and minimal non-chordal hypergraphs from
Lemmas~\ref{lem:characconf}
and
Lemma \ref{lem:characchord}
share the following
properties: 1)~all their hyperedges have the same number of vertices,
and 2)~all their vertices appear in the same number of hyperedges. For
hypergraphs~$H$ that have these properties, we construct
a collection~$C(H; K)$ of~$K$-relations that are indexed by the
hyperedges of~$H$; these relations will play a crucial role in the proof of
Theorem~\ref{thm:main}.

Let~$H = (V,E)$ be a hypergraph and let~$d$ and~$k$ be positive
integers. The hypergraph~$H$ is called~\emph{$k$-uniform} if every
hyperedge of~$H$ has exactly~$k$ vertices. It is called
\emph{$d$-regular} if any vertex of~$H$ appears in exactly~$d$
hyperedges of~$H$. For example, the non-conformal hypergraph of
Lemma~\ref{lem:characconf} is~$k$-uniform and~$d$-regular
for~$k := d := |W|-1$. Likewise, the non-chordal hypergraph of
Lemma~\ref{lem:characchord} is~$k$-uniform and~$d$-regular
for~$k := d := 2$.  For a positive semiring~$K$ and each~$k$-uniform
and~$d$-regular hypergraph~$H$ with~$d \geq 2$ and with
hyperedges~$E = \{X_1,\ldots,X_m\}$, we construct a collection
of~$K$-relations~$C(H; K) := \{ R_1(X_1),\ldots,R_m(X_m) \}$,
where~$R_i$ is a~$K$-relation that has attributes~$X_i$. The
collection~$C(H; K)$ of these~$K$-relations will turn out to be pairwise consistent but
not globally consistent. Note that by the characterization of
acyclicity in terms of Graham's algorithm, a hypergraph that
is~$k$-uniform and~$d$-regular for some~$k \geq 2$ and~$d \geq 2$
will never be acyclic: Graham's procedure will not even start to
remove any hyperedge or any vertex. Hence, the existence of
the~$K$-relations~$R_1,\ldots,R_m$ that violates the local-to-global
consistency property is compatible with Theorem~\ref{thm:main}. The
construction is defined as follows.

For each~$i \in [m]$ with~$i \not= m$, let~$R_i$ be the unique
unit~$K$-relation over~$X_i$ whose support contains all
tuples~$t : X_i \rightarrow \{0,\ldots,d-1\}$ whose total
sum~$\sum_{C \in X_i} t(C)$ is congruent to~$0$ mod~$d$;
i.e.,~$R(t) := 1$ for each such~$X_i$-tuple, and~$R(t) := 0$ for any
other~$X_i$-tuple.  For~$i = m$, let~$R_i$ be the unique
unit~$K$-relation over~$X_i$ whose support contains all
tuples~$t : X_i \rightarrow \{0,\ldots,d-1\}$ whose total
sum~$\sum_{C \in X_i} t(C)$ is congruent to~$1$ mod~$d$; i.e.,
again~$R(t) := 1$ for each such~$X_i$-tuple, and~$R(t) := 0$
otherwise.

To show the pairwise consistency of~$R_1,\ldots,R_m$, it suffices, by
Lemma~\ref{lem:characterization}, to show that for every two
distinct~$i,j \in [m]$, we have~$R_i[Z] \equiv R_j[Z]$,
where~$Z := X_i \cap X_j$. In turn, this follows from the claim that
for every~$Z$-tuple~$t : Z \rightarrow \{0,\ldots,d-1\}$, we
have~$R_i(t) = R_j(t) = N_Z 1 = 1 + \cdots + 1$, the sum
of~$N_Z := d^{k-|Z|-1}$ many units of the semiring~$K$. Indeed, since
by~$k$-uniformity every hyperedge of~$H$ has exactly~$k$ vertices, for
every~$u \in \{0,\ldots,d-1\}$, there are exactly~$N_Z$
many~$X_i$-tuples~$t_{i,u,1},\ldots,t_{i,u,N_Z}$ that extend~$t$ and
have total sum congruent to~$u$ mod~$d$. It follows then
that~$R_i[Z] = R_j[Z]$ regardless of whether~$n \in \{i,j\}$
or~$n \not\in \{i,j\}$, and hence any two~$R_i$ and~$R_j$ are
consistent by Lemma~\ref{lem:characterization}.
To argue that the relations~$R_1,\ldots,R_m$ are not globally
consistent, we proceed by contradiction. If~$R$ were a~$K$-relation
that witnesses their consistency, then it would be non-empty and its
support would contain a tuple~$t$ such that the projections~$t[X_i]$
belong to the supports~$R'_i$ of the~$R_i$, for each~$i \in [m]$. In
turn this means that
  \begin{align}
    & \textstyle{\sum_{C \in X_i} t(C) \;\equiv\; 0 \text{ mod } d}, \;\;\;\;\;\
 \text{ for $i \not = m$ } \label{eqn:restlu} \\
    & \textstyle{\sum_{C \in X_i} t(C) \;\equiv\; 1 \text{ mod } d}, \;\;\;\;\;\
 \text{ for $i = m$. } \label{eqn:a0lu}
  \end{align}
  Since by $d$-regularity each $C \in V$ belongs to exactly $d$ many
  sets $X_i$, adding up all the equations in~\eqref{eqn:restlu}
  and~\eqref{eqn:a0lu} gives
\begin{equation}
    \textstyle{\sum_{C \in V} dt(C) \;\equiv\; 1 \text{ mod } d},
  \end{equation}
  which is absurd since the left-hand side is congruent to~$0$
  mod~$d$, the right-hand side is congruent to~$1$ mod~$d$,
  and~$d \geq 2$ by assumption.

We now have all the tools needed to present the proof of Theorem \ref{thm:main}.

\begin{proof}[Proof of Theorem~\ref{thm:main}]
  As stated earlier, the direction (a) implies (b) follows from
  Lemma~\ref{lem:newlemma}.  For showing that (b) implies (a), let us
  assume the contrary and then we will derive a contradiction.
  Let~$H = (V,E)$ be a smallest counterexample to the statement
  that~(b) implies~(a), meaning that the following three conditions
  hold: (i)~$H$ is not acyclic, (ii)~$H$ has the
  local-to-global consistency property for~$K$-relations, and
  (iii)~$H$ is minimal in the sense that~$n = |V|$ is smallest
  possible with properties~(i) and~(ii), and among those,~$m = |E|$ is
  smallest possible.  Since~$H$ is not acyclic, we know that~$H$ is
  either not conformal or not chordal. We distinguish the two cases.
  \bigskip

 \noindent\textit{Case 1:}~$H$ is not conformal. By Lemmas~\ref{lem:preserv1}
 and~\ref{lem:preserv2}, the minimality of~$H$, and
 Lemma~\ref{lem:characconf}, we have~$n \geq 3$ and~$m = n$;
 indeed,~$E = \{ V_i : i \in [n] \}$ where~$V_i = V\setminus\{A_i\}$
 and~$A_1,\ldots,A_n$ is an enumeration of~$V$. Thus,~$H$
 is~$k$-uniform and~$d$-regular for~$k = d = n-1 \geq 2$. The
 construction~$C(H; K)$ gives a collection
 of~$K$-relations~$R_1,\ldots,R_n$, where~$R_i$ has attributes~$V_i$,
 which are pairwise consistent but not globally consistent, which is a
 contradiction.  \bigskip

 \noindent \textit{Case 2:} $H$ is not chordal. By
 Lemmas~\ref{lem:preserv1} and~\ref{lem:preserv2}, the minimality
 of~$H$, and Lemma~\ref{lem:characchord}, we have~$n \geq 4$
 and~$m = n$, and indeed~$E = \{ V_i : i \in [n] \}$
 where~$V_i = \{ A_i,A_{i+1} \}$ and~$A_1,\ldots,A_n$ is an
 enumeration of~$V$ with~$A_{n+1} := A_1$. Thus, $H$ is $k$-uniform
 and $d$-regular for $k = d = 2$. Again, the construction $C(H; K)$
 gives a collection of $K$-relations $R_1,\ldots,R_n$, where~$R_i$ has
 attributes~$V_i$, which are pairwise consistent but not globally
 consistent, which is a contradiction.
\end{proof}

An inspection of the proof of Theorem~\ref{thm:main} reveals that
actually a stronger result is established. Specifically, let~$H$ be a
hypergraph and let~$X_1,\ldots,X_m$ be its hyperedges. The proof of
Theorem~\ref{thm:main} shows that if~$H$ is not acyclic, then there
are ordinary relations~$R_1(X_1),\ldots,R_m(X_m)$ such that, for every
positive semiring~$K$, the
unit~$K$-relations~$S_1(X_1),\ldots,S_m(X_m)$ with~$S_i'=R_i$ are
pairwise consistent but globally inconsistent.
%
%
In more
informal terms, the proof of Theorem \ref{thm:main} actually shows
that if a hypergraph is acyclic, then there is an essentially
\emph{uniform} counterexample to the \ltgc~for~$K$-relations that
works for all positive semirings~$K$.

Before establishing the main result in this paper, we bring into the picture one more notion from hypergraph theory that was introduced by
Vorob'ev \cite{vorob1962consistent} in his study of global consistency for probability distributions.

\paragraph{Vorob'ev Regular Hypergraphs}
A \emph{complex} is a hypergraph~$H=(V,E)$ whose set~$E$ of hyperedges
is closed under taking subsets, i.e., if~$X\in E$ and~$Z\subseteq X$,
then~$Z\in E$. The \emph{downward closure} of a hypergraph~$H$ is the
hypergraph whose vertices are those of~$H$ and whose hyperedges are
all the subsets of the hyperedges of~$H$. Clearly, the downward
closure of a hypergraph is a complex.

Let~$K$ be a complex. Let~$X$ and~$Y$ be two different maximal hyperedges
of~$K$, where a hyperedge is \emph{maximal} if it not a proper subset of any other hyperedge.  We say that~$X$ yields a maximal intersection with~$Y$ if the
intersection~$X \cap Y$ is not a proper subset of the intersection~$X
\cap Z$ of some third
\footnote{The condition ``third'' is missing in
  Vorob'ev \cite{vorob1962consistent}, and it is necessary since~$e \cap e'$ is always a proper
  subset of~$e \cap e''$ whenever~$e = e''$ and~$e$ is maximal.}
  hyperedge~$Z$ of~$K$. A maximal hyperedge~$X$ of~$K$ is called \emph{extreme
  in}~$K$ if all maximal intersections of~$X$ with hyperedges of~$K$ are
  equal. Let~$X$ be an extreme hyperedge in~$K$. The \emph{proper} vertices of~$X$
  are those that do not belong to any other maximal hyperedge of~$K$. The
  \emph{normal subcomplex} of~$K$ corresponding to the extreme edge~$X$ is
  the subcomplex of~$K$ consisting of all hyperedges of~$K$ that do not
  intersect the set of proper vertices of~$e$. A subcomplex~$K'$
  of~$K$ is called a \emph{normal subcomplex} if it is the normal subcomplex
  corresponding to some extreme hyperedge of~$K$. A \emph{normal series} of~$K$ is
  a sequence of subcomplexes
\begin{equation}
K = K_0 \supset K_1 \supset \cdots \supset K_r \label{eqn:series}
\end{equation}
of the complex~$K$ in which for every $\ell$ with $1\leq \ell \leq r-1$, the
complex~$K_{\ell+1}$ is a normal subcomplex of the complex~$K_\ell$,
and the final complex~$K_r$ does not have any extreme hyperedges. We say
that~$K$ is \emph{regular} if there exists a normal series of~$K$ in
which the last term is the complex without vertices.  We say that a
hypergraph~$H$ is \emph{Vorob'ev regular} if its downward closure is a
regular complex.

We will show that a hypergraph is Vorob'ev regular if and only if it is acyclic. This result will follow from the next two lemmas and Theorem \ref{thm:BFMY}. We have not been able to locate a published proof of this result in the literature, even though the equivalence between acyclicity and Vorob'ev regularity  has been mentioned  in \cite{hill1991relational,DBLP:journals/sigact/Yannakakis96}.

\begin{lemma} \label{lem:vor-grah}
If~$H$ is a Vorob'ev regular hypergraph, then Graham's
algorithm succeeds on~$H$.
\end{lemma}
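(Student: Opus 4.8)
The plan is to argue by induction on the number $m$ of hyperedges of $H$, after first reducing to the case that $H$ is reduced. Since Graham's algorithm has the Church--Rosser property, one may perform all applications of rule~2 before anything else, which transforms $H$ into its reduction $R(H)$; because $H$ and $R(H)$ have the same downward closure (so one is Vorob'ev regular precisely when the other is) and Graham succeeds on $H$ precisely when it succeeds on $R(H)$, I may assume $H=R(H)$. Let $K$ be the downward closure of $H$, so that the maximal hyperedges of $K$ are exactly the hyperedges of $H$. If $m=0$ then $H$ is already the empty hypergraph and there is nothing to prove; this is the base case.

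For the inductive step, suppose $m\geq 1$ and that $H$ is Vorob'ev regular, so $K$ admits a normal series whose last term is the vertex-less complex. Since $K$ has at least one vertex, this series has length at least one, so $K$ has an extreme hyperedge $X$; the normal subcomplex $K_1$ of $K$ corresponding to $X$ then inherits the tail of this normal series, hence is itself regular. Let $P$ be the set of proper vertices of $X$, that is, the vertices of $X$ that lie in no hyperedge of $H$ other than $X$, and put $I:=X\setminus P$. The crucial structural claim I would prove is that $I=\bigcup_{Y\in E,\ Y\neq X}(X\cap Y)$ and, using that $X$ is extreme, that $I$ equals the common value of all maximal intersections of $X$ with other hyperedges and is therefore contained in some hyperedge $Y_0$ of $H$ with $Y_0\neq X$ (when $m=1$ one simply has $P=X$, $I=\emptyset$, and no $Y_0$). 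Granting the claim, I run Graham's algorithm on $H$: every vertex of $P$ occurs only in $X$, so rule~1 deletes all of $P$ from $X$, leaving $X$ equal to $I$; then rule~2 removes the former hyperedge $X$ because $I\subseteq Y_0$ for a still-present $Y_0\neq X$ (in the degenerate case $m=1$, $X$ has become empty and is discarded). After these steps Graham has turned $H$ into the reduced hypergraph $H''=(V,\{Y\in E: Y\neq X\})$, which has $m-1$ hyperedges.

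To finish, I would check that the downward closure of $H''$ is exactly $K_1$: any subset of a hyperedge $Y\neq X$ is disjoint from $P$ (since $Y\cap P=\emptyset$) and so belongs to $K_1$, while conversely a member of $K_1$ is a subset of some hyperedge of $H$ that is disjoint from $P$, and if that hyperedge happens to be $X$ then the member is a subset of $X\setminus P=I\subseteq Y_0$. Since $K_1$ is regular, $H''$ is Vorob'ev regular with one fewer hyperedge, so the induction hypothesis gives that Graham succeeds on $H''$; and since Graham reached $H''$ from $H$ and is confluent, it succeeds on $H$. I expect the main obstacle to be precisely the structural claim: proving that the non-proper part $I$ of an extreme hyperedge is the common value of all its maximal intersections with other hyperedges, and is therefore covered by some other hyperedge. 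This is where the definition of an extreme hyperedge (equality of all its maximal intersections) is really used; the remaining ingredients — confluence of Graham's algorithm, the identity between the downward closure of $H''$ and $K_1$, and the degenerate single-hyperedge case — are routine.
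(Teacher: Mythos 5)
Your proposal is correct and follows the same basic strategy as the paper (peel off an extreme hyperedge of the downward closure using Graham's two rules, identify the downward closure of what remains with the normal subcomplex $K_1$, and induct), but it differs in one substantive step. The paper inducts on the length of the normal series and orders the Graham operations the other way around: it first deletes the hyperedges of $H$ properly contained in the extreme hyperedge $X$ (rule~2), then deletes the proper vertices of $X$ (rule~1), and it \emph{keeps} the trimmed hyperedge $I = X\setminus P$ in the resulting hypergraph $H_1$; since $I$ is itself a hyperedge of $K_1$, the identity ``downward closure of $H_1$ equals $K_1$'' holds with no further argument, and the induction hypothesis applies. You instead remove the hyperedge $X$ entirely, which forces you to prove the structural claim that $I$ is contained in some other hyperedge $Y_0$; this is exactly the extra ingredient the paper's ordering avoids. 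Your claim is true and your sketch of it is sound, provided maximal intersections are read (as the paper itself does in the proof of Lemma~\ref{lem:join-vor}) as intersections $X\cap Z$ with \emph{maximal} hyperedges $Z$ distinct from $X$: by finiteness every $X\cap Y$ with $Y\neq X$ maximal sits inside an inclusion-maximal member of $\{X\cap Z : Z \text{ maximal},\, Z\neq X\}$, which is a maximal intersection in Vorob'ev's sense; extremeness makes all of these equal to a common set $M=X\cap Y_0$, whence $I=\bigcup_{Y\neq X}(X\cap Y)=M\subseteq Y_0$ (the case where $X$ meets no other hyperedge gives $I=\emptyset$ and is harmless, and you treat $m=1$ separately, as does the paper implicitly). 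So the two proofs buy slightly different things: the paper's version needs no structural lemma about extreme hyperedges and inducts directly on the normal series, while yours, at the cost of proving the covering claim, reduces to a genuinely smaller hypergraph ($E\setminus\{X\}$, with one fewer hyperedge) and inducts on the number of hyperedges after the initial reduction; both are valid, but you should write out the covering claim in full, since it is the one place where extremeness is actually used in your argument.
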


\begin{proof}
Let~$H$ be a Vorob'ev regular hypergraph. The proof that
Graham's algorithm succeeds on~$H$ is by induction on the length of a
normal series of the downward closure~$K$ of~$H$.  If the length of a
normal series of~$K$ is zero, then~$K$ and hence~$H$ itself is the
empty hypergraph and there is nothing to prove.  Assume then that~$K$
has a normal series that has length at least one, let~$K_1$ be the
first subcomplex of~$K$ in the series, and let~$X$ be the extreme hyperedge
of~$K$ corresponding to which~$K_1$ is its normal
subcomplex. Then~$K_1$ consists of the hyperedges of~$K$ that do not
intersect the proper vertices of~$e$. Equivalently,~$K_1$ is obtained
from~$K$ by deleting all the proper vertices of~$e$. Since the proper
vertices of~$e$ appear in no other maximal hyperedge of~$K$, this means
that if we delete the proper vertices of~$e$ from all the hyperedges of~$H$
in which they appear, then we obtain a hypergraph whose downward
closure is~$K_1$. Moreover, such a hypergraph~$H_1$ can be obtained
from~$H$ by a applying a sequence of operations of Graham's algorithm:
first delete all the hyperedges that are proper subsets of~$e$, then delete
all the proper vertices of~$X$. Now,~$K_1$ is also Vorob'ev's regular,
and its normal series has length one less than that of~$K$. Hence, by
induction hypothesis, Graham's algorithm succeeds on~$H_1$, which
means that there is a sequence of operations of Graham's algorithm
that applied to~$H_1$ yield the empty hypergraph. By concatenating the
two sequences of operations, we get a single sequence of operations of
Graham's algorithm that starts at~$H$ and yields the empty
hypergraph. This proves that Graham's algorithm succeeds on~$H$.
\end{proof}

\begin{lemma} \label{lem:join-vor}
If~$H$ is a hypergraph that has a join tree, then~$H$ is
Vorob'ev regular.
\end{lemma}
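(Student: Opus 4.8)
The plan is to induct on the number of hyperedges, peeling off a leaf of the join tree as the first extreme hyperedge of a normal series. First I would reduce to the case that $H$ is reduced: by Theorem~\ref{thm:BFMY}, having a join tree is equivalent to acyclicity, and $H$ is acyclic exactly when $R(H)$ is, so $R(H)$ also has a join tree; since the downward closures of $H$ and $R(H)$ coincide, $H$ is Vorob'ev regular if and only if $R(H)$ is. So assume $H = (V,E)$ is reduced, let $T$ be a join tree for $H$, and write $K$ for the downward closure of $H$; the maximal hyperedges of $K$ are precisely the hyperedges of $H$. If $E = \emptyset$ there is nothing to prove, and if $E = \{X\}$ then $X$ is vacuously extreme in $K$, its proper vertices are all of $V$, and the corresponding normal subcomplex is the complex without vertices, so $K \supset K_1$ with $K_1$ the complex without vertices is a normal series and we are done.

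Now suppose $|E| \geq 2$, pick a leaf $X$ of $T$, and let $Y$ be its unique neighbour in $T$. The crucial observation is that $X \cap Z \subseteq X \cap Y$ for every $Z \in E$ with $Z \neq X$: any vertex $v \in X \cap Z$ belongs to the two nodes $X$ and $Z$ of $T$, hence, by the subtree property of join trees, to every node on the unique $X$--$Z$ path in $T$, and this path begins $X, Y, \dots$ because $X$ is a leaf, so $v \in Y$. Two consequences follow. First, $X$ is an extreme hyperedge of $K$: for a maximal hyperedge $W$, if $X \cap W = X \cap Y$ then $X \cap W$ is a proper subset of $X \cap Z$ for no third hyperedge $Z$ (all such $X \cap Z$ lie inside $X \cap Y$), so $X$ yields a maximal intersection with $W$; and if $X \cap W \subsetneq X \cap Y$ then taking $Z = Y$ shows $X$ does not yield a maximal intersection with $W$. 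Hence all maximal intersections of $X$ equal $X \cap Y$. Second, the proper vertices of $X$ are exactly $X \setminus Y$, since a vertex of $X$ lies in no other maximal hyperedge of $K$ precisely when it lies outside $Y$.

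I would then identify the normal subcomplex $K_1$ of $K$ corresponding to $X$ --- the hyperedges of $K$ disjoint from $X \setminus Y$ --- with the downward closure of the hypergraph $H'$ obtained from $H$ by deleting the hyperedge $X$ (and the vertices $X \setminus Y$, which occur in no other hyperedge): every hyperedge of $K$ contained in some $Z \in E$ with $Z \neq X$ is automatically disjoint from $X \setminus Y$, since $Z \cap (X \setminus Y) = \emptyset$ by the crucial observation, while a subset $W$ of $X$ disjoint from $X \setminus Y$ satisfies $W \subseteq X \cap Y \subseteq Y$ and so is a hyperedge of the downward closure of $H'$ as well. Deleting the leaf $X$ from $T$ yields a join tree for $H'$, and $H'$ has one fewer hyperedge, so by the induction hypothesis $H'$ is Vorob'ev regular; that is, $K_1$ has a normal series whose last term is the complex without vertices. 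Prepending the step $K = K_0 \supset K_1$, which is legitimate because $K_1$ is the normal subcomplex of $K$ attached to the extreme hyperedge $X$, produces a normal series of $K$ with the required last term, so $K$ is regular and $H$ is Vorob'ev regular.

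The one substantive step is the pair of claims that a leaf of the join tree is extreme in the downward closure and that its proper vertices are exactly the vertices it does not share with its neighbour; everything else --- the preliminary reduction, preservation of the join tree when a leaf is deleted, and the base-case bookkeeping around the complex without vertices --- is routine.
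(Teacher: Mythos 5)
Your proof is correct and takes essentially the same route as the paper: induct on the number of hyperedges, show that a leaf $X$ of the join tree is extreme in the downward closure with all maximal intersections equal to $X \cap Y$, identify the corresponding normal subcomplex with the downward closure of the trimmed hypergraph, and invoke the induction hypothesis after trimming the leaf from the join tree. The only cosmetic difference is that you dispose of non-maximal leaves up front by passing to the reduction $R(H)$ (via Theorem~\ref{thm:BFMY} and the fact that $H$ and $R(H)$ share the same downward closure), whereas the paper handles that situation as a separate case inside the induction.
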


\begin{proof}
  Let~$H$ be a hypergraph that has a join tree.  By induction on the
  number of its hyperedges, we show that~$H$ is Vorob'ev
  regular. Let~$K$ be the downward closure of~$H$. If~$H$ has no
  hyperedges, then it is Vorob'ev regular. If~$H$ has just one
  hyperedge~$X$, then~$X$ is an extreme hyperedge of~$K$ since it
  yields no maximal intersections at all. It follows that~$K$ is
  regular since all the vertices of~$e$ are proper, so~$H$ is Vorob'ev
  regular. Assume now that~$H$ has more than one hyperedges. Let~$X$
  be a leaf of the join tree of~$H$, and let~$Y$ be the unique
  hyperedge of~$H$ such that~$\{X,Y\}$ is an edge of the join tree. We
  consider the following two cases.
  \bigskip

  \noindent \textit{Case 1:} If~$X$ is not a maximal hyperedge,
  say~$X \subseteq Z$ for some other edge~$Z$ of~$H$,
  then~$X = X \cap Z \subseteq X \cap Y$ by the connectivity property
  of the join tree since the unique path from~$X$ to~$Z$ in the tree
  must pass through~$Y$. Hence, $X \subseteq Y$. Now, let~$H'$ be the
  hypergraph that results from deleting~$X$ from~$H$.
  Since~$X \subseteq Y$, the tree that results from trimming the
  leaf~$X$ from the join tree of~$H$ is a join tree of~$H'$.  By
  induction hypothesis,~$H'$ is Vorob'ev regular. But~$X$ was not
  maximal in~$H$, so~$H$ and~$H'$ have the same downward closure~$K$,
  which shows that~$H$ is also Vorob'ev regular.
  \bigskip

  \noindent \textit{Case 2:} If~$X$ is a maximal hyperedge of~$H$,
  then we claim that~$X$ is an extreme hyperedge of~$K$. First,~$X$ is
  a maximal hyperedge of~$K$ by assumption.  Second, we show that~$X$
  yields maximal intersection with~$Y$. Indeed, if~$Z$ is a third
  hyperedge of~$H$, then the unique path from~$X$ to~$Z$ in the join
  tree goes through~$Y$ and, by the connectivity property of the join
  tree, every vertex in~$X \cap Z$ belongs to~$Z \cap Y$.
  So~$X \cap Y$ is not a proper subset of~$X \cap Z$. Next, we show
  that all maximal intersections of~$X$ are equal to the maximal
  intersection of~$X$ with~$Y$. Let~$Z$ be a third hyperedge of~$H$
  and assume that~$X$ yields a maximal intersection with~$Z$. In
  particular, $X \cap Z$ is a not a proper subset of~$X \cap Y$. But
  the connectivity property of the join tree
  implies~$X \cap Z \subseteq X \cap Y$ since the unique path from~$X$
  to~$Z$ in the join tree goes through~$Y$.
  Thus~$X \cap Z = X \cap Y$.

  We proved that~$X$ is an extreme hyperedge of~$K$. Now, let~$H'$ be
  the hypergraph that is obtained from~$H$ by deleting~$X$ and all the
  vertices that appeared only in~$X$. The normal subcomplex~$K'$
  of~$K$ corresponding to~$X$ is the downward closure of~$H'$. If we
  trim the leaf~$X$ from the join tree of~$H$, we get a join tree
  of~$H'$ with one node less. It follows from the induction hypothesis
  that~$H'$ is Vorob'ev regular. Thus~$K'$ is Vorob'ev regular, and
  so is~$K$ since~$K'$ is a normal subcomplex of it.
\end{proof}

Lemma \ref{lem:vor-grah}, Lemma \ref{lem:join-vor}, and Theorem \ref{thm:BFMY} imply the following result.

\begin{corollary} \label{cor:vorobev}
Let $H$ be a hypergraph. The following statements are equivalent:
\begin{enumerate} \itemsep=0pt
\item[(a)] $H$ is an acyclic hypergraph.
\item[(b)] $H$ is a Vorob'ev regular hypergraph.
\end{enumerate}
\end{corollary}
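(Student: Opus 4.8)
The plan is to prove the two implications (a)$\Rightarrow$(b) and (b)$\Rightarrow$(a) by chaining together the two lemmas just established with the appropriate clauses of Theorem~\ref{thm:BFMY}. No new combinatorics is needed here: all of the substantive work already lives inside Lemma~\ref{lem:vor-grah}, Lemma~\ref{lem:join-vor}, and Theorem~\ref{thm:BFMY}, so the corollary is obtained purely by routing the implications through the right intermediate properties, namely ``has a join tree'' for one direction and ``is accepted by Graham's algorithm'' for the other.

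For (a)$\Rightarrow$(b), I would assume that $H$ is acyclic. By the equivalence of clauses (a) and (d) in Theorem~\ref{thm:BFMY}, the hypergraph $H$ has a join tree. Lemma~\ref{lem:join-vor} then applies and yields that $H$ is Vorob'ev regular, which is exactly statement (b).

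For (b)$\Rightarrow$(a), I would assume that $H$ is Vorob'ev regular. By Lemma~\ref{lem:vor-grah}, Graham's algorithm succeeds on $H$; that is, $H$ is accepted by Graham's algorithm, which is clause (e) of Theorem~\ref{thm:BFMY}. The equivalence of clauses (a) and (e) in that theorem then gives that $H$ is acyclic, which is statement (a). Combining the two implications establishes the claimed equivalence.

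The only point that requires any care is the bookkeeping: one must make sure that the intermediate properties chosen in the two directions are precisely the ones that the two lemmas, respectively, consume and produce, so that the argument closes into a genuine cycle and neither direction is left unjustified. There is no real obstacle at this stage; if anything, the ``hard part'' was proving Lemmas~\ref{lem:vor-grah} and~\ref{lem:join-vor} themselves — in particular verifying in Lemma~\ref{lem:join-vor} that trimming a leaf of a join tree corresponds to passing to the normal subcomplex of the downward closure associated with that leaf — and with those lemmas in hand the corollary is immediate.
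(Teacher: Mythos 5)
Your proof is correct and follows exactly the route the paper intends: acyclicity yields a join tree via Theorem~\ref{thm:BFMY}, Lemma~\ref{lem:join-vor} gives Vorob'ev regularity, and conversely Lemma~\ref{lem:vor-grah} plus the equivalence of acyclicity with acceptance by Graham's algorithm closes the cycle. Nothing further is needed.
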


Finally, by combining Theorem \ref{thm:BFMY}, Theorem \ref{thm:main},
and Corollary \ref{cor:vorobev}, we obtain the main result of this
paper.

\begin{theorem} \label{thm:BFMY-ext} Let $K$ be a positive semiring
  and let $H$ be a hypergraph. The following statements are
  equivalent:
\begin{enumerate} \itemsep=0pt
\item[(a)] $H$ is an acyclic hypergraph.
\item[(b)] $H$ is a conformal and chordal hypergraph.
\item[(c)] $H$ has the running intersection property.
\item[(d)] $H$ has a join tree.
\item[(e)] $H$ is accepted by Graham's algorithm.
\item[(f)] $H$ is a Vorob'ev regular hypergraph.
\item[(g)] $H$ has the \ltgc~for $K$-relations.
\end{enumerate}
\end{theorem}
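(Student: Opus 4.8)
The plan is to assemble the equivalence from the three results already established in the paper, so that essentially no new argument is needed. First I would invoke Theorem~\ref{thm:BFMY}, which establishes that statements~(a)--(e) are pairwise equivalent: acyclicity, being conformal and chordal, the running intersection property, the existence of a join tree, and acceptance by Graham's algorithm. Theorem~\ref{thm:BFMY} also lists ``has the \ltgc~for ordinary relations'' among its equivalent conditions, but I do not need that clause here, since the $K$-relation strengthening will be obtained directly.

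Second, I would bring in Corollary~\ref{cor:vorobev}, which gives that~(a) and~(f) are equivalent, i.e., that acyclicity coincides with Vorob'ev regularity. That corollary was itself deduced from Lemma~\ref{lem:vor-grah} (Vorob'ev regular implies Graham's algorithm succeeds), Lemma~\ref{lem:join-vor} (having a join tree implies Vorob'ev regular), together with Theorem~\ref{thm:BFMY}, so it attaches cleanly to the chain through the already-established equivalence of~(a),~(d), and~(e).

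Third, I would apply Theorem~\ref{thm:main}, which states precisely that, for the fixed positive semiring~$K$, a hypergraph is acyclic if and only if it has the \ltgc~for $K$-relations; that is,~(a) and~(g) are equivalent. Chaining these three facts, each of the seven statements is equivalent to~(a), hence all seven are equivalent to one another, which proves the theorem.

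The only point requiring care is bookkeeping rather than mathematics: one must cite Theorem~\ref{thm:BFMY} for its syntactic clauses~(a)--(e) only, and make sure that the quantification over the semiring~$K$ in Theorem~\ref{thm:main} is aligned with the ``Let $K$ be a positive semiring'' in the present hypothesis, so that no quantifier mismatch is introduced when the statements are merged. With that observed, the proof is a one-line appeal to the three prior results, and I expect no genuine obstacle.
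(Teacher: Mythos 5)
Your proposal is correct and follows essentially the same route as the paper, which obtains Theorem~\ref{thm:BFMY-ext} precisely by combining Theorem~\ref{thm:BFMY} for clauses~(a)--(e), Corollary~\ref{cor:vorobev} for~(a)$\Leftrightarrow$(f), and Theorem~\ref{thm:main} for~(a)$\Leftrightarrow$(g). No gaps to report.
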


By applying Theorem~\ref{thm:BFMY-ext} with~$K$ set to the Boolean
semiring~$\mathbb{B}$, we obtain the original
Beeri-Fagin-Maier-Yannakakis  Theorem~\ref{thm:BFMY}. It should
be noted that while our proof of Theorem~\ref{thm:BFMY-ext} actually
used Theorem~\ref{thm:BFMY} several times, all the uses
of~Theorem~\ref{thm:BFMY} made were for arguing that the various syntactic
characterizations of hypergraph acyclicity are equivalent. These
equivalences can be shown directly without referring to any semantic
notion, and for this reason we can say that our proof does
\emph{not} rely on the semantic part of Theorem~\ref{thm:BFMY}. In fact, the
main construction that we gave in the proof of Theorem~\ref{thm:main}
appears to be new and gives an alternative proof of the semantic part
of Theorem~\ref{thm:BFMY}: If~$H$ has \ltgc~for ordinary relations, then~$H$
is acyclic.

While still different, the construction we gave in the proof of
Theorem~\ref{thm:main} is closer to Vorob'ev's proof of his theorem
from~\cite{vorob1962consistent} than to the proof of the  Theorem~\ref{thm:BFMY}
from~\cite{BeeriFaginMaierYannakakis1983}. We discuss this next.

\subsection{Consistency of Probability Distributions and Vorob'ev's Theorem}

In the rest of this section, we study the consistency of~$K$-relations
when the semiring~$K$ is~$\mathbb{R}^{\geq 0}$, i.e., the set of
non-negative real numbers with the standard addition and multiplication
operations. We place the focus on probability distributions, which, to
recall, are nothing but the~$\mathbb{R}^{\geq 0}$-relations~$T$ that
satisfy the normalization constraint
\begin{equation}
T[\emptyset] = \sum_{t \in T'} T(t) = 1. \label{eqn:normalization}
\end{equation}
Our goal for the rest of this section is to show that the main
result of Vorob'ev from \cite{vorob1962consistent} follows from
our general result Theorem~\ref{thm:BFMY-ext} about arbitrary
positive semirings.

\paragraph{Consistency of Probability Distributions}
Following~\cite{vorob1962consistent}, we say that two probability
distributions~$P(X)$ and~$Q(Y)$ are consistent if there exists a
probability distribution~$T(XY)$ such that~$T[X]=P$ and~$T[Y]=Q$. A
collection~$P_1(X_1),\ldots,P_m(X_m)$ of probability distributions
is~\emph{globally consistent} if there exists a probability
distribution~$P(X_1 \cdots X_m)$ such that~$P[X_i] = P_i$ holds for
every~$i \in [m]$. The collection is called~\emph{pairwise consistent}
if any two distributions in the collection are consistent. We start by
showing that, when the probability distributions are presented
as~$\mathbb{R}^{\geq 0}$-relations that
satisfy~\eqref{eqn:normalization}, this classical notion of
consistency of probability distributions coincides with the notion of
consistency that we have been studying in this paper. The following
basic fact was observed already
in~Section~\ref{sec:valuedrelations}. It says that,
as~$\mathbb{R}^{\geq 0}$-relations, the probability distributions are
the canonical representatives of their equivalence classes
under~$\equiv$. We give an even shorter proof in a slightly different
language.

\begin{lemma} \label{lem:loosevsstrict} For every two probability
  distributions~$R(X)$ and~$S(X)$ over the same set of attributes~$X$,
  it holds that~$R \equiv S$ if and only if~$R = S$.
\end{lemma}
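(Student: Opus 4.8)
The plan is to dispatch the \emph{if} direction trivially and then to obtain the \emph{only if} direction by summing the witnessing identity and invoking the normalization constraint. For the \emph{if} direction: if $R = S$, then $1 \cdot R = 1 \cdot S$ with $1 \neq 0$, so $R \equiv S$ by definition.

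For the \emph{only if} direction, suppose $R \equiv S$, witnessed by non-zero reals $a$ and $b$ with $aR = bS$; that is, $aR(t) = bS(t)$ for every $X$-tuple $t$. First I would sum this identity over all $X$-tuples $t$. The $X$-tuples outside $R'$ (respectively outside $S'$) contribute $0$, so the left-hand side collapses to $a\sum_{t \in R'} R(t)$ and the right-hand side to $b\sum_{t \in S'} S(t)$; by the normalization constraint~\eqref{eqn:normalization} both of these sums equal $1$, and hence $a = b$. Then from $aR(t) = aS(t)$ for every $t$, together with $a \neq 0$ and the fact that $\mathbb{R}^{\geq 0}$ has no zero-divisors (equivalently, is a semifield, so one may just divide by $a$), I conclude $R(t) = S(t)$ for all $t$, i.e.\ $R = S$.

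There is essentially no obstacle here. The only step meriting a word of care is the passage from the tuple-wise identity $aR = bS$ to the scalar identity $a = b$, which is precisely where the normalization hypothesis enters; dropping it makes the claim false (for instance $2R \equiv R$ while $2R \neq R$ as $\mathbb{R}^{\geq 0}$-relations). Everything else is a one-line computation.
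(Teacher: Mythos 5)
Your proof is correct and follows essentially the same route as the paper: summing the identity $aR = bS$ over all tuples is exactly the paper's computation $a = aR[\emptyset] = bS[\emptyset] = b$ via the normalization constraint, after which dividing by $a = b \neq 0$ gives $R = S$.
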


\begin{proof}
  The \emph{if} direction is trivial. For the \emph{only if}
  direction, let~$a$ and~$b$ be positive reals such
  that~$aR(t) = bS(t)$ holds for every~$X$-tuple $t$.
  Then~$a = aR[\emptyset] = bS[\emptyset] = b$, where the first
  follows from~\eqref{eqn:normalization}, the second follows
  from~\eqref{eqn:marginal} and the choice of~$a$ and~$b$, and the
  third follows from~\eqref{eqn:normalization}. Dividing through
  by~$a = b \not= 0$, it follows that~$R(t) = S(t)$ holds for
  every~$X$-tuple~$t$.
\end{proof}

The next lemma states that any~$\mathbb{R}^{\geq 0}$-relation that
witnesses the consistency of a collection of probability distributions
is itself a probability distribution.

\begin{lemma} \label{lem:closure} For every collection~$R_1,\ldots,R_m$
  of probability distributions and
  every~$\mathbb{R}^{\geq 0}$-relation~$R$, if~$R$ witnesses the global
  consistency of~$R_1,\ldots,R_m$, then~$R$ is itself a probability
  distribution.
\end{lemma}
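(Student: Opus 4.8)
The plan is to observe that, in the present section, ``$R$ witnesses the global consistency of $R_1,\ldots,R_m$'' has to be read in the strict (Vorob'ev-style) sense, namely $R[X_i]=R_i$ for every $i\in[m]$, with $R$ only assumed to be an $\mathbb{R}^{\geq 0}$-relation and \emph{not} assumed in advance to be a probability distribution; the whole content of the lemma is then that the normalization constraint is automatically inherited from the $R_i$. So I would not try to locate any genuinely hard step: the proof is a one-line marginal chase. First I would note that the collection is non-empty, i.e. $m\geq 1$, and fix the index $i=1$ (any fixed $i$ works).

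Then I would compute the total mass of $R$ by passing through the empty marginal:
$$\sum_{t\in R'} R(t)\;=\;R[\emptyset]\;=\;R[X_1][\emptyset]\;=\;R_1[\emptyset]\;=\;1,$$
where the first equality is the definition of the marginal over the empty tuple~\eqref{eqn:marginal}, the second is Part~3 of Lemma~\ref{lem:easyfacts1} applied to the nesting $\emptyset\subseteq X_1\subseteq X_1\cup\cdots\cup X_m$, the third is the witnessing hypothesis $R[X_1]=R_1$, and the fourth is the normalization equation~\eqref{eqn:normalization} for the probability distribution $R_1$. Since $R$ thus satisfies~\eqref{eqn:normalization}, by definition $R$ is a probability distribution over $X_1\cup\cdots\cup X_m$, which is exactly the conclusion.

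The ``hard part'' is really only a matter of care, not of difficulty. The two points to watch are: (i) one must actually use that the collection is non-empty — without it the statement would be vacuous, and, relatedly, if ``witnesses'' were instead read in the $\equiv$-sense of the general definition the statement would be false, since any positive scalar multiple of a probability distribution witnesses global consistency in that weaker sense without being normalized; and (ii) the invocation of Part~3 of Lemma~\ref{lem:easyfacts1} is for the chain of subsets $\emptyset\subseteq X_1\subseteq X_1\cup\cdots\cup X_m$, collapsing $R[X_1][\emptyset]$ to $R[\emptyset]$, rather than for $R_1$ directly. I would also add a sentence remarking that this lemma, together with Lemma~\ref{lem:loosevsstrict}, is precisely what allows one to move between the $\equiv$-based notion of (global) consistency used throughout the paper and Vorob'ev's original notion phrased purely in terms of probability distributions.
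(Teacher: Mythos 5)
Your proof is correct and is essentially the paper's own argument: both compute $R[\emptyset]=R[X_i][\emptyset]=R_i[\emptyset]=1$ via Part~3 of Lemma~\ref{lem:easyfacts1} and the normalization of $R_i$, concluding that $R$ satisfies~\eqref{eqn:normalization}. Your added remarks on the strict (equality) reading of ``witnesses'' and on non-vacuity are accurate clarifications but do not change the argument.
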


\begin{proof}
  For any~$i \in [m]$ we have~$R[X_i] = R_i$, where~$X_i$ is the set
  of attributes of~$R_i$. By Part~3 of Lemma~\ref{lem:easyfacts1} we
  have~$R[\emptyset] = R[X_i][\emptyset] = R_i[\emptyset] = 1$, for
  any~$i \in [m]$; i.e.,~$R$ satisfies~\eqref{eqn:normalization} and
  is hence a probability distribution.
\end{proof}

In view of Lemma~\ref{lem:loosevsstrict} and~\ref{lem:closure}, the
classical notions of consistency of probability distributions
coincides with the notions of consistency of~$K$-relations that we
have been studying in this paper when~$K = \mathbb{R}^{\geq 0}$. We
are ready to state the \ltgc~for probability distributions.

\paragraph{Vorob'ev's Theorem}
Let~$H$ be a hypergraph and let~$X_1,\ldots,X_m$ be a listing of all
its hyperedges. We say that~$H$ has the \emph{\ltgc~for probability
  distributions} if every pairwise consistent collection of
probability distributions~$P_1(X_1),\ldots,P_m(X_m)$ is globally
consistent.

One of the main motivations for writing this paper was to obtain a
common generalization of the result of Beeri, Fagin, Maier, and
Yannakakis stated in Theorem~\ref{thm:BFMY} and the result of~Vorob'ev
stated next.

\begin{theorem}[Theorem 4.2 in
  \cite{vorob1962consistent}] \label{thm:vorothm} Let $H$ be a
  hypergraph. The following statements are equivalent:
\begin{enumerate} \itemsep=0pt
\item[(a)] $H$ is a Vorob'ev regular hypergraph.
\item[(b)] $H$ has the local-to-global consistency property for probability distributions.
\end{enumerate}
\end{theorem}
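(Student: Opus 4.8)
The plan is to obtain Theorem~\ref{thm:vorothm} as a corollary of the machinery already developed: by Corollary~\ref{cor:vorobev}, statement~(a) is equivalent to $H$ being acyclic, and by Theorem~\ref{thm:main} applied with $K = \mathbb{R}^{\geq 0}$, acyclicity of $H$ is equivalent to $H$ having the \ltgc~for $\mathbb{R}^{\geq 0}$-relations. Hence it suffices to prove that $H$ has the \ltgc~for $\mathbb{R}^{\geq 0}$-relations if and only if $H$ has the \ltgc~for probability distributions. The bridge between the two notions is the normalization operation $R \mapsto R^*$ from Example~\ref{ex:examplessemirings} together with Lemmas~\ref{lem:loosevsstrict} and~\ref{lem:closure}, which were set up precisely to identify the classical notions of consistency of probability distributions with the up-to-normalization notions studied in this paper. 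Throughout I would use freely that a classical witness ($T[X] = P$, $T[Y] = Q$) is in particular an up-to-normalization witness ($P \equiv T[X]$, $Q \equiv T[Y]$) by reflexivity of $\equiv$, and conversely that an up-to-normalization witness $T$ can be turned into a classical one by passing to $T^*$ and invoking Lemma~\ref{lem:loosevsstrict}, using that $T$ is non-empty (because the $P_i$ are non-empty and $\equiv$ preserves supports by Part~2 of Lemma~\ref{lem:easyfacts2}).

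For the direction from the \ltgc~for $\mathbb{R}^{\geq 0}$-relations to the \ltgc~for probability distributions, I would take a pairwise consistent collection $P_1(X_1),\dots,P_m(X_m)$ of probability distributions, note that it is pairwise consistent also in the up-to-normalization sense, and apply the hypothesis to obtain an $\mathbb{R}^{\geq 0}$-relation $R$ over $X_1 \cup \cdots \cup X_m$ with $P_i \equiv R[X_i]$ for all $i$. Since each $P_i$ is non-empty, $R$ is non-empty, so $Q := R^*$ is a well-defined probability distribution with $Q \equiv R$. Then $Q[X_i] \equiv R[X_i] \equiv P_i$ by Part~3 of Lemma~\ref{lem:easyfacts2}, and $Q[X_i]$ is itself a probability distribution because $Q[X_i][\emptyset] = Q[\emptyset] = 1$ by Part~3 of Lemma~\ref{lem:easyfacts1}; hence $Q[X_i] = P_i$ by Lemma~\ref{lem:loosevsstrict}, so $Q$ witnesses the classical global consistency of $P_1,\dots,P_m$.

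For the converse I would argue contrapositively. Suppose $H$ is not Vorob'ev regular; by Corollary~\ref{cor:vorobev}, $H$ is not acyclic, and then the generalized Tseitin construction used in the proof of Theorem~\ref{thm:main} provides $\mathbb{R}^{\geq 0}$-relations $R_1(X_1),\dots,R_m(X_m)$ that are pairwise consistent but not globally consistent. Put $P_i := R_i^*$, so each $P_i$ is a probability distribution with $P_i \equiv R_i$. Because consistency up to normalization is well-defined on $\equiv$-classes, the $P_i$ are pairwise consistent as $\mathbb{R}^{\geq 0}$-relations, hence (passing a witness $T$ to $T^*$ and applying Lemma~\ref{lem:loosevsstrict}) pairwise consistent as probability distributions in the classical sense. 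If they were globally consistent as probability distributions, a probability-distribution witness $Q$ would satisfy $Q[X_i] = P_i \equiv R_i$, so $Q$ would witness the global consistency of $R_1,\dots,R_m$ up to normalization, contradicting the construction. Thus $H$ fails the \ltgc~for probability distributions, completing the proof.

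I do not expect a genuine obstacle: once Theorem~\ref{thm:BFMY-ext}, Corollary~\ref{cor:vorobev}, and the Tseitin-style construction from the proof of Theorem~\ref{thm:main} are available, the only point requiring care is the systematic translation between $\mathbb{R}^{\geq 0}$-relations considered up to $\equiv$ and honest probability distributions, which is exactly where non-emptiness (so that the normalization $R^*$ exists) and Lemmas~\ref{lem:loosevsstrict}--\ref{lem:closure} are used; everything else is bookkeeping.
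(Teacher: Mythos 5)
Your proposal is correct and follows essentially the same route as the paper: the paper's own proof also derives the theorem from Theorem~\ref{thm:BFMY-ext} (equivalently, Corollary~\ref{cor:vorobev} plus Theorem~\ref{thm:main}) specialized to $K=\mathbb{R}^{\geq 0}$, using Lemmas~\ref{lem:loosevsstrict} and~\ref{lem:closure} to identify the classical consistency notions for probability distributions with the up-to-normalization ones. Your write-up merely makes the normalization bookkeeping explicit and, in the converse direction, invokes the Tseitin construction directly where the paper passes through the abstract equivalence of the two local-to-global consistency properties, which is an inessential difference.
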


\begin{proof} By~Lemmas~\ref{lem:loosevsstrict} and~\ref{lem:closure},
  conditions~(b) in this Theorem and~(g) in Theorem~\ref{thm:BFMY-ext}
  for~$K = \mathbb{R}^{\geq 0}$ are equivalent. The result now follows
  from Theorem~\ref{thm:BFMY-ext} for $K = \mathbb{R}^{\geq 0}$.
\end{proof}

Some words on the differences between our proof of
Theorem~\ref{thm:vorothm} and Vorob'ev's proof of Theorem~4.2 in
\cite{vorob1962consistent} are in order. In the direction~(a)
implies~(b), except for the minor differences that stem from the use
of Lemma~\ref{lem:vor-grah}, our proof is basically the same as
Vorob'ev's. The main construct in that proof is the operation on
probability distributions that we denoted~$\Join_{\mathrm{P}}$ in
equation~\eqref{eqn:prop1}, which appears with different notation as
equation~(21) in page 156 of~\cite{vorob1962consistent}.

In the direction~(b) implies~(a), again except for the minor
differences that stem from the use of Lemma~\ref{lem:join-vor}, our
proof has some important similarities with Vorob'ev's, but also one
important difference. The similarities lie in the structure of the
argument. Vorob'ev first proves that the class of hypergraphs that
have the \ltgc~for probability distributions is the \emph{unique}
class of hypergraphs that contains those, satisfies certain closure
properties, and excludes two concrete families of hypergraphs, that he
calls~$\{G_n\}$ and~$\{Z_n\}$ in Theorem~2.2 in page~152
of~\cite{vorob1962consistent}. This characterization we also prove
through the combination of Lemmas~\ref{lem:preserv1}
and~\ref{lem:preserv2}, which stand for the closure properties,
Lemmas~\ref{lem:characconf} and~\ref{lem:characchord}, whose featuring
hypergraphs are precisely the hypergraphs~$G_n$ and~$Z_n$
from~\cite{vorob1962consistent}, and the construction~$C(H;K)$ through
Case~1 for~$H=G_n$, and Case~2 for~$H=Z_n$, in the proof of
Theorem~\ref{thm:main}.  Another similarity lies in the way we
handle~$Z_n$: both proofs build a cycle of~$K$-relations that
implement equality constraints except for one~$K$-relation in the
cycle that implements an inequality constraint. The important
difference lies in the way we handle~$G_n$. Vorob'ev's proof is a
linear-algebraic argument over Euclidean space that requires some
non-trivial calculations, while our argument is more combinatorial and
arguably simpler as it relies on basic modular arithmetic and the
totally obvious fact that~$0 \not\equiv 1$ mod~$d$, for~$d \geq 2$.

\section{Concluding Remarks}

We conclude  by discussing some open problems and directions for future research.

\begin{itemize}

\item Beeri et al.\ \cite{BeeriFaginMaierYannakakis1983} showed that
  hypergraph acyclicity is also equivalent to certain semantic
  conditions other than the \ltgc~for ordinary relations. The
  existence of a \emph{full reducer} is arguably the most well known
  and useful such semantic property (see also
  \cite{DBLP:books/cs/Maier83,DBLP:books/cs/Ullman88}). By definition,
  a \emph{full reducer} of a hypergraph~$H$ with~$X_1,\ldots,X_m$ as
  its hyperedges is a program consisting of a finite sequence of
  semijoin statements of the form~$R_i := R_i\ltimes R_j$ such that if
  this program is given a collection~$R_1(X_1),\ldots,R_m(X_m)$ of
  ordinary relations as input, then the output is a collection of
  globally consistent ordinary relations.

  It remains an open problem to define a suitable semijoin operation
  for~$K$-relations and prove (or disprove) that for every positive
  semiring~$K$, a hypergraph~$H$ is acyclic if and only if~$H$ has a
  full reducer for~$K$-relations. One of the technical difficulties is
  that the join operation on two~$K$-relations introduced and studied
  here is not, in general, associative (in fact, as seen earlier, it
  is not associative even when~$K$ is the bag semiring of non-negative
  integers).

\item The notion of consistency studied here is based on the notion of
  equivalence~$\equiv$ of two~$K$-relations. One could define the
  notion of \emph{strict consistency}, where two~$K$-relations~$R(X)$
  and~$S(Y)$ are \emph{strictly consistent} if there is a $K$-relation $T(XY)$ such that
  $T[X] = R$ and $T[Y]=S$;  from there, one can define the notion
  of \emph{local-to-global strict consistency property}
  for~$K$-relations. The question is: do the main results presented here hold for
  strict consistency? In particular, is hypergraph acyclicity
  equivalent to the local-to-global strict consistency property
  for~$K$-relations?

  The crucial obstacle in following the approach we adopted here
   is that if~$K$ is not a semifield, then it is not clear how
  to define a join operation on two~$K$-relations that witnesses their
  strict consistency. Actually,  there are  bags~$R$ and~$S$ that
  are strictly consistent, but every bag~$T$ that witnesses their
  strict consistency (as defined above) has the property that its support~$T'$  is \emph{strictly
    contained} in the ordinary join~$R' \Join S'$ of the supports~$R'$
  and~$S'$. This is in sharp contrast with our join operation, as shown in~Lemma~\ref{lem:supportsjoin}. The simplest such example is
  the pair of bags~$R(AB) = \{ {12}:{1},\; {22}:{1} \}$
  and~$S(BC) = \{ {21}:{1},\; {22}:{1} \}$, whose strict consistency
  is witnessed by the bags~$T_1(ABC) = \{ {122}:{1},\; {221}:{1} \}$
  and~$T_2(ABC) = \{ {121}:{1},\; {222}:{1} \}$, but, as one can easily verify,
  no other bag. 

\item Positive semirings have been used in different areas of
  mathematics and computer science (see, e.g.,
  \cite{golan2013semirings,gondran2008graphs}). In particular, the
  min-plus semiring has been used in the analysis of dynamic
  programming algorithms, while the Viterbi semiring has been used in
  the study of statistical models. It would be interesting to
  investigate potential applications of the results reported here to
  semirings other than the Boolean semiring and the semiring of the
  non-negative real numbers with the standard arithmetic operations.
    \end{itemize}

\paragraph{Acknowledgments} Albert Atserias  was partially funded by
European Research Council (ERC) under the European Union's Horizon
2020 research and innovation programme, grant agreement ERC-2014-CoG
648276 (AUTAR), and by MICCIN grant PID2019-109137GB-C22 (PROOFS).
Phokion Kolaitis was partially supported by NSF Grant  IIS-1814152.

\newpage
\bibliographystyle{alpha}
\bibliography{biblio}

\newcommand{\etalchar}[1]{$^{#1}$}
\begin{thebibliography}{BFMY83}

\bibitem[AB11]{DBLP:journals/corr/abs-1102-0264}
Samson Abramsky and Adam Brandenburger.
\newblock A unified sheaf-theoretic account of non-locality and contextuality.
\newblock {\em CoRR}, abs/1102.0264, 2011.

\bibitem[ABD09]{DBLP:journals/tcs/AtseriasBD09}
Albert Atserias, Andrei~A. Bulatov, and Anuj Dawar.
\newblock Affine systems of equations and counting infinitary logic.
\newblock {\em Theor. Comput. Sci.}, 410(18):1666--1683, 2009.

\bibitem[ABK{\etalchar{+}}15]{DBLP:conf/csl/AbramskyBKLM15}
Samson Abramsky, Rui~Soares Barbosa, Kohei Kishida, Raymond Lal, and Shane
  Mansfield.
\newblock Contextuality, cohomology and paradox.
\newblock In Stephan Kreutzer, editor, {\em 24th {EACSL} Annual Conference on
  Computer Science Logic, {CSL} 2015, September 7-10, 2015, Berlin, Germany},
  volume~41 of {\em LIPIcs}, pages 211--228. Schloss Dagstuhl - Leibniz-Zentrum
  f{\"{u}}r Informatik, 2015.

\bibitem[Abr13]{DBLP:conf/birthday/Abramsky13}
Samson Abramsky.
\newblock Relational databases and {B}ell's theorem.
\newblock In Val Tannen, Limsoon Wong, Leonid Libkin, Wenfei Fan, Wang{-}Chiew
  Tan, and Michael~P. Fourman, editors, {\em In Search of Elegance in the
  Theory and Practice of Computation - Essays Dedicated to Peter Buneman},
  volume 8000 of {\em Lecture Notes in Computer Science}, pages 13--35.
  Springer, 2013.

\bibitem[Abr14]{DBLP:journals/eatcs/Abramsky14}
Samson Abramsky.
\newblock Contextual semantics: From quantum mechanics to logic, databases,
  constraints, and complexity.
\newblock {\em Bull. {EATCS}}, 113, 2014.

\bibitem[AHV95]{DBLP:books/aw/AbiteboulHV95}
Serge Abiteboul, Richard Hull, and Victor Vianu.
\newblock {\em Foundations of Databases}.
\newblock Addison-Wesley, 1995.

\bibitem[AMB11]{DBLP:journals/corr/abs-1111-3620}
Samson Abramsky, Shane Mansfield, and Rui~Soares Barbosa.
\newblock The cohomology of non-locality and contextuality.
\newblock In Bart Jacobs, Peter Selinger, and Bas Spitters, editors, {\em
  Proceedings 8th International Workshop on Quantum Physics and Logic, {QPL}
  2011, Nijmegen, Netherlands, October 27-29, 2011}, volume~95 of {\em
  {EPTCS}}, pages 1--14, 2011.

\bibitem[Bar15]{Barbosa-thesis}
Rui~Soares Barbosa.
\newblock {\em Contextuality in Quantum Mechanics and Beyond}.
\newblock PhD thesis, University of Oxford, Oxford, UK, 2015.

\bibitem[Bel64]{bell1964einstein}
John~S Bell.
\newblock On the {E}instein-{P}odolsky-{R}osen paradox.
\newblock {\em Physics Physique Fizika}, 1(3):195, 1964.

\bibitem[BFMY83]{BeeriFaginMaierYannakakis1983}
Catriel Beeri, Ronald Fagin, David Maier, and Mihalis Yannakakis.
\newblock On the desirability of acyclic database schemes.
\newblock {\em J. ACM}, 30(3):479--513, July 1983.

\bibitem[BGIP01]{DBLP:journals/jcss/BussGIP01}
Samuel~R. Buss, Dima Grigoriev, Russell Impagliazzo, and Toniann Pitassi.
\newblock Linear gaps between degrees for the polynomial calculus modulo
  distinct primes.
\newblock {\em J. Comput. Syst. Sci.}, 62(2):267--289, 2001.

\bibitem[Bra16]{DBLP:journals/csur/Brault-Baron16}
Johann Brault{-}Baron.
\newblock Hypergraph acyclicity revisited.
\newblock {\em {ACM} Comput. Surv.}, 49(3):54:1--54:26, 2016.

\bibitem[CS88]{ChvatalSzemeredi1988}
Vasek Chv{\'{a}}tal and Endre Szemer{\'{e}}di.
\newblock Many hard examples for resolution.
\newblock {\em J. {ACM}}, 35(4):759--768, 1988.

\bibitem[CT06]{CoverThomas}
Thomas~M. Cover and Joy~A. Thomas.
\newblock {\em Elements of information theory {(2.} ed.)}.
\newblock Wiley, 2006.

\bibitem[Dec03]{DBLP:books/daglib/0016622}
Rina Dechter.
\newblock {\em Constraint processing}.
\newblock Elsevier Morgan Kaufmann, 2003.

\bibitem[Fra07]{francis2007impossible}
George~K Francis.
\newblock The impossible tribar.
\newblock In {\em A Topological Picturebook}, pages 65--76. Springer, 2007.

\bibitem[GKT07]{DBLP:conf/pods/GreenKT07}
Todd~J. Green, Gregory Karvounarakis, and Val Tannen.
\newblock Provenance semirings.
\newblock In Leonid Libkin, editor, {\em Proceedings of the Twenty-Sixth {ACM}
  {SIGACT-SIGMOD-SIGART} Symposium on Principles of Database Systems, June
  11-13, 2007, Beijing, China}, pages 31--40. {ACM}, 2007.

\bibitem[GM08]{gondran2008graphs}
Michel Gondran and Michel Minoux.
\newblock {\em Graphs, dioids and semirings: new models and algorithms},
  volume~41.
\newblock Springer Science \& Business Media, 2008.

\bibitem[Gol13]{golan2013semirings}
Jonathan~S Golan.
\newblock {\em Semirings and their Applications}.
\newblock Springer Science \& Business Media, 2013.

\bibitem[Gra79]{Graham79}
MH. Graham.
\newblock {On the Universal Relation}.
\newblock Technical report, University of Toronto, Ontario, Canada, September
  1979.

\bibitem[GT17]{DBLP:journals/corr/abs-1712-01980}
Erich Gr{\"{a}}del and Val Tannen.
\newblock Semiring provenance for first-order model checking.
\newblock {\em CoRR}, abs/1712.01980, 2017.

\bibitem[Har92]{hardy1992quantum}
Lucien Hardy.
\newblock Quantum mechanics, local realistic theories, and lorentz-invariant
  realistic theories.
\newblock {\em Physical Review Letters}, 68(20):2981, 1992.

\bibitem[Hil91]{hill1991relational}
Joe~R Hill.
\newblock Relational databases: A tutorial for statisticians.
\newblock In {\em Proceedings of Symposium on the Interface between Computer
  Science and Statistics}, pages 86--93, 1991.

\bibitem[HLY80]{DBLP:journals/ipl/HoneymanLY80}
Peter Honeyman, Richard~E. Ladner, and Mihalis Yannakakis.
\newblock Testing the universal instance assumption.
\newblock {\em Inf. Process. Lett.}, 10(1):14--19, 1980.

\bibitem[Mai83]{DBLP:books/cs/Maier83}
David Maier.
\newblock {\em The Theory of Relational Databases}.
\newblock Computer Science Press, 1983.

\bibitem[Mal88]{DBLP:journals/dm/Malvestuto88}
Francesco~M. Malvestuto.
\newblock Existence of extensions and product extensions for discrete
  probability distributions.
\newblock {\em Discret. Math.}, 69(1):61--77, 1988.

\bibitem[PP58]{penrose1958impossible}
Lionel~S Penrose and Roger Penrose.
\newblock Impossible objects: a special type of visual illusion.
\newblock {\em British Journal of Psychology}, 1958.

\bibitem[PR94]{popescu1994quantum}
Sandu Popescu and Daniel Rohrlich.
\newblock Quantum nonlocality as an axiom.
\newblock {\em Foundations of Physics}, 24(3):379--385, 1994.

\bibitem[Tse68]{Tseitin1968}
G.~S. Tseitin.
\newblock On the complexity of derivation in propositional calculus.
\newblock {\em Structures in Constructive Mathematics and Mathematical Logic},
  pages 115--125, 1968.

\bibitem[Ull88]{DBLP:books/cs/Ullman88}
Jeffrey~D. Ullman.
\newblock {\em Principles of Database and Knowledge-Base Systems, Volume {I}},
  volume~14 of {\em Principles of computer science series}.
\newblock Computer Science Press, 1988.

\bibitem[UW02]{DBLP:books/daglib/0011318}
Jeffrey~D. Ullman and Jennifer Widom.
\newblock {\em A first course in database systems {(2.} ed.)}.
\newblock Prentice Hall, 2002.

\bibitem[Vor62]{vorob1962consistent}
Nikolai~Nikolaevich Vorob’ev.
\newblock Consistent families of measures and their extensions.
\newblock {\em Theory of Probability \& Its Applications}, 7(2):147--163, 1962.

\bibitem[Yan96]{DBLP:journals/sigact/Yannakakis96}
Mihalis Yannakakis.
\newblock Perspectives on database theory.
\newblock {\em {SIGACT} News}, 27(3):25--49, 1996.

\bibitem[YO79]{yu1979algorithm}
Clement~Tak Yu and Meral~Z Ozsoyoglu.
\newblock An algorithm for tree-query membership of a distributed query.
\newblock In {\em COMPSAC 79. Proceedings. Computer Software and The IEEE
  Computer Society's Third International Applications Conference, 1979.}, pages
  306--312. IEEE, 1979.

\end{thebibliography}

\end{document}